\newif\ifabstract
\newif\iffull
\newcommand{\cM}{\mathcal{M}}
\newcommand{\abm}{(\alpha,\beta)\textsc{-BalancedMatching}}
\newcommand{\abb}{(\alpha,\beta)\textsc{-Balanced}}
\newcommand{\OPT}{\mathrm{OPT}}
\newcommand{\LPOPT}{\mathrm{LPOPT}}
\newcommand{\pfm}{\textsc{ProportionallyFairMatching}}
\newcommand{\paf}{\textsc{ProbablyFair}}
\newcommand{\sd}[1]{{\color{violet}{\bf [~Sharmila:\ } \color{violet}{\em #1}\color{violet}{\bf~]}}}
\newcommand{\Ber}{\mathrm{Ber}}
\newcommand{\scr}{\mathcal}
\newcommand{\eps}{\varepsilon}
\newcommand{\mb}{\mathbb}
\theoremstyle{plain}
\newtheorem{theorem}{Theorem}[section]
\newtheorem{lemma}[theorem]{Lemma}
\newtheorem{corollary}[theorem]{Corollary}
\newtheorem{remark}[theorem]{Remark}
\theoremstyle{definition}
\newtheorem{claim}{Claim}[section]
\theoremstyle{remark}
\newtheorem{observation}{Observation}[section]
\newtheorem{example}{Example}[section]
\newcommand{\bE}{\mathbb{E}}
\newcommand{\cH}{\mathcal{H}}
\newcommand{\tbeta}{\Tilde{\beta}}
\newcommand{\nut}{N_{\bar{u}}(v_t)}
\newcommand{\tF}{\Tilde{F}}
\newcommand{\auv}{\psi_c(u,v)}
\newcommand{\auvt}{\psi_c(u,v_t)}
\newcommand{\auvk}{\psi_c(u,v_k)}
\newcommand{\awvt}{\psi_c(w,v_t)}
\newcommand{\awvk}{\psi_c(w,v_k)}
\newcommand{\pf}[1]{ProbablyFair}
\DeclareMathOperator{\var}{Var}
\newcommand{\myparskip}{3pt}
\title{Proportionally Fair Matching via Randomized Rounding\footnote{A preliminary version of this work appeared in Proceedings of AAAI 2025.}}
\author{
Sharmila Duppala\textsuperscript{1} \and 
Nathaniel Grammel\textsuperscript{1} \and 
Juan Luque\textsuperscript{1} \and 
Calum MacRury\textsuperscript{2} \and 
Aravind Srinivasan\textsuperscript{1}
}
\date{}
\begin{document}
\maketitle
\vspace{-1cm}
\begin{center}
\textsuperscript{1}University of Maryland, College Park, College Park, MD, USA \\
\textsuperscript{2}Columbia University, New York, NY, USA
\end{center}
\begin{abstract}
      Given an edge-colored graph, the goal of the \emph{proportional fair matching} problem is to find a \emph{maximum weight matching} 
  while ensuring proportional representation (with respect to the number of edges) of each color. The colors may correspond to demographic groups or other protected traits where we seek to ensure
  roughly equal representation from each group.
  It is known that, assuming ETH, it is impossible to approximate the problem with $\ell$ colors in time $2^{o(\ell)} n^{\mathcal{O}(1)}$ (i.e., subexponential in $\ell$) even on \emph{unweighted path graphs}. Further, even determining the existence of a non-empty matching satisfying proportionality is NP-Hard. 
  To overcome this hardness, we relax the stringent
  proportional fairness constraints to a probabilistic notion. We introduce a notion we call $\delta$-$\paf$, where we ensure proportionality up to a factor of at most $(1 \pm \delta)$ for some small $\delta >0$ with high probability. The violation $\delta$ can be brought arbitrarily close to $0$ for some \emph{good} instances with large values of matching size.  
  We propose and analyze simple and fast algorithms for bipartite graphs that achieve 
  constant-factor approximation guarantees, and return a $\delta$-$\paf$ matching. 
\end{abstract}

\section{Introduction}


Graph matching, in particular the special case of bipartite matching, is a classical computational problem with many applications such as ad allocation~\citep{mehta2007Adwords,mehta2013OnlineMatching}, crowdsourcing~\citep{ho2021Online,tong2016Online,hikima2021Matching,dickerson2019balancing}, job hiring~\citep{purohit2019Hiring}, organ exchange~\citep{dickerson2013failure,mcelfresh2019kidney,farnadi2021kidney,farhadi2022generalized}, and ride sharing~\citep{hikima2021Matching,nanda2020rideshare,dickerson2018ridesharing}.
Matching is also a fundamental subroutine in several domains including computer vision~\citep{belongie2002shape}, text similarity estimation~\citep{pang2016text} in natural language processing, machine learning algorithms~\citep{huang2007loopy,jebara2009graph,huang2011fast,choromanski2013adaptive} and computational biology~\citep{zaslavskiy2009global} among others.

In many applications, algorithms are employed to make decisions that could significantly impact the lives of individuals. In such settings, we ought to ensure fairness and equity in the decision-making process. However, classical algorithms typically do not consider such socially motivated objectives and constraints such as fairness or diversity. For instance, a ride-sharing platform may provide better quality assignments (e.g., with shorter wait times, newer vehicles, or better pricing) to riders from certain demographic groups while providing lower quality assignments to others as studied by \cite{esmaeili2023rawlsian}. 
Similarly, cognitive biases from workers can negatively impact the result of crowdsourced data, which may potentially be mitigated by assigning a diverse set of workers to a wide range of different \emph{task types}.

There are many ways to formalize and incorporate some notion of \emph{fairness} into the computed solutions of matching problems (see, e.g., \citet{esmaeili2023rawlsian,bandyapadhyay2023proportionally}). Following the work of~\citet{bandyapadhyay2023proportionally}, we consider a notion of \emph{proportional fairness}. This notion is closely related to the notion of \emph{disparate impact}~\citep{feldman2015disparate} and has been explored in various fundamental problems, including matroid optimization~\citep{chierichetti2019matroids}, clustering~\citep{bera2019fair}, online matching~\citep{sankar2021matchings}, set packing~\citep{duppala2023packing}, spectral clustering~\citep{kleindessner2019guarantees}, and others. In the present work, we consider a formulation of proportional fairness known as $\abm$: given an edge-colored graph (where edge colors may denote membership in specific groups), the objective is to find a maximum weight matching ensuring that the proportion of edges of any color among all matched edges lies between $\alpha$ and $\beta$ where $0\!\leq\! \alpha\! \leq\! \beta\! \leq \! 1$. 
We highlight that $\abm$ adheres to two key criteria: \emph{restricted dominance}, which limits the fraction of edges selected from any given group to at most $\beta$, and \emph{minority protection}, which ensures that the fraction of edges from any given group is at least $\alpha$.

This notion of fairness can be applied to the applications discussed above, such as in ride-sharing, job hiring, and crowd sourcing. By assigning colors to edges (rather than only to vertices), the \emph{color} is able to capture information about both sides of the match. For instance, in ridesharing, the edge color may encode information about both the driver (e.g., vehicle type, rating) and rider (e.g., race or gender), the distance between the driver and rider, and even pricing information (e.g., whether dynamic or surge pricing is applied). As a concrete example, we may choose to encode information about a rider's race or economic status as well as whether an assignment incurs increased fares due to dynamic pricing; then, an $\abm$ should ensure that no racial or economic group receives increased fares disproportionately often compared to other riders. In crowdsourcing, we may use edge color to encode both the task type and demographic information about the worker to ensure that a diverse set of workers are assigned to each type of task (for cognitive or human intelligence tasks, this may increase the diversity of perspectives among responses, and thus the overall quality of the final aggregate data).

As a final example application, consider online advertising: an edge's color may encode information about the type of advertisement (or, in political advertising, the political party sponsoring the ad) and the user's personal information. Here, an $\abm$ would limit the
extent of \emph{targeted} advertising based on protected demographic traits, which may violate (or
appear to violate) a user's expectations and rights with regards to their
privacy and use of their data; the social consequences of such indiscriminate highly targeted
advertising were seen in the real world with the Cambridge Analytica data
scandal around the 2016 US presidential election, which saw Facebook's CEO Mark
Zuckerberg testify before congress~\citep{guardianCambridgeAnalytica,nytCambridgeMZ}.

Unfortunately, although this notion of fairness seems quite powerful, it may in fact be too strong in the basic formulation: \citet{bandyapadhyay2023proportionally} demonstrated that $\abm$ is NP-hard to approximate, even when $G$ is an \emph{unweighted path graph}. 

To address this fundamental hardness, we consider a slightly relaxed probabilistic notion of proportional fairness. Roughly speaking, we allow for small violations of the $\alpha$ and $\beta$ bounds while still guaranteeing a constant factor approximation on the overall objective, i.e., the weight of the matching. Particularly, we ask the following question: \emph{does there exists an approximation algorithm with constant factor violations in fairness while guaranteeing a good approximation on the objective?}

We answer this question by designing a $\frac{1}{2}$-approximate algorithm for bipartite graphs
and whose matching is \emph{probably\footnote{i.e., with probability close to $1$} almost fair}. 
That is, there exists a small constant $ \delta >0$ for which the matching is $(\alpha(1-\delta),\beta (1+\delta))$-balanced with high probability. A formal definition of this notion is
given in~\cref{sec:prelim-prob}. 
Thus, while the $\abm$ formulation of~\citet{bandyapadhyay2023proportionally} remains hard on bipartite graphs, we can achieve our notion of \emph{probably almost fair}. {We note that for the special case when $\alpha =0$ and $\beta < 1$, we can attain a constant approximation ratio without violating the fairness constraint at all. Combined with the {hardness result} of \citet{bandyapadhyay2023proportionally}, this implies that the one-sided fairness case is {strictly easier} than the two-sided case.}

\section{Related Work}
A significant body of research in fair matching explored various notions of fairness. Among these, several notions of \emph{group fairness}, 
such as socially-fair matching~\citep{bandyapadhyay2023proportionally}, focus on Rawlsian (maxmin) fairness, aiming to maximize the utility for the worst-off group~\citep{esmaeili2023rawlsian}. 
On the other hand, proportional fairness, as explored in \citet{bandyapadhyay2023proportionally,chierichetti2019matroids}, ensures a proportional representation of edges from each group, 
while leximin fairness~\citep{garcia2020fair} is also considered. Most of these studies involve assigning group memberships to vertices in the graph. However, works such as \cite{bandyapadhyay2023proportionally,chierichetti2019matroids} investigate fairness in edge-colored graphs, which aligns with our area of study. {We note that  \cite{bandyapadhyay2023proportionally,chierichetti2019matroids} focus only on the unweighted setting where the objective is the cardinality/size of the output. That being said, our works are incomparable. While ours applies to edge weighted bipartite graphs, \cite{chierichetti2019matroids} consider when the constraint system is described by the intersection of two-matroids (this generalizes unweighted bipartite matching). They focus on the case of two colors,
and present a polynomial-time algorithm achieving a $3/2$-approximation. \citet{bandyapadhyay2023proportionally}
consider when input is a general graph and the number of colors may be greater than two.} They show that approximating the
problem for an arbitrary number of colors is NP-hard and moreover that
approximating the optimal solution requires time exponential in $\ell$ unless the Exponential Time Hypothesis (ETH)~\citep{impagliazzoETH} is false.
Their main algorithm therefore runs in time exponential in $\ell$, with an approximation guarantee of $1/(4\ell)$. Note that even for the easier case when $\alpha =0$, their algorithm may still violate the fairness constraint imposed by $\beta$ by a factor up to $1 + 1/\ell$.
Their hardness result motivates our study of the slightly relaxed probabilistic fairness. This relaxation allows us to improve on the prior work by achieving, in polynomial time, a \emph{constant factor} (i.e., one half) approximation in the matching weight while ensuring, with high probability, only a small violation in the fairness constraints. 

\section{Preliminaries and Problem Formulation}
\label{sec:prelim-prob}
We denote $[k] = \{1, \ldots, k\}$ for any positive integer $k$. 
Consider an undirected bipartite graph \( G = (U,V, E) \) with the set of edges \( E \) forming a partition \( \dot\cup_{c\in [\ell]}E_c \), where \( \dot\cup \) denotes a disjoint union over the sets \( E_c \). Each color \( c \in [\ell] \) is associated with the set of edges \( E_{c} \), representing edges of color \( c \). For any color class \( c \), we define \(\psi_c(e)\) such that \(\psi_c(e) = 1\) if \( e \in E_c \), and \(\psi_c(e) = 0\) otherwise.
A \textit{matching} \( M \subseteq E \) in \( G \) is defined as a subset of edges where no two edges in \( M \) share a common vertex. For a vertex \( v \) of $G$, let \( N(v) \) and $\delta(v)$ denote the neighbors and incident edges of \( v \), i.e., \( N(v) := \{ u : (u,v)\in E \} \) and \( \delta(v) := \{ (u,v) :(u,v)\in E \}\). 
For any $\alpha, \beta \in [0,1]$ and $\alpha\leq \beta$, we define a matching $M \subseteq E$ as $\abb$ if, for each color $c \in [\ell]$, the proportion of edges in $M$ belonging to color $c$ lies between $\alpha$ and $\beta$. In other words, $M$ is $\abb$ if it contains at least $\alpha$ and at most $\beta$ fraction of edges from every color.
The goal of the \emph{proportional fair matching problem} (PFM) is to find a $\abm$ of $G$ with maximum weight, and we denote the weight of such a matching
by $\OPT$. Since \cite{bandyapadhyay2023proportionally} proved that it is NP-Hard to verify the existence of a 
$\abm$ in the $\text{PFM}$ problem even on unweighted path graphs, we focus on deriving approximation ratios
which hold against $\OPT$.



\subsection{Our Contributions and Techniques}
We design \Cref{alg:ocrs}, an efficient randomized algorithm for weighted matching on bipartite
graphs with fairness constraints
specified by $0 < \alpha \le \beta <1$. By allowing
the $\alpha$ (respectively, $\beta$) fairness constraint to be violated up to a multiplicative factor of $1 - \delta$
(respectively, $1 + \delta$), we show that \Cref{alg:ocrs} 
attains an asymptotic approximation ratio against $\OPT$. (We write the exact asymptotic guarantee of \Cref{alg:ocrs} in \Cref{thm:thm_two_sided_fairness}). We argue that if the input is well-behaved, then
we can take $\delta \approx 0$, and so we barely violate the fairness constraints, while still attaining
a constant approximation ratio. We next analyze \Cref{alg:ocrs} when $\alpha\! =\!0$
and $\beta < 1$. By allowing for a $\frac{1-\eps}{2}$-approximation ratio, where $0 < \eps < 1$
is a parameter of \Cref{alg:ocrs} which can be chosen arbitrarily small, we show how to ensure that the $\beta$ fairness constraints are satisfied exactly. This allows us to get a true approximation ratio which is arbitrarily close to $1/2$. 
We additionally note that our algorithms can be extended to the slightly more general setting where we have different proportionality requirements for each color; e.g., where we have values $\alpha_c, \beta_c$ (with $0 \le \alpha \le \beta \le 1$) for each color $c\in[\ell]$, and we insist that $\alpha_c |\cM| \le |\cM_c| \le \beta_c |\cM|$ for each $c$. This allows us to capture settings where we wish to represent different groups in different proportions, for example so that each group's representation in the matching is roughly proportional to its representation in the original graph.



We take a linear programming (LP) based approach to designing \Cref{alg:ocrs}. Below we state \ref{lp:fairmat}, which extends the standard matching polytope to include the proportionality constraints described by $0 <  \alpha \le \beta <1$:
\begin{subequations} 
\begin{alignat}{3} 
      \max \sum_{ e \in E}& w_{e} x_{e} \tag{\textsc{Lp-Fair}} \label{lp:fairmat} \\[3ex]
    \text{s.t.\ } \sum_{e \in \delta(v)} &x_{e} \leq 1,  \,  \forall v \in  U \cup V \label{eq:left} \\
 \alpha  \sum_{ e \in E} x_{e}    \leq  \sum_{ e \in E_c} &x_{e} \leq \beta  \sum_{ e \in E} x_{e},  & & \, \forall c \in [\ell] \label{eq:coloring_constraint} \\
    &x_{e} \geq 0,   \,  \forall e \in E \label{eq:nonnegativity}
\end{alignat}
\end{subequations}
\begin{lemma} \label{lem:relaxation}
\ref{lp:fairmat} relaxes $\OPT$. That is, $\OPT \le \sum_{e \in E} w_e x_e$,
where $\bm{x} = (x_e)_{e \in E}$ is an optimal solution to \ref{lp:fairmat}.
\end{lemma}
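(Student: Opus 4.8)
The plan is the standard ``an integral optimum is an LP-feasible point'' argument. Let $\MOPT$ denote an $\abm$ of $G$ of maximum weight, so that $\OPT = \sum_{e \in \MOPT} w_e$. I would introduce the candidate point $\bm{x}^{\star} = (x^{\star}_e)_{e \in E}$ defined by $x^{\star}_e = 1$ if $e \in \MOPT$ and $x^{\star}_e = 0$ otherwise, and then verify that $\bm{x}^{\star}$ is feasible for \ref{lp:fairmat}. Nonnegativity \eqref{eq:nonnegativity} is immediate. For \eqref{eq:left}, fix any vertex $v \in U \cup V$; since $\MOPT$ is a matching, at most one edge of $\delta(v)$ belongs to $\MOPT$, so $\sum_{e \in \delta(v)} x^{\star}_e \le 1$.

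For the color constraints \eqref{eq:coloring_constraint}, note that $\sum_{e \in E} x^{\star}_e = |\MOPT|$ and $\sum_{e \in E_c} x^{\star}_e = |\MOPT \cap E_c|$ for every $c \in [\ell]$. By definition, $\MOPT$ being $\abb$ says exactly that $|\MOPT \cap E_c| / |\MOPT| \in [\alpha,\beta]$ for each color $c$ (when $\MOPT \neq \emptyset$), which is equivalent, after clearing the denominator, to $\alpha |\MOPT| \le |\MOPT \cap E_c| \le \beta |\MOPT|$ — precisely \eqref{eq:coloring_constraint} evaluated at $\bm{x}^{\star}$. In the degenerate case $\MOPT = \emptyset$ we have $\bm{x}^{\star} = \bm{0}$, which trivially satisfies \eqref{eq:coloring_constraint} since $\alpha \cdot 0 \le 0 \le \beta \cdot 0$, and $\OPT = 0$, so the claimed inequality holds in any case. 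Thus $\bm{x}^{\star}$ is a feasible point of \ref{lp:fairmat} with objective value $\sum_{e \in E} w_e x^{\star}_e = \sum_{e \in \MOPT} w_e = \OPT$. Since $\bm{x}$ is an \emph{optimal} solution to the maximization \ref{lp:fairmat}, it follows that $\sum_{e \in E} w_e x_e \ge \sum_{e \in E} w_e x^{\star}_e = \OPT$, which is the assertion of the lemma.

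There is no substantive obstacle here; the only two places that merit a line of care are (i) translating between the ``ratio in $[\alpha,\beta]$'' phrasing of $\abb$ and the cleared-denominator form actually appearing in \eqref{eq:coloring_constraint}, and (ii) the empty-matching corner case, both of which are dispatched above. I would also remark that the argument uses nothing about the signs of the weights $w_e$ or about bipartiteness: every $\abm$ of $G$ yields a feasible integral point of \ref{lp:fairmat} of equal objective value, so the LP value dominates $\OPT$.
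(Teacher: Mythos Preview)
Your proof is correct and is precisely the standard relaxation argument. The paper in fact states \Cref{lem:relaxation} without proof, treating it as immediate; your write-up supplies exactly the details one would expect, including the $\MOPT = \emptyset$ corner case.
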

As a first attempt to using \ref{lp:fairmat}, observe that
since $G$ is bipartite, if $\bm{x}=(x_e)_{e \in E}$ is an optimal solution to the LP,
then we can write $\bm{x}$ as a convex combination of integral matchings.
By randomly sampling such a matching according to the coefficients of
the convex combination, this yields a randomized algorithm with expected value
$\sum_{e \in E} w_e x_e \ge \OPT$. Unfortunately, while this preserves the fairness
constraints in expectation due to \eqref{eq:coloring_constraint}, there is no guarantee that any \textit{particular} matching we output will satisfy these constraints. In order for an algorithm
to succeed with constant probability, an ideal approach would be to match
each edge $e$ with probability $x_e$ while ensuring that the size of a matching of a color class $c$ is concentrated about $\sum_{e \in E_c} x_e$. For instance, if we had \textit{negative correlation} amongst the matching statuses of the edges of $E_c$, then this would be sufficient (see \citep{Dubhashi_Ranjan_1996}). The GKPS dependent rounding scheme of \citep{gandhi2006dependent} provides such a guarantee for \textit{certain} edge subsets of $G$. However, since the edges of $E_c$ are adversarially chosen, it is easy to construct an example where the GKPS rounding scheme induces positive correlation amongst the matched statuses.







Due to the limitations of these approaches,
we need to round our LP in a different way. Let us suppose that we round $\bm{x}$ into a random matching $\scr{M}$,
where we denote $\scr{M}_c := \scr{M} \cap E_c$ for a fixed color class $c \in [\ell]$. Roughly speaking, our goal is to ensure
that the following properties simultaneously hold, where $\delta > 0$ is a fixed parameter to be specified in \Cref{thm:one_sided_fairness}:
\begin{enumerate}
    \item For each $e \in E$, $\Pr[e \in \scr{M}] = x_{e}/2 $. \label{eqn:rounding_guarantee}
    \item With constant probability, 
    \[
        \frac{|\scr{M}_c|}{|\scr{M}|} \in \left[(1- \delta) \frac{\bE[|\scr{M}_c|]}{\mb{E}[|\scr{M}|]}, (1 + \delta)  \frac{\mb{E}[|\scr{M}_c|]}{\mb{E}[|\scr{M}|]}\right].
    \]  \label{eqn:concentration}
\end{enumerate}
Property \ref{eqn:rounding_guarantee} and 
\Cref{lem:relaxation} then imply that $$\mb{E}[\sum_{e \in \scr{M}} w_e] =  \sum_{e \in E} w_e x_e/2  \ge  \OPT/2.$$ Moreover, since $\mb{E}[|\scr{M}_c|] = \sum_{e \in E_c} x_e/2 $
and $\mb{E}[|\scr{M}|] = \sum_{e \in E} x_e/2$, we can combine Property \ref{eqn:concentration}. with \eqref{eq:coloring_constraint}
to conclude that $(1 - \delta) \alpha |\scr{M}| \le |\scr{M}_c| \le (1 +\delta) \beta |\scr{M}|$.

Our approach uses a randomized rounding tool known as
a \textit{contention resolution scheme (CRS)}. CRS's 
were originally introduced by \cite{chekuri2014submodular} to solve certain constrained sub-modular optimization problems. Motivated by the applications
to prophet inequalities, they were later adapted 
to the online setting by \cite{feldman2021online}, where they are referred to as \textit{online} contention resolution schemes (OCRS's). Since then,
they have found many other applications,
and have become a fundamental tool in stochastic optimization.
We continue this line of research and show that this tool can be useful
even for our problem which is offline and has no inherit stochasticity. 

Our algorithm first has every $v \in V$ independently draw a random vertex $F_v \in N(v)$, where 
\begin{equation} \label{eqn:proposal_draw}
\Pr[F_v = u] = x_{u,v}
\end{equation}
for each $u \in N(v)$. (For convenience, we define $F_{v}$ to be a \textit{null element} $\bot$ if no draw is made, where $\mb{P}[F_v = \bot] =1 - \sum_{u \in N(v)} x_{u,v}$). If $F_v = u$, we say that $v$ \textit{proposes}
to $u$ and refer to $F_v$ as a \textit{proposal} for $u$. At this point, we know that each vertex $v \in V$ has made at most one proposal. However, a fixed vertex $u \in U$ may have received multiple proposals,
and so we must resolve which proposal each vertex $u$ should take.
This is precisely the purpose of the OCRS, and we use the explicit scheme of \cite{ezra2022prophet}.
For a fixed $u \in U$, the input required of the OCRS is the edge values $(x_{u,v})_{v \in N(u)}$, together
with the proposals $(F_v)_{v \in N(u)}$. Moreover, in the contention resolution framework, $(F_v)_{v \in N(u)}$ must be independent. 
Under these conditions, the OCRS outputs at most one edge $(u,v)$ with $F_v =u$, 
while guaranteeing that for \textit{all} $v \in N(u)$
\begin{equation} \label{eqn:ocrs_guarantee}
\Pr[\text{$(u,v)$ is output} \mid F_v = u] = 1/2. 
\end{equation}
Due to guarantee of \eqref{eqn:ocrs_guarantee}, the OCRS is said to be $1/2$-\textit{selectable} in the literature.
By concurrently
running a separate execution of this OCRS for each $u \in U$, the matching we output satisfies Property \ref{eqn:rounding_guarantee}. 


We still need the explain why our algorithm satisfies Property \ref{eqn:concentration}. It turns out
our algorithm has a very simple description. First, we order the vertices of $V$ arbitrarily,
say $v_1, \ldots ,v_n$. Then, when $v_t$
is processed and proposes to $u$ (i.e., $F_{v_t} = u$), we draw an \textit{independent} random bit $A_{u,v_t}$
We match $(u,v_t)$ provided $u$ was not previously matched and $A_{u,v_t} =1$. We note that in regards to verifying Property \ref{eqn:concentration}, the specific Bernouli parameter of $A_{u,v_t}$ is not important. We simply require that $(A_e)_{e \in E}$ are drawn independently. 

If we let $\scr{M}$ be the output of the algorithm, and $\scr{M}_c = E_c \cap \scr{M}$,
then we can analyze the Doob martingale of $|\scr{M}_c|$. Here the Doob martingale is defined by revealing the partial matching after the vertices $v_1, \ldots ,v_t$ are processed. Due to the simplicity of
our algorithm, we can bound the one-step changes in the Doob martingale in a very precise way
that is related to $\sum_{e \in E_c} x_e$. Note that the standard Azuma–Hoeffding martingale concentration inequality does not allow for good bounds due to constant worst-case one-step changes (as we explain in \cref{sec:azuma_fails} and an example in \cref{ex:azuma_fails}) leading to $\Theta(n)$ over the entire process from $t\!=\!1$ to $t=n$. We show that by controlling the \textit{variance} of this martingale,
we can get stronger bounds via Freedman's inequality. 
More precisely, we can tighten the concentration on the random variable $\cM_c$ by using one-step variance which can be much smaller than the worse-case one-step changes since the events involving the bad cases occur with small probability. The total variance can be precisely computed and is upper bounded by $10\sum_{e\in E_c}x_e$ which is much smaller than $\Theta(n)$. The concentration helps us achieve \emph{almost} fairness, i.e., $|\cM_c|$ violates the fairness constraints on each side by at most a $\delta$-multiplicative factor for some small $\delta>0.$ 
Finally, when we only have \(\beta\)-sided fairness constraints i.e., \(\alpha=0, \beta <1\), we apply an LP \emph{perturbation trick} to recover a matching that satisfies the constraints \emph{exactly}, meaning that without any violations with only a $0<\epsilon<1$ multiplicative loss in the objective. However, it is important to note that our algorithm still has a probabilistic guarantee: with a certain probability, the matching returned will \textit{not} satisfy the fairness constraints. At the expense of additional runtime, we can improve the success probability to be arbitrarily close to $1$ by repeatedly running our algorithm
and taking the best solution which is feasible.

\section{Our Algorithm} \label{sec:alg-descr}
We begin by formally describing our algorithm to find a matching for a bipartite graph
$G=(U,V,E)$ with color classes $\cup_{c \in [\ell]} E_c$ and $\alpha \le \beta \le 1$. 
Note that our algorithm takes in a parameter $0 < \eps < 1$ which is only needed for the special
case when $\alpha =0$. 

For $\delta>0$, a random matching $M \subseteq E$ is said to be $\delta$-$\paf$ with respect to $0\!\leq\! \alpha \!\leq\! \beta\! \leq\! 1$ if for any color class $c$,    \footnote{Taking $\delta=1$ satisfies the \cref{eq:paf} trivially}
\begin{equation} \label{eq:paf} 
   \Pr\left[  (1-\delta) { \alpha }   \leq  \frac{|M_c|}{ |M| } \leq  (1+\delta)  \beta \right]  \geq 1- f_c(\delta,G).
\end{equation}
where $f_c(\delta,G)$ is a term that depends on the tunable parameter $\delta$ and the input instance. This definition allows us to violate the constraints by a small $\delta$-multiplicative factor which can be adjusted to attain a reasonable probability of success.









\begin{algorithm}[H]
\caption{OCRS Rounding Algorithm}
\label{alg:ocrs}
\begin{algorithmic}[1] 
\Require $G=(U,V,E)$ with color classes $(E_c)_{c \in [\ell]}$, $0 \le \alpha \le \beta \le 1$,
and $0< \eps < 1$.
\Ensure subset of edges forming a matching $\scr{M}$.

\State $\scr{M} \leftarrow \emptyset$

\State If $\alpha > 0$, compute an optimal solution $\bm{x} = (x_e)_{e \in E}$ to \ref{lp:fairmat}
with $0 < \alpha \le \beta \le 1$. Otherwise, compute an optimal solution $\bm{x}$ to \ref{lp:fairmat}
with $\beta$ in \eqref{eq:coloring_constraint} replaced by $\tilde{\beta} = (1 - \eps) \beta$.

\State Order the vertices of $V$ as $v_1, \ldots ,v_n$ arbitrarily.
\State Draw $(F_{v_t})_{t=1}^n$ as described in \eqref{eqn:proposal_draw}.

\For{$t=1,\ldots ,n$ with $F_{v_t} \neq \bot$}
\State Set $u:= F_{v_t}$, and $a_{u,v_t}:= \frac{1/2}{1 - (1/2) \sum_{i < t}x_{u,v_i}}$.
\State Draw $A_{u,v_t} \sim \Ber(a_{u,v_t})$ independently.
\State If $A_{u,v_t} =1$ and $u$ is currently unmatched,
add $(u,v_t)$ to $\scr{M}$
\EndFor
\State \Return $\scr{M}$.
\end{algorithmic}
\end{algorithm}

Our main results are as follows:
\begin{theorem} 
\label{thm:thm_two_sided_fairness}
\cref{alg:ocrs} returns a matching \(\cM\) that has an expected weight of at least \(\frac{1}{2}\OPT\) and is \(\delta\)-\(\paf\) with \( f_c(\delta, G) = 4\exp\left( \frac{-\delta^2\sum_{e\in E_c}x_e}{28}\right)\). 
\end{theorem}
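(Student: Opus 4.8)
The plan is to verify the two claimed properties of \cref{alg:ocrs} separately: the weight guarantee, which follows almost immediately from the selectability of the OCRS, and the $\delta$-$\paf$ guarantee, which requires a martingale concentration argument.

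For the \emph{expected weight}, I would first argue that for every edge $e = (u,v)$ the algorithm matches $e$ with probability exactly $x_e / 2$. The key point is that the sequence of Bernoulli parameters $a_{u,v_t} = \tfrac{1/2}{1 - (1/2)\sum_{i<t} x_{u,v_i}}$ is chosen so that, conditioned on $v_t$ proposing to $u$, the event ``$u$ is still unmatched and $A_{u,v_t} = 1$'' has probability exactly $1/2$; this is precisely the telescoping identity underlying the \cite{ezra2022prophet} single-choice OCRS, giving \eqref{eqn:ocrs_guarantee}. Since $\Pr[F_{v_t} = u] = x_{u,v}$ independently across $v$, we get $\Pr[(u,v)\in\scr{M}] = x_{u,v}/2$, i.e.\ Property~\ref{eqn:rounding_guarantee}. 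Linearity of expectation then gives $\bE[\sum_{e\in\scr{M}} w_e] = \sum_e w_e x_e/2 \ge \OPT/2$ by \cref{lem:relaxation}. (When $\alpha=0$ we run the LP with $\tbeta$; this only shrinks the feasible region if at all, but the weight bound is still stated against the $\OPT$ of the original LP — actually here one should be slightly careful and note that the $\tfrac12\OPT$ claim in \cref{thm:thm_two_sided_fairness} is for the $\alpha>0$ case / unperturbed LP; I would state the perturbed-LP bound as $\tfrac{1-\eps}{2}\OPT$ in the relevant lemma.)

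For the \emph{concentration}, fix a color $c$ and consider the Doob martingale $Z_t := \bE[\,|\scr{M}_c| \mid \mathcal{F}_t\,]$, where $\mathcal{F}_t$ reveals the proposals $F_{v_1},\dots,F_{v_t}$ and the bits $A_{\cdot,v_1},\dots,A_{\cdot,v_t}$ (equivalently, the partial matching after processing $v_1,\dots,v_t$). The one-step difference $D_t = Z_t - Z_{t-1}$ is nonzero only when $v_t$ proposes along a $c$-colored edge \emph{or} when processing $v_t$ changes the conditional probability that some later $c$-colored edge gets matched. The first source contributes $O(1)$ but only with probability $\sum_{u}\psi_c(u,v_t)x_{u,v_t}$; the second source is controlled because matching $(u,v_t)$ only removes $u$ from future availability, and the total future $c$-weight hanging off $u$ is $\sum_{i>t}\psi_c(u,v_i)x_{u,v_i}/2$. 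I would bound $|D_t| \le 1$ deterministically (a crude bound suffices for the ``$\times$small-probability'' terms) and, more importantly, bound the conditional variance $\bE[D_t^2\mid\mathcal{F}_{t-1}]$ by a quantity summing to $O(\sum_{e\in E_c}x_e)$ — the paper claims the total is at most $10\sum_{e\in E_c}x_e$. Then Freedman's inequality applied to the martingale $(Z_t)$ with $|D_t|\le 1$ and $\sum_t \bE[D_t^2\mid\mathcal F_{t-1}] \le V := 10\sum_{e\in E_c}x_e$ gives $\Pr[\,|Z_n - Z_0| \ge \lambda\,] \le 2\exp(-\lambda^2/(2V + 2\lambda/3))$; taking $\lambda = \delta\,\bE[|\scr M_c|] = \delta\sum_{e\in E_c}x_e/2$ and absorbing constants yields a bound of the form $2\exp(-\delta^2\sum_{e\in E_c}x_e/O(1))$. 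Doing the same for $|\scr M|$ (total variance $\le 10\sum_{e\in E}x_e$, but we only need the lower tail, and we can rescale by the same $\sum_{e\in E_c}x_e$ scale or note $\sum_E x_e \ge \sum_{E_c} x_e$) and union-bounding the two events produces the stated $4\exp(-\delta^2\sum_{e\in E_c}x_e/28)$; the numerology ($28$, the factor $4$) is exactly the bookkeeping of the Freedman constants plus the two-sided-union-plus-ratio step. Finally, combining the high-probability bounds $|\scr M_c|\approx\bE|\scr M_c|$ and $|\scr M|\approx\bE|\scr M|$ with the LP color constraints \eqref{eq:coloring_constraint} — which say $\alpha \le \bE|\scr M_c|/\bE|\scr M| \le \beta$ — gives $(1-\delta)\alpha \le |\scr M_c|/|\scr M| \le (1+\delta)\beta$ on the good event, which is \eqref{eq:paf}.

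The main obstacle is the variance computation: one must write $D_t$ explicitly as the change in the conditional expectation, identify that the only ``large'' contributions come from $v_t$ itself proposing (or from $u=F_{v_t}$ being newly matched and thereby killing future $c$-edges at $u$), and show these events carry probability proportional to $x$-values of $c$-colored edges, so that squaring and summing telescopes into $O(\sum_{e\in E_c} x_e)$ rather than $\Theta(n)$. Getting the constant down to $10$ — and then propagating it through Freedman to $28$ — is where the care lies; the structural reason it works is exactly the one flagged in the introduction, namely that the bad (large one-step change) events are rare, so worst-case Azuma is wasteful while variance-aware Freedman is tight.
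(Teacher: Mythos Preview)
Your proposal follows essentially the same strategy as the paper: the weight bound via OCRS $1/2$-selectability and \cref{lem:relaxation}, and the fairness bound via the Doob martingale on $|\scr{M}_c|$, Freedman's inequality, and a union bound over the concentration of $|\scr{M}_c|$ and $|\scr{M}|$ combined with \eqref{eq:coloring_constraint}.

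Two corrections to the quantitative details. First, the worst-case one-step change is $2$, not $1$: the paper gives an explicit instance (\cref{eq:worst_case_delta_mt}) where $|\Delta M_t| = 2-\eps$, and it is $\Lambda = 2$ that feeds into Freedman to produce the $28$. Second, your variance sketch is missing a case: when $\tF_{v_t}=u$ is revealed, it is not only $u$'s future $c$-edges that are affected, but \emph{every} other neighbor $w\in N(v_t)\setminus\{u\}$ has both its current edge $(w,v_t)$ killed and its future safety probability perturbed. These secondary effects carry extra $x_{w,v_t}$ factors and so still sum nicely, but the paper needs the case analysis of \cref{lemma:safety_k_u,lemma:one_step_one_vertex_diff,lemma:guts,lemma:bt} to track them all and arrive at $\sum_t \var(\Delta M_t\mid\cH_{t-1}) \le 12 M_0$ (the $10\sum_{e\in E_c}x_e$ in the introduction is a minor discrepancy). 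The paper's route to this variance bound is also slightly different from what you outline: rather than bounding $\bE[D_t^2\mid\cH_{t-1}]$ directly, it first uses $|\Delta M_t|\le 2$ to get $\var \le 2\,\bE[|\Delta M_t|\mid\cH_{t-1}]$ (\cref{obs:var_ub}), and then bounds $\sum_t \bE[|\Delta M_t|\mid\cH_{t-1}]$.
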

\begin{remark}
The violations dependent on \(\delta\) in the fairness constraints are inversely exponential to the failure probability \(f_c(\delta, G)\). This implies that achieving arbitrarily small violations \(\delta \approx 0\) with a high success probability requires stronger assumptions on the term \(\sum_{e\in E_c}x_e\) i.e., for input instances with $\sum_{e\in E_c}x_e= \omega(1)$ we can ensure high success probability with \(\delta \approx 0\).
\end{remark}
We next state our improved guarantee for the special case when $\alpha = 0$. Recall
that in this setting, we can ensure that we do \textit{not} violate the  $\beta$-fairness constraint.

\begin{theorem} \label{thm:one_sided_fairness}
Given any $0 < \eps < 1$,
\Cref{alg:ocrs}  returns a matching that satisfies the $\beta$-fairness constraints with probability at least $1-f(\epsilon,G)$ where $f(\epsilon,G)= 2\exp \left(-\epsilon^2 \beta \frac{\sum_{e\in E}x_e}{28}\right)$, and has an expected weight of at least $ \frac{1}{2}(1-\epsilon)\OPT.$  
\end{theorem}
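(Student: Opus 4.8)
The plan is to combine the $1/2$-selectability of the OCRS (to control the weight) with a martingale concentration bound on $|\scr{M}_c|$ (which, thanks to the perturbed LP, upgrades ``probably almost fair'' to \emph{exactly} $\beta$-fair). Let $\bm{x}$ be the optimal solution to \ref{lp:fairmat} with $\beta$ replaced by $\tbeta=(1-\eps)\beta$ that \Cref{alg:ocrs} computes, so that $\sum_{e\in E_c}x_e\le\tbeta\sum_{e\in E}x_e$ for every $c$. Exactly as in \Cref{thm:thm_two_sided_fairness}, concurrently running a $1/2$-selectable OCRS at each $u\in U$ on the independent proposals $(F_v)$ gives $\Pr[e\in\scr{M}]=x_e/2$ (this is \eqref{eqn:ocrs_guarantee}), hence $\mb{E}[\sum_{e\in\scr{M}}w_e]=\tfrac12\sum_{e\in E}w_ex_e$. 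It then remains to show that the perturbed LP has value at least $(1-\eps)\OPT$: I would take an optimal solution of the unperturbed \ref{lp:fairmat} (which relaxes $\OPT$ by \Cref{lem:relaxation}) and build a feasible point of the perturbed LP by down-weighting the edges of every ``over-represented'' color — one whose fractional share exceeds $\tbeta$ — until its share drops to $\tbeta$, then argue that this reshaping discards at most an $\eps$-fraction of the objective. This LP comparison is the step I expect to be the main obstacle.

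For exact fairness, fix a color $c$ and set
\[
  Z_c \;:=\; \beta\,|\scr{M}|-|\scr{M}_c| \;=\; \sum_{e\in E}\bigl(\beta-\psi_c(e)\bigr)\mathbf{1}[e\in\scr{M}],
\]
so that the $\beta$-constraint for $c$ holds exactly iff $Z_c\ge 0$. Using $\Pr[e\in\scr{M}]=x_e/2$ and $\sum_{e\in E_c}x_e\le\tbeta\sum_{e\in E}x_e$,
\[
  \mb{E}[Z_c] \;=\; \tfrac12\Bigl(\beta\textstyle\sum_{e\in E}x_e-\sum_{e\in E_c}x_e\Bigr)\;\ge\;\tfrac12(\beta-\tbeta)\textstyle\sum_{e\in E}x_e\;=\;\tfrac{\eps\beta}{2}\textstyle\sum_{e\in E}x_e,
\]
so color $c$ can violate fairness only when $Z_c$ falls below its mean by at least $\tfrac{\eps\beta}{2}\sum_{e\in E}x_e$.

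To bound this lower tail I would reuse the Doob martingale from the proof of \Cref{thm:thm_two_sided_fairness}: reveal the partial matching after $v_1,\dots,v_t$ are processed and set $Z_c^{(t)}=\mb{E}[Z_c\mid\mathcal{F}_t]$. Since at most one edge is added per step and $|\beta-\psi_c(e)|\le 1$, the one-step increments are $O(1)$; and the total conditional variance is $O\!\bigl(\beta\sum_{e\in E}x_e\bigr)$, combining the bound from \Cref{thm:thm_two_sided_fairness} on the martingale of $|\scr{M}_c|$ (total variance $\le 10\sum_{e\in E_c}x_e\le 10\tbeta\sum_{e\in E}x_e$) with the contribution of the $\beta|\scr{M}|$ term and the cross terms. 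Plugging the deviation $\lambda=\tfrac{\eps\beta}{2}\sum_{e\in E}x_e$, the variance proxy $\sigma^2=O(\beta\sum_{e\in E}x_e)$ and the increment bound $R=O(1)$ into Freedman's inequality, and tracking the constants, yields $\Pr[Z_c<0]\le 2\exp\!\bigl(-\eps^2\beta\sum_{e\in E}x_e/28\bigr)$; a union bound over the colors then gives the claimed success probability $1-f(\eps,G)$.

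In summary, the weight bound is immediate from $1/2$-selectability once one knows the perturbed LP still has value $\ge(1-\eps)\OPT$, and exact $\beta$-fairness follows because perturbing $\beta$ down to $(1-\eps)\beta$ creates an $\Theta\!\bigl(\eps\beta\sum_{e\in E}x_e\bigr)$ ``slack'' in the mean of $\beta|\scr{M}|-|\scr{M}_c|$, which the Freedman-based concentration already developed for \Cref{thm:thm_two_sided_fairness} certifies is rarely exhausted. The only genuinely new ingredient beyond \Cref{thm:thm_two_sided_fairness} is the LP perturbation argument, and getting its $\eps$-loss bookkeeping right is the crux.
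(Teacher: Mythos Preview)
Your proposal follows the paper's two-part strategy: (i) the expected-weight bound via $1/2$-selectability plus an LP comparison between the $\tbeta$-LP and $\OPT$, and (ii) exact $\beta$-fairness from the $\eps\beta$ slack created by the perturbation together with concentration. The meaningful technical difference is in how (ii) is packaged. You run Freedman once on the single Doob martingale of $Z_c=\beta|\cM|-|\cM_c|$; the paper instead controls the two pieces separately---it upper-tails $|\cM_c|$ via a one-sided (supermartingale) Freedman applied to $M_t-\tfrac{\tbeta}{2}\sum_e x_e$, lower-tails $|\cM|$ via a Chernoff-type bound (citing Panconesi--Srinivasan) that exploits the independence of the proposals $(F_v)_{v\in V}$ rather than the Doob-martingale machinery, and finally chooses $\delta_1,\delta_2$ with $(1-\eps)(1+\delta_1)/(1-\delta_2)\le 1$ before combining. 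Your single-martingale route is valid, but its conditional variance collects a $\beta^2\sum_e x_e$ contribution from $\beta|\cM|$, a $\tbeta\sum_e x_e$ contribution from $|\cM_c|$, and cross terms, so hitting the specific constant $28$ (as opposed to some larger absolute constant) is not automatic; the paper's separation buys it a sharper bound on $|\cM|$ that bypasses the Doob-martingale variance computation entirely. Two minor points: your final union bound over colors would introduce a factor of $\ell$ absent from the stated $f(\eps,G)$---the paper's \Cref{lemma:beta_guarantee} is also stated per-color and never union-bounds, so this is a looseness in the theorem statement rather than in your argument; and on the LP comparison, the paper's proof of \Cref{lemma:objective_perturbed} is essentially a one-line assertion, no more detailed than your sketch, so you are right to flag that step as the one whose bookkeeping is not spelled out.
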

In Theorem~\ref{thm:one_sided_fairness}, the values \((x_e)_{e \in E}\) represent the optimal fractional solution to \ref{lp:fairmat} with \(\tbeta = \beta(1 - \epsilon)\). Additionally, \(f(\epsilon, G)\) denotes the probability that the algorithm fails to produce a matching that satisfies the \(\beta\)-fairness constraints \emph{exactly} i.e., with no violations. 

\begin{remark}
The probability that our algorithm fails decreases exponentially as a function of \(\epsilon\) and \(\beta \sum_{e\in E}x_e\). Therefore, to attain a failure probability $f(\epsilon,G)\approx 0$ and the approximation factor close to $1/2$, we require that \(\beta \sum_{e\in E}x_e \) to be sufficiently large, say \(\beta \sum_{e\in E}x_e \geq C \) for some constant $C$, say $C\geq 100$.
\end{remark}

\begin{remark}
For the instances where \( \beta \sum_{e \in E} x_e \leq C \), we address these cases separately using a brute force algorithm as described in \cref{sec:brute_force_alg}. 
This involves enumerating all feasible matchings that satisfy the fairness with respect to $\beta$ and having size at most $\frac{U^2}{L^2}\frac{C}{\beta(1-\epsilon)}$ where $U = \max_{e \in E}w_e$ and $L= \min_{e\in E}w_e.$ 
\end{remark}
For all non-degenerate instances, \(\frac{U}{L}\) is bounded by a constant. Hence, our algorithm runs in polynomial time if the term \(\frac{U^2}{L^2} \frac{C}{(1 - \epsilon) \beta}\) remains constant, given that \(\frac{U}{L}\)is bounded. However, in the worst-case, where \(\beta = \frac{1}{\ell}\), the running time will be exponential in \(\ell\). 



\section{Algorithm Analysis} \label{sec:alg-analysis}
Suppose that we are given an instance of proportional fair matching. We let $\bm{x} = (x_e)_{e\in E}$ denote an optimal solution to $\ref{lp:fairmat}$ used by \Cref{alg:ocrs}. Note that if $\alpha =0$,
then $\bm{x}$ is computed by replacing $\beta$ in \eqref{eq:coloring_constraint} with $\tilde{\beta} = (1 - \eps) \beta$.


Recall that the vertices of $V$ are processed in order $v_1, \ldots ,v_n$. Let us say that $u \in U$ is \textit{safe}
at time $t$, provided $u$ is \textit{not} matched to any of $v_1, \ldots ,v_{t-1}$.
Observe that $(u,v_t)$ is matched if and only if $u$ is safe at time $t$, $F_{v_t}= u$ and $A_{u,v_t}\!=\!1$. Let $Z_{u,v_t}$ be an indicator random variable for this event. 
For convenience, we define \(\tilde{F}_{v_t} := u\) provided \(F_{v_t}=u\) and \(A_{u, v_t}=1\).  Otherwise, \(\tilde{F}_{v_t} = \bot\). Observe then that
\begin{align} \label{eq:zuvt}
    \Pr[Z_{u,v_t}=1] &= \Pr[\tF_{v_t}=u, u\text{ is safe at }t ]
\end{align}
Let $Z(v_t)$ be an indicator random variable that an edge of $E_c$ is matched to vertex $v_t$ where $Z(v_t) = \sum \limits_{ \mathclap{{u \in N(v_t)}} }  \auvt Z_{u,v_t}$.  Recall that $\cM$ is the matching returned by our algorithm, therefore,
\begin{align*}
  |\cM_c| = \sum_{t\geq 1} \sum_{(u,v_t) \in \delta(v_t)}  \auvt Z_{u,v_t}
\end{align*}
Now, the instantiations of \((\tilde{F}_{v_i})_{i=1}^t\) are sufficient to determine which edges (if any) are matched to $v_1, \ldots ,v_t$. Thus, we shall work with $\scr{H}_t$, the sigma-algebra generated by \((\tilde{F}_{v_i})_{i=1}^t\), which intuitively corresponds to the history of the algorithm's execution after $v_1, \ldots ,v_t$ are processed. We define $\scr{H}_0$ to be the trivial sigma-algebra, where we do not condition on anything.

Fix a color class \( c \), and let \( M_t := \mathbb{E}[ |\mathcal{M}_c| \mid \mathcal{H}_t ] \) denote the expected number of edges from color class \( c \) in our matching, conditional on $\scr{H}_t$. 
Observe that \( M_0 = \mathbb{E}[|\cM_c|] \), and \( M_n\) is the actual number of edges chosen from the color class \( c \) at the end of our algorithm. Since we expose strictly more information in each round, i.e., 
formally, $\cH_{t-1}\subseteq \cH_{t}$ for each $t \ge 1$, the sequence $(M_t)_{t=0}^{n}$ is a martingale with respect to $(\scr{H}_t)_{t =0}^{n}$. (This is often referred to as a Doob martingale).

We aim to understand the size of the matching produced by our algorithm as each vertex \( v_t \) is processed.
More specifically, when $v_t$ is processed, our algorithm adds the edge $(u,v_t)$ to the current
matching if $u$ is safe and $\tF_{v_t}=u$. Observe that $M_t$ is a function of the current matching between $U$ and $\{v_1,v_2,\dots,v_{t}\}$. Moreover, for any $1 \leq t \leq n$, 
\begin{equation} \label{eq:def_mt}
    M_t = \bE \Big[ \sum_{k\geq 1 } Z(v_k) \mid \cH_{t} \Big].
\end{equation}
Our goal is to control the one step changes of our martingale, which will imply
that $M_n = |\cM_c|$ is concentrated about $M_0 = \mb{E}[|\scr{M}_c|]$. 



\subsection{Warmup: Azuma-Hoeffding inequality} 
\label{sec:azuma_fails}
To apply concentration bounds like the Azuma–Hoeffding inequality, the martingale must be well-behaved i.e., at any step $1 \le t \le n$, the martingale cannot change dramatically. Thus, we need to determine the quantities $c_t$ such that if \(\Delta M_t := M_t - M_{t-1}\) is the one-step change at step $t$, then $|\Delta M_t| \le c_t$. 

\begin{lemma}  \label{lem:one_step_worst_case}
$|\Delta M_t| = |M_t - M_{t-1}| \le {2}$ for all $1 \leq t \leq n$.
\end{lemma}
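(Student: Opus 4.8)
The goal is to bound $|\Delta M_t| = |M_t - M_{t-1}|$ by $2$. The martingale $M_t = \bE[|\cM_c| \mid \cH_t]$ decomposes via \eqref{eq:def_mt} as $\bE[\sum_{k\ge 1} Z(v_k) \mid \cH_t]$, so $\Delta M_t = \bE[\sum_{k\ge 1} Z(v_k) \mid \cH_t] - \bE[\sum_{k\ge 1} Z(v_k) \mid \cH_{t-1}]$. The key structural observation is that $Z(v_k)$ for $k < t$ is already determined by $\cH_{t-1}$ (hence by $\cH_t$), so those terms cancel in the difference. Thus $\Delta M_t = \bE[\sum_{k \ge t} Z(v_k) \mid \cH_t] - \bE[\sum_{k\ge t} Z(v_k) \mid \cH_{t-1}]$.

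Next I would isolate the $k=t$ term. The quantity $\sum_{k \ge t} Z(v_k)$ splits into $Z(v_t) + \sum_{k > t} Z(v_k)$. For the $k = t$ term: $Z(v_t)$ is $\cH_t$-measurable and is an indicator (at most $1$), so $\bE[Z(v_t)\mid\cH_t] - \bE[Z(v_t)\mid\cH_{t-1}]$ lies in $[-1,1]$. The remaining work is to show that the contribution of the future terms $\sum_{k>t} Z(v_k)$ to $\Delta M_t$ is also bounded by $1$ in absolute value. Here the crucial point is a coupling/monotonicity argument: conditioning on $\cH_t$ versus $\cH_{t-1}$ differs only in whether we have learned $\tilde F_{v_t}$, i.e., whether $v_t$ got matched to some $u = F_{v_t}$ (and hence removed $u$ from availability) or not. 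The expected number of future color-$c$ edges $\bE[\sum_{k>t} Z(v_k) \mid \cdot]$ depends on $\cH_t$ only through the current matching after step $t$; comparing the scenario where $v_t$ matches $u$ against the scenario where it does not, the only difference downstream is that in the former, $u$ is unavailable. Matching $u$ at step $t$ can only \emph{decrease} the expected number of future matched edges of color $c$ (removing an available vertex never helps), and I would argue the decrease is at most $1$ — intuitively, making $u$ unavailable costs at most one future edge in expectation, since at most one future edge could have been matched to $u$ anyway. Making this precise likely uses the independence of the $(F_v)$ and $(A_e)$ and a direct comparison of the two conditional expectations, possibly by writing the future process explicitly and summing a telescoping/dominated difference.

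Combining: $|\Delta M_t| \le |\text{(}k=t\text{ term)}| + |\text{(future terms)}| \le 1 + 1 = 2$, which is the claimed bound.

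The main obstacle I anticipate is rigorously establishing that removing vertex $u$ from availability at step $t$ changes the expected count of future color-$c$ matched edges by at most $1$ in magnitude. The sign (it can only decrease the count) should follow from a clean monotone coupling, but quantifying the magnitude as exactly $\le 1$ requires care: one must argue that the "knock-on" effects — $u$ being unavailable might cause some later $v_k$ to go unmatched, which in turn frees up capacity elsewhere — do not compound. I would handle this by a careful coupling of two executions that differ only in whether $u$ is blocked at time $t$, maintaining an invariant that the two matchings differ in at most one edge at every subsequent step, so the difference in color-$c$ counts stays in $\{-1,0,1\}$ throughout; taking expectations then gives the bound. An alternative, and perhaps cleaner, route is to bound $|\Delta M_t|$ directly by noting $M_t$ and $M_{t-1}$ are both conditional expectations of the \emph{same} random variable $|\cM_c|$ over a refinement of sigma-algebras, and to exploit that the only new randomness revealed at step $t$ is the single value $\tilde F_{v_t} \in N(v_t) \cup \{\bot\}$, together with a Lipschitz-type bound on how $|\cM_c|$ as a function of that single coordinate can vary — again yielding a change of at most $2$ (one for the edge possibly at $v_t$, one for the downstream displacement).
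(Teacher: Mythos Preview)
Your coupling approach is correct and is a genuinely different route than the paper's. The paper proves the bound by an explicit edge-by-edge case analysis (deferred to two appendix lemmas): it writes the difference as a sum of terms $\bE[Z_{w,v_k}\mid\cH_{t-1},\tF_{v_t}=q]-\bE[Z_{w,v_k}\mid\cH_{t-1}]$, bounds each of four cases ($k=t$ vs.\ $k>t$, $w=u$ vs.\ $w\neq u$) by quantities such as $x_{w,v_t}$ or $x_{w,v_k}x_{w,v_t}$, and then sums using the LP constraints $\sum_w x_{w,v_t}\le 1$ to arrive at $2$. Your argument instead exploits the structural fact that changing the single coordinate $\tF_{v_t}$ affects only edges incident to one $U$-vertex, and that this algorithm has \emph{no knock-on effects}: each $v_k$ proposes to exactly one fixed vertex $F_{v_k}$, so if that proposal fails it simply stays unmatched. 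Hence two coupled executions differing only in $\tF_{v_t}$ differ in at most one edge at step $t$ and at most one edge among all future steps. Your monotonicity observation (blocking $u$ can only decrease the future color-$c$ count) is in fact what makes the ``future contribution $\le 1$'' tight: it forces $h(\bot)=\max_q h(q)$, so all values $h(q)$ lie in an interval of length at most $1$, which is what you need when comparing to the average $M_{t-1}$. The paper's approach is more laborious here, but its per-edge bounds are reused verbatim to control $\sum_t \var(\Delta M_t\mid\cH_{t-1})$ in the Freedman step; your coupling delivers the Lipschitz bound more directly but would not by itself yield the $O(M_0)$ variance bound, so the paper is paying upfront for machinery it needs later.
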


\begin{proof}
By \cref{lemma:one_step_one_vertex_diff} we know that the one-step change $|\Delta M_t|$ is given by 
\begin{align*}
    | M_t - M_{t-1} | &=   \max_{ q \in N(v_t)\cup \{\bot\}} |\bE[M_n \mid \cH_{t-1}, \tF_{v_t}= q] - \bE[M_n \mid \cH_{t-1} ] | \\
    = \begin{split}
        \max \Big\{  \max_{u \in N(v_t)}  |\bE[M_n \mid \cH_{t-1}, \tF_{v_t}= u] - \bE[M_n \mid \cH_{t-1} ]|, \\ |\bE[M_n \mid \cH_{t-1}, \tF_{v_t}= \bot] - \bE[M_n \mid \cH_{t-1} ]| \Big\}
    \end{split}  \\
    &\leq 2
    \end{align*}
The last inequality follows from \cref{lemma:wose_case_mt} and \cref{lemma:wose_case_mt_2}. 
\end{proof}

Let consider the one-step change in \(\Delta M_t\) when \(\tF_{v_t}\) is revealed. Since we can match at most one edge at each time step, we may miss the opportunity to match at most \emph{two} edges of a given color. Therefore, in the worst case, we have that \(\sum_{t} |\Delta M_t| = \Theta(n)\).  For example, consider the star graph, where a central vertex is connected to all other vertices. In each time step, if we reveal $\tF_{v_t}$,
then the change in the maximum matching will be at most $1$, as we can only match or unmatch one edge (we defer the full details of the example to \cref{ex:azuma_fails}).
To remedy this situation, we use a ``second order'' inequality for analyzing a martingale called Freedman's inequality: 
\begin{theorem}[\cite{freedman1975tail}] \label{thm:freedman}
Suppose $(M_t)_{t\in [n]}$ is a martingale such that $|\Delta M_t| \leq \Lambda$ for any $1 \leq t \leq n$, and $ \sum_{t\in [n]} \var[ \Delta M_t
\mid \cH_{t-1}] \leq \nu.$ Then, for all $\lambda > 0$, 
\begin{equation}
    \Pr \big[ |M_n - M_0| \leq  \lambda \big] \leq 2\exp \Big( \frac{-\lambda^2}{2(\nu+ \lambda \Lambda) } \Big)
\end{equation}
\end{theorem}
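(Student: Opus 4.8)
The statement to prove is Freedman's inequality, a Bernstein-type tail bound controlled by the predictable quadratic variation of the martingale. The plan is to run the classical exponential-supermartingale (Chernoff--Bernstein) method: build a one-parameter supermartingale whose terminal value dominates $\exp(\theta(M_n-M_0))$ up to a deterministic constant, take expectations, then apply Markov's inequality and optimize the free parameter $\theta$. The one ingredient I would isolate first is the scalar moment generating function estimate: if $X$ is mean-zero with $|X|\le\Lambda$ almost surely, then for every $\theta\ge 0$,
\[
\bE\!\left[e^{\theta X}\right] \le \exp\!\left(\phi(\theta)\,\bE[X^2]\right), \qquad \phi(\theta) := \frac{e^{\theta\Lambda}-1-\theta\Lambda}{\Lambda^2}.
\]
This follows because $x\mapsto (e^{\theta x}-1-\theta x)/x^2$ is nondecreasing on $\mathbb{R}$, so $e^{\theta X}-1-\theta X\le \phi(\theta)X^2$ pointwise; taking expectations, using $\bE[X]=0$, and then $1+y\le e^y$ gives the bound. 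Applying it conditionally to $X=\Delta M_t$ given $\cH_{t-1}$, and writing $V_t:=\sum_{s=1}^t \var[\Delta M_s\mid \cH_{s-1}]$ (note $V_t$ is $\cH_{t-1}$-measurable), yields $\bE[e^{\theta\Delta M_t}\mid\cH_{t-1}]\le \exp(\phi(\theta)(V_t-V_{t-1}))$.

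Next I would set $L_t := \exp\!\big(\theta(M_t-M_0)-\phi(\theta)V_t\big)$ and verify it is an $(\cH_t)$-supermartingale: since $V_t$ is $\cH_{t-1}$-measurable, $\bE[L_t\mid\cH_{t-1}] = L_{t-1}\exp(-\phi(\theta)(V_t-V_{t-1}))\,\bE[e^{\theta\Delta M_t}\mid\cH_{t-1}]\le L_{t-1}$. Hence $\bE[L_n]\le L_0=1$, and since $\phi(\theta)\ge0$ and $V_n\le\nu$ holds deterministically, $\bE[\exp(\theta(M_n-M_0))]\le e^{\phi(\theta)\nu}$. Markov's inequality then gives $\Pr[M_n-M_0\ge\lambda]\le \exp(\phi(\theta)\nu-\theta\lambda)$ for all $\theta\ge0$. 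I would finish by choosing $\theta=\frac1\Lambda\ln\!\big(1+\tfrac{\lambda\Lambda}{\nu}\big)$, which makes the exponent a Bennett-type expression, and then invoking the elementary inequality $(1+u)\ln(1+u)-u\ge \frac{u^2}{2(1+u)}$ with $u=\lambda\Lambda/\nu$ to collapse it to $-\frac{\lambda^2}{2(\nu+\lambda\Lambda)}$. Repeating the whole argument for the martingale $(-M_t)$ and summing the two one-sided tail bounds produces the stated factor of $2$.

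The MGF estimate and the supermartingale verification are routine; the only points requiring care are (i) the predictability of $V_t$, which is what lets it be pulled out of the conditional expectation, together with the fact that $V_n\le\nu$ holds \emph{surely} (this is precisely what distinguishes Freedman's bound from Azuma--Hoeffding and removes any need to separately concentrate $V_n$), and (ii) the final optimization, where one must pick the right crude bound on the Bennett exponent to land on the clean denominator $2(\nu+\lambda\Lambda)$ rather than the sharper but less tidy $2(\nu+\lambda\Lambda/3)$. I do not anticipate a real obstacle: the result is classical, and for the paper's purposes one could equally well just cite \cite{freedman1975tail} and omit the derivation.
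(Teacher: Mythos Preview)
Your argument is the standard exponential-supermartingale derivation of Freedman's inequality and is correct (modulo the obvious typo in the displayed statement, where $\le\lambda$ should read $\ge\lambda$). The paper does not supply its own proof of this theorem: it is stated purely as a cited black box from \cite{freedman1975tail}, exactly as you anticipated in your final remark, so there is no in-paper argument to compare against.
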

\begin{remark}
The variance term $\var(\Delta M_t \mid \cH_{t-1})$ reduces to $\var[ \Delta M_t \mid \cH_{t-1}]= \mb{E}[|\Delta M_t|^2 \mid \cH_{t-1}]$ since $\var[ \Delta M_t \mid \cH_{t-1}] := \mb{E}[|\Delta M_t|^2 \mid \cH_{t-1} ] - \mb{E}[\Delta M_t \mid \scr{H}_{t-1}]^2$ and $\mb{E}[\Delta M_t \mid \scr{H}_{t-1}] =0$ due to the martingale property.
\end{remark}
Freedman's inequality still depends on worst case one-step change $\Lambda$ i.e., an upper bound on $|\Delta M_t|$. However, its dependence on this parameter is less severe compared to the Azuma–Hoeffding inequality. The key difference is that Freedman's inequality depends on  $\mb{E}[ |\Delta M_t|^2 \mid \scr{H}_{t-1}]$, and so
we get to \textit{average} over the randomness in $\tF_{v_t}$. This is much smaller than than the pessimistic bound of \Cref{lem:one_step_worst_case}, as even conditional on $\scr{H}_{t-1}$, the probability that a worst-case change in  $|\Delta M_t|$ occurs is small.
\subsection{Concentration of $|\cM_c|$ via Freedman's inequality}
Freedman's inequality allows us to use the expected one-step changes instead of the worst case changes as shown in the \Cref{obs:var_ub} below. 
\begin{observation} \label{obs:var_ub}
    $\var(\Delta M_t \mid \cH_{t-1}) \leq 2 \bE[|\Delta M_t| \mid \cH_{t-1}]   $
\end{observation}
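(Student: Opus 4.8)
The plan is to combine the worst-case one-step bound of \Cref{lem:one_step_worst_case} with the variance simplification noted in the remark following \Cref{thm:freedman}. First I would recall that, since $(M_t)_{t=0}^n$ is a martingale with respect to $(\cH_t)_{t=0}^n$, the increment satisfies $\bE[\Delta M_t \mid \cH_{t-1}] = 0$, and therefore $\var(\Delta M_t \mid \cH_{t-1}) = \bE[|\Delta M_t|^2 \mid \cH_{t-1}] - \bE[\Delta M_t \mid \cH_{t-1}]^2 = \bE[|\Delta M_t|^2 \mid \cH_{t-1}]$.

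Next I would invoke the pointwise bound $|\Delta M_t| \le 2$ from \Cref{lem:one_step_worst_case}. Since $0 \le |\Delta M_t| \le 2$ holds surely, we have the pointwise inequality $|\Delta M_t|^2 = |\Delta M_t| \cdot |\Delta M_t| \le 2\,|\Delta M_t|$. Taking conditional expectation with respect to $\cH_{t-1}$ on both sides and using monotonicity of (conditional) expectation gives $\bE[|\Delta M_t|^2 \mid \cH_{t-1}] \le 2\,\bE[|\Delta M_t| \mid \cH_{t-1}]$, which combined with the identity above is exactly the claimed bound.

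There is no genuine obstacle in this step: the observation is essentially a one-line consequence of the uniform bound $|\Delta M_t| \le 2$ together with the fact that for a mean-zero increment the conditional variance equals the conditional second moment. The only point requiring care is that \Cref{lem:one_step_worst_case} is used in its pointwise (almost sure) form rather than merely in expectation, which is indeed how it is stated. The substantive work is deferred: the payoff of this observation is that it reduces controlling the predictable quadratic variation $\sum_t \var(\Delta M_t \mid \cH_{t-1})$ — needed to supply the parameter $\nu$ in Freedman's inequality — to bounding $\sum_t \bE[|\Delta M_t| \mid \cH_{t-1}]$, and establishing that this latter sum is $O(\sum_{e \in E_c} x_e)$ is where the real effort will lie.
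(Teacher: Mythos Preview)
Your proof is correct and follows essentially the same approach as the paper: both use the martingale property to reduce the conditional variance to the conditional second moment, and then apply the pointwise bound $|\Delta M_t|\le 2$ from \Cref{lem:one_step_worst_case} to get $|\Delta M_t|^2 \le 2|\Delta M_t|$. The paper merely spells out the conditional expectation as a sum over the possible values of $\tF_{v_t}$, but the argument is the same.
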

\begin{proof}
The variance of the one-step changes \(\Delta M_t := M_t - M_{t-1}\), given \(\cH_{t-1}\), simplifies to \(\var(\Delta M_t \mid \cH_{t-1}) = \mathbb{E}\left[ |\Delta M_t|^2 \mid \cH_{t-1} \right]\), because \(\mathbb{E}[\Delta M_t \mid \cH_{t-1}] = 0\) due to the martingale property. Recall that \( M_n = |\cM_c| \). By the definition of \( M_t \) and given that \(\cH_t = (\mathcal{F}_{v_i})_{i=q}^{t}\), we have:
 \begin{align*}
     \var(\Delta M_t \mid \cH_{t-1}) &=  \sum_{u \in N(v_t)\cup \{\bot\}} \Pr[ \tF_{v_t}=q] \big|  \bE \big[ M_n \mid \cH_{t-1},\tF_{v_t}= q \big]  -\bE \big[ M_n \mid \cH_{t-1} \big]  \big|^2  \\
     &\leq \sum_{u \in N(v_t)\cup \{\bot\}} 2 \Pr[ \tF_{v_t}=q] \big|  \bE \big[ M_n \mid \cH_{t-1},\tF_{v_t}= q \big]  -\bE \big[ M_n \mid \cH_{t-1} \big]  \big|  \\
     &=2 \bE \big[|\Delta M_t| \mid \cH_{t-1} \big]
 \end{align*} 

The inequality above holds because the worst case change in the expected matching belonging to color class $c$ is at most $2$ as given by \cref{lem:one_step_worst_case}.    
\end{proof}

We will prove two important facts (\cref{lemma:safety_k_u} and \cref{lemma:safety_k_u_2}) about the one-step changes that will be used in the rest of the analysis.

\begin{lemma}\label{lemma:safety_k_u}
For any step \( k > t \), \(w\in N(v_t)\) and  \( w \neq u \), if \( \tF_{v_t} = u \), i.e., \( v_t \) proposes to \( u \) and \( A_{u, v_t} = 1 \), then
\begin{equation*} 
    \Pr[w \text{ is safe at }k \mid \cH_{t-1}, \tF_{v_t}=u ] - \Pr[ w \text{ is safe at }k \mid \cH_{t-1} ] \leq x_{w,v_t}. 
\end{equation*}
\end{lemma}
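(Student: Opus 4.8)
The plan is to reduce the event ``$w$ is safe at time $k$'' to a statement involving only the proposals $\tF_{v_t},\tF_{v_{t+1}},\dots,\tF_{v_{k-1}}$. If $w$ is already matched before $v_t$ is processed then both probabilities in the statement are $0$ and there is nothing to prove, so assume $w$ is safe at time $t$ (an $\cH_{t-1}$-measurable event). I claim the structural identity
\[
\{w \text{ safe at } k\} \cap \{w \text{ safe at } t\} = \{w \text{ safe at } t\} \cap \bigcap_{j=t}^{k-1} \{\tF_{v_j} \neq w\}.
\]
Granting this, the lemma is quick: the proposals $(\tF_{v_j})_{j\ge t}$ are mutually independent and independent of $\cH_{t-1}$, since each $\tF_{v_j}$ is a deterministic function of the fresh draws $F_{v_j}$ and $A_{\cdot,v_j}$. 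Hence, conditioning on $\cH_{t-1}$ together with ``$w$ safe at $t$'', we get $\Pr[w \text{ safe at } k \mid \cH_{t-1}] = \prod_{j=t}^{k-1}\Pr[\tF_{v_j}\neq w]$, while additionally conditioning on $\tF_{v_t}=u$ makes the $j=t$ factor equal to $1$ (as $u\neq w$), so $\Pr[w \text{ safe at } k \mid \cH_{t-1},\tF_{v_t}=u] = \prod_{j=t+1}^{k-1}\Pr[\tF_{v_j}\neq w]$. Subtracting and factoring, the difference equals $\bigl(\prod_{j=t+1}^{k-1}\Pr[\tF_{v_j}\neq w]\bigr)\Pr[\tF_{v_t}=w] \le \Pr[\tF_{v_t}=w] = x_{w,v_t}\,a_{w,v_t} \le x_{w,v_t}$, where the last inequality uses $a_{w,v_t}\le 1$, which holds because $\sum_{i<t}x_{w,v_i}\le\sum_{e\in\delta(w)}x_e\le 1$ by \eqref{eq:left}.

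The main obstacle is establishing the structural identity, i.e.\ that $w$'s safety status is determined purely by whether anyone proposes to $w$ after time $t$, independently of the intervening matching dynamics. The $\supseteq$ direction is immediate: if $\tF_{v_j}\neq w$ for all $t\le j\le k-1$, then the edge $(w,v_j)$ is never added in \Cref{alg:ocrs}, so $w$ remains safe through time $k$. For $\subseteq$ I argue the contrapositive. Suppose some proposal in $[t,k-1]$ is directed at $w$, and let $j^\ast$ be the first such index. Then $\tF_{v_i}\neq w$ for all $t\le i<j^\ast$, so the easy direction applied to the window $[t,j^\ast-1]$ shows $w$ is still safe at time $j^\ast$; since $\tF_{v_{j^\ast}}=w$ and $w$ is safe at $j^\ast$, the algorithm adds $(w,v_{j^\ast})$, so $w$ is matched to some $v_j$ with $j\le k-1$ and is not safe at $k$. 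The crucial feature being used is that in \Cref{alg:ocrs} each $v_j$ makes at most one proposal, so the \emph{only} way $w$ can become matched is via a proposal aimed at $w$ while $w$ is still safe.

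Finally I will record the two routine facts invoked above, neither of which I expect to cause difficulty: first, $\Pr[\tF_{v_t}=w] = \Pr[F_{v_t}=w]\cdot a_{w,v_t} = x_{w,v_t}\,a_{w,v_t}$, using \eqref{eqn:proposal_draw} and the independent $\Ber(a_{w,v_t})$ draw of $A_{w,v_t}$; and second, each $\prod_{j=t+1}^{k-1}\Pr[\tF_{v_j}\neq w]$ is a product of probabilities, hence at most $1$. The independence of $(\tF_{v_j})_{j\ge t}$ from $\cH_{t-1}$ follows directly from the definition $\cH_{t-1}=\sigma\bigl((\tF_{v_i})_{i=1}^{t-1}\bigr)$ and the fact that the randomness defining distinct $\tF_{v_j}$'s is disjoint.
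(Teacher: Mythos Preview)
Your proof is correct and follows essentially the same approach as the paper: both reduce ``$w$ safe at $k$'' to the event $\bigcap_{j}\{\tF_{v_j}\neq w\}$, use independence of the $\tF_{v_j}$'s to factor the probabilities, and bound the resulting difference by $\Pr[\tF_{v_t}=w]\le x_{w,v_t}$. Your version is in fact more carefully written---you explicitly prove the structural identity (which the paper asserts without argument) and justify $a_{w,v_t}\le 1$ via \eqref{eq:left}, whereas the paper's proof is somewhat terse on these points.
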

\begin{proof}
   Any vertex $w \in U$ is safe at step $k$ if $\tF_{v_r} \neq w$ for any $1\leq r \leq k$. The probability that $w \in N(v_t)$ is safe conditional on $\cH_{t-1}$ can be either $0$ or $1$. Therefore, we can conclude that 
   \begin{align*}
       \Pr \big[ w \text{ is safe at }k
 \mid & \cH_{t-1},\tF_{v_t}= u \big]  -   \Pr \big[ w \text{ is safe at }k
 \mid \cH_{t-1} \big]\\
 &\leq \Pr \big[ \cap_{t\leq r\leq k} \tF_{v_r} \neq w  \mid \cH_{t-1}, \tF_{v_t}=u   \big]  - \Pr \big[ \cap_{t\leq r\leq k} \tF_{v_r} \neq w \mid \cH_{t-1} \big]  \\
 &=\prod_{t< r\leq k} \Pr \big[  \tF_{v_r} \neq w \mid \cH_{t-1}, \tF_{v_t}=u  \big]  - \prod_{t\leq r\leq k}\big[  \tF_{v_r} \neq w  \mid \cH_{t-1} \big]\\
  &\leq \prod_{t< r\leq k} \Pr \big[  \tF_{v_r} \neq w  \mid \cH_{t-1}   \big] (1-1 + x_{w,v_t}) \\
  &\leq x_{w,v_t}
   \end{align*} 
The second last inequality is due to the fact for any $r \geq t$ that $\Pr[\tF_{v_r} \neq w \mid \cH_{t-1}] \geq (1-x_{w,v_r}).$  
\end{proof}

\begin{lemma}\label{lemma:safety_k_u_2}
For any step \( k > t \), and \(w\in N(v_t)\) , if \( \tF_{v_t} = \bot \) i.e., $v_t$ is not matched to any vertex incident on $v_t$ then
\begin{equation*} 
    \Pr[w \text{ is safe at }k \mid \cH_{t-1}, \tF_{v_t}=\bot ] - \Pr[ w \text{ is safe at }k \mid \cH_{t-1} ] \leq x_{w,v_t}
\end{equation*}
\end{lemma}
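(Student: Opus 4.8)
The plan is to run the argument of \cref{lemma:safety_k_u} essentially verbatim, exploiting the fact that the only property of the conditioning event $\{\tF_{v_t}=u\}$ (with $u\neq w$) used there is that it forces $\tF_{v_t}\neq w$ — and $\{\tF_{v_t}=\bot\}$ forces this as well, for \emph{every} $w\in N(v_t)$, which is why no ``$w\neq u$'' hypothesis is needed here. First I would condition on the $\cH_{t-1}$-measurable indicator of whether $w$ is safe at step $t$. If $w$ has already been matched to one of $v_1,\dots,v_{t-1}$, then both conditional probabilities in the statement are $0$ and there is nothing to prove; so I may assume $w$ is safe at step $t$ given $\cH_{t-1}$. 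On this part of the sample space, $\{w\text{ is safe at }k\}$ coincides with $\bigcap_{t\le r\le k}\{\tF_{v_r}\neq w\}$, since the first time some $\tF_{v_r}$ equals $w$ the still-safe vertex $w$ is matched; this is the same rewriting used in \cref{lemma:safety_k_u}.

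Next I would invoke the independence across the vertices of $V$. For $r>t$, the variable $\tF_{v_r}$ is a deterministic function of the fresh proposal $F_{v_r}$ and the independent bits $(A_{\cdot,v_r})$, whose Bernoulli parameters depend only on the fixed LP solution $\bm{x}$; in particular $\tF_{v_r}$ does not read any of the randomness used to form $\cH_{t-1}$ or $\tF_{v_t}$, nor does it depend on the partial matching. Hence $(\tF_{v_r})_{r>t}$ is jointly independent of $\tF_{v_t}$ conditional on $\cH_{t-1}$, and the events $\{\tF_{v_r}\neq w\}$ are mutually independent given $\cH_{t-1}$. Since $\tF_{v_t}=\bot$ discharges the $r=t$ term of the intersection automatically, I get $\Pr[w\text{ safe at }k\mid\cH_{t-1},\tF_{v_t}=\bot]=\prod_{t<r\le k}\Pr[\tF_{v_r}\neq w\mid\cH_{t-1}]$, whereas $\Pr[w\text{ safe at }k\mid\cH_{t-1}]=\prod_{t\le r\le k}\Pr[\tF_{v_r}\neq w\mid\cH_{t-1}]$. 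Subtracting, the difference equals $\big(\prod_{t<r\le k}\Pr[\tF_{v_r}\neq w\mid\cH_{t-1}]\big)\cdot\Pr[\tF_{v_t}=w\mid\cH_{t-1}]\le\Pr[\tF_{v_t}=w\mid\cH_{t-1}]$. Finally $\Pr[\tF_{v_t}=w\mid\cH_{t-1}]=\Pr[F_{v_t}=w]\cdot a_{w,v_t}=x_{w,v_t}\,a_{w,v_t}\le x_{w,v_t}$, using that $a_{w,v_t}=\tfrac{1/2}{1-(1/2)\sum_{i<t}x_{w,v_i}}\le 1$ because $\sum_{i<t}x_{w,v_i}\le\sum_{e\in\delta(w)}x_e\le 1$ by \eqref{eq:left}. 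This yields the claimed bound $x_{w,v_t}$.

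I do not expect a real obstacle: the statement is the $\bot$-counterpart of \cref{lemma:safety_k_u} and the proof is a near-copy. The one point that warrants care is the independence step — one must check that conditioning on the \emph{compound} event $\{\tF_{v_t}=\bot\}$ (which is the union of $\{F_{v_t}=\bot\}$ and the events $\{F_{v_t}=u',\,A_{u',v_t}=0\}$ over $u'\in N(v_t)$) still leaves the future proposals independent of it given $\cH_{t-1}$. This holds precisely because $\tF_{v_r}$ for $r>t$ is measurable with respect to $v_r$'s own fresh draws $\big(F_{v_r},(A_{\cdot,v_r})\big)$ only, so the product factorization of $\Pr[w\text{ safe at }k\mid\cH_{t-1},\tF_{v_t}=\bot]$ over $r\in\{t+1,\dots,k\}$ survives intact.
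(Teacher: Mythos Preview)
Your proposal is correct and follows essentially the same approach as the paper: the paper's proof simply observes that the argument of \cref{lemma:safety_k_u} carries over because the only role of the conditioning event is to force $\tF_{v_t}\neq w$, and $\{\tF_{v_t}=\bot\}$ does this for every $w\in N(v_t)$. You spell out the independence step more carefully than the paper does, but the structure is identical.
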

\begin{proof}
For each vertex \( w \in N(v_t) \), if \(\tF_{v_t} = \bot\), a proof similar to that in \cref{lemma:safety_k_u} can be used. This is because, in both cases where \(\tF_{v_t} = u\) or \(\tF_{v_t} = \bot\), we have:
\[\Pr[\tF_{v_t} \neq w \mid \cH_{t-1}, \tF_{v_t} = u] = \Pr[\tF_{v_t} \neq w \mid \cH_{t-1}, \tF_{v_t} = \bot].\]
\end{proof}

We now prove \Cref{lemma:one_step_one_vertex_diff} which establishes a bound on the one-step changes in the random variable $ Z(v_k)$ at any step $t$ where $k \geq t.$ 

\begin{lemma} 
\label{lemma:one_step_one_vertex_diff}

For any step $1 \le t \le n$ and $u \in N(v_t)$,
\begin{itemize}
    \item \( \Big| \bE \big[Z(v_t)\! \mid\! \cH_{t-1}, \tF_{v_t}\!=\!u \big] \!-\! \bE\big[Z(v_t)\! \mid \!\cH_{t-1}\big] \Big| \leq \auvt  + \sum\limits_{\mathclap{w \in N(v_t)}} \awvt x_{w,v_t}  \)  \label{eq:one_step_zt}
    \item \( \left| \bE \big[Z(v_k) \mid \cH_{t-1}, \tF_{v_t}=u \big] - \bE\big[Z(v_k) \mid \cH_{t-1}\big]  \right|\leq  \auvk x_{u,v_k}+ \sum \limits_{\mathclap{w \in \nut}} \awvk x_{w,v_k} x_{w,v_t} \) if $k>t$. \label{eq:one_step_zk}
\end{itemize}
\end{lemma}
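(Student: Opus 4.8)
The plan is to express both one-step differences in terms of how conditioning on the event $\tF_{v_t}=u$ perturbs the probability that a vertex of $U$ is still safe at a given step, and then to invoke \cref{lemma:safety_k_u}. Two structural facts are used throughout. First, the variables $(\tF_{v_s})_{s=1}^{n}$ are mutually independent, since $\tF_{v_s}$ is a deterministic function of only the proposal $F_{v_s}$ and the coin drawn when $v_s$ is processed; consequently $\tF_{v_t}$ is independent of $\cH_{t-1}$, and for $k>t$ the variable $\tF_{v_k}$ is independent of the pair $(\cH_{t-1},\tF_{v_t})$ and also of the event that $w$ is safe at step $k$ (which is determined by $(\tF_{v_r})_{r<k}$ alone). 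Second, I would record that $\Pr[\tF_{v_s}=w]=x_{w,v_s}\,a_{w,v_s}\le x_{w,v_s}$, because the Bernoulli parameter satisfies $a_{w,v_s}\le 1$; this in turn follows from $\sum_{i<s}x_{w,v_i}\le\sum_{e\in\delta(w)}x_e\le 1$ using constraint \eqref{eq:left}.

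For the first item ($k=t$): since the event ``$w$ is safe at $t$'' is $\cH_{t-1}$-measurable and $\tF_{v_t}$ is independent of $\cH_{t-1}$, we have $\bE[Z(v_t)\mid\cH_{t-1}]=\sum_{w\in N(v_t)}\awvt\,p_w\,\Pr[\tF_{v_t}=w]$, where $p_w\in\{0,1\}$ indicates whether $w$ is safe at $t$ given $\cH_{t-1}$. Conditioning additionally on $\tF_{v_t}=u$ forces $Z_{w,v_t}=0$ for every $w\ne u$, leaving $\bE[Z(v_t)\mid\cH_{t-1},\tF_{v_t}=u]=\auvt\,p_u$. Subtracting, applying the triangle inequality, and using $p_w\le 1$ together with $\Pr[\tF_{v_t}=w]\le x_{w,v_t}$ gives the bound $\auvt+\sum_{w\in N(v_t)}\awvt x_{w,v_t}$.

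For the second item ($k>t$): I would write $Z(v_k)=\sum_{w\in N(v_k)}\awvk Z_{w,v_k}$, where $Z_{w,v_k}$ is the indicator of the event that $\tF_{v_k}=w$ and $w$ is safe at step $k$. Using the independence of $\tF_{v_k}$ noted above, each relevant conditional expectation of $Z_{w,v_k}$ factors as $\Pr[\tF_{v_k}=w]$ times the corresponding conditional probability that $w$ is safe at $k$; hence the difference of conditional expectations equals $\sum_{w\in N(v_k)}\awvk\,\Pr[\tF_{v_k}=w]\,d_w$, where $d_w$ denotes the change in the safety probability of $w$ at step $k$ induced by conditioning on $\tF_{v_t}=u$. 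I then bound $|d_w|$ in three cases. If $w\notin N(v_t)$, the event ``$w$ is safe at $k$'' never references $\tF_{v_t}$, so $d_w=0$. If $w=u$ (which can occur only when $u\in N(v_k)$), then $\tF_{v_t}=u$ with $t<k$ makes $u$ unsafe at $k$ with certainty, so $|d_u|\le 1$ and this term contributes at most $\auvk x_{u,v_k}$. If $w\in\nut$, then \cref{lemma:safety_k_u} gives $0\le d_w\le x_{w,v_t}$, so these terms contribute at most $\sum_{w\in\nut}\awvk x_{w,v_k}x_{w,v_t}$ (extending the sum to all of $\nut$ only adds zero terms, since $x_{w,v_k}=0$ when $w\notin N(v_k)$). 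Summing the three contributions gives the claimed bound.

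The main obstacle is the careful bookkeeping of the independence structure: justifying that conditioning on $\tF_{v_t}=u$ changes neither the law of $\tF_{v_k}$ for $k>t$ nor the factorization of the conditional expectation of $Z_{w,v_k}$, and that the perturbation $d_w$ is \emph{nonnegative} (conditioning on $v_t$ proposing to $u$ can only help a distinct vertex $w$ stay safe), so that the one-sided estimate of \cref{lemma:safety_k_u} upgrades to a bound on $|d_w|$. The separate treatment of $w=u$ — using that a matched vertex stays matched, which is why this term scales like $\auvk x_{u,v_k}$ rather than $x_{u,v_k}x_{u,v_t}$ — is the only place where the argument departs from a direct appeal to \cref{lemma:safety_k_u}.
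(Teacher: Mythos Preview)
Your proposal is correct and follows essentially the same approach as the paper: decompose $Z(v_k)$ edge by edge, use the independence of $\tF_{v_k}$ from $(\cH_{t-1},\tF_{v_t})$ to factor each $\bE[Z_{w,v_k}\mid\cdot]$ as $\Pr[\tF_{v_k}=w]$ times a safety probability, and then case-split on $w\notin N(v_t)$, $w=u$, and $w\in\nut$, invoking \cref{lemma:safety_k_u} in the last case. The paper packages the four cases into the appendix \cref{lemma:wose_case_mt} and simply cites the relevant inequalities, whereas you carry out the case analysis inline; you are also more explicit about why $d_w\ge 0$ for $w\in\nut$ (so that the one-sided bound of \cref{lemma:safety_k_u} suffices for the absolute value), a point the paper handles implicitly via signed bounds and a final triangle inequality.
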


\begin{proof}
We begin by proving the first part of the lemma. 
Recall that $Z_{w,v_t}$ is the indicator random variable that the edge $(w,v_t)$ is selected in the matching $\cM.$ From the definitions of $Z(v_t)$ and $\tF_{v_t}$, the probability  $ \Pr[\tF_{v_t}=w] = \Pr[A_{w,v_t}=1, F_{v_t}=w] $. This implies that $ \Pr[\tF_{v_t}=w \mid \cH_{t-1}] \leq x_{w,v_t} $. Therefore we have
\begin{align*}
&\bE \big[ Z(v_t) \mid \cH_{t-1},\tF_{v_t}=u \big] - \bE \big[ Z(v_t) \mid \cH_{t-1} \big] \\ 
&=   \sum_{ w \in N(v_t)}  \awvt  \Big( \bE \big[ Z_{w,v_t} \mid \cH_{t-1},\tF_{v_t}=u \big] - \bE \big[ Z_{w,v_t} \mid \cH_{t-1} \big] \Big) 
\end{align*}
We apply the bounds established in \cref{lemma:wose_case_mt}, specifically \cref{eq:uvtu} and \cref{eq:wvtu}, and substitute them into the equation above. That implies 
\begin{align*}
\bE \big[ Z(v_t) \mid \cH_{t-1},\tF_{v_t}=u \big] - \bE \big[ Z(v_t) \mid \cH_{t-1} \big]
&\leq  \auvt- \auvt\frac{ x_{u,v_t}}{2} -  \sum_{w \in N_{\tilde{u}} (v_t)} \awvt  x_{w,v_t} \\
\end{align*}
Therefore, by applying triangle inequality we get the desired bound for the first part of the lemma as follows:
\[\left| \big[ Z(v_t) \mid \cH_{t-1},\tF_{v_t}=u \big] - \bE \big[ Z(v_t) \mid \cH_{t-1} \big] \right|  \leq  \sum_{ w \in N(v_t)}  \awvt { x_{w,v_t}}  +\auvt.\]

For the second part, where \( k > t \),
we consider the bounds from \cref{lemma:wose_case_mt} in \cref{eq:wvku} and \cref{eq:safety_k_u} to obtain the following, 
\begin{equation*}
 \left| \bE[Z(v_k) \mid \cH_{t-1}, \tF_{v_t}=u] - \bE[Z(v_k) \mid \cH_{t-1}] \right|   \leq \auvk x_{u,v_k}+ \sum_{ w \in \nut  }    \awvk  x_{w,v_k} x_{w,v_t} 
\end{equation*}
as desired. This concludes the proof of the lemma. 
\end{proof}  

Following this lemma, we can say that the one-step change in the size of the matching restricted to color class \( c \) at any step \( t \), conditioned on the event \( \tF_{v_t} =u \), is given by:
\begin{align*} 
   \bE[M_n \mid \cH_{t-1},\tF_{v_t}=u] - \bE[M_n \mid \cH_{t-1}]
   = \sum_{k \geq t}   \bE[ Z(v_k) \mid \cH_{t-1},\tF_{v_t}=u] - \bE[Z(v_k) \mid \cH_{t-1}]. 
\end{align*}
Therefore, by using the \Cref{lemma:one_step_one_vertex_diff} we can derive the following result about the expected one-step change in the size of the matching restricted to color class $c$ when  $\tF_{v_t}=u$. Notice that these one step differences are in the worst case; therefore they can be as large as $2$ as discussed in the previous section. However, when we compute the one-step variances, we get to take an expectation over the randomness of $\tF_{v_t}$ as shown in \cref{lemma:guts} and \cref{lemma:bt}. This allows us to bypass this worst-case bound of $2$.

\begin{lemma} \label{lemma:guts}
For any step $1 \le t \le n$ and $u \in N(v_t)$,
    \[ \sum_{t\in [n]} \sum_{ u \in N(v_t)}x_{u,v_t} |\Delta M_t(u)|  \leq 4M_0\]
    where $\Delta M_t(u)\!=\!\bE\big[M_n \mid \cH_{t-1},\tF_{v_t}=u\big] \!- \!\bE\big[M_n\mid \cH_{t-1}\big].$
\end{lemma}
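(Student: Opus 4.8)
The plan is to begin from \cref{lemma:one_step_one_vertex_diff}, bound $|\Delta M_t(u)|$ by a sum of four elementary terms, and then multiply by $x_{u,v_t}$, sum over $u\in N(v_t)$ and $t\in[n]$, applying the matching constraint~\eqref{eq:left} on \emph{both} sides of the bipartition.

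Since $M_n=\sum_{k\ge 1}Z(v_k)$ and each $Z(v_k)$ is determined by the matching after $v_1,\dots,v_k$ are processed, the summands with $k<t$ drop out and $\Delta M_t(u)=\sum_{k\ge t}\big(\bE[Z(v_k)\mid\cH_{t-1},\tF_{v_t}=u]-\bE[Z(v_k)\mid\cH_{t-1}]\big)$. Bounding the $k=t$ summand by the first bullet of \cref{lemma:one_step_one_vertex_diff} and the $k>t$ summands by the second, the triangle inequality gives
\[
|\Delta M_t(u)|\ \le\ \underbrace{\auvt}_{(a)}\ +\ \underbrace{\sum_{w\in N(v_t)}\awvt\,x_{w,v_t}}_{(b)}\ +\ \underbrace{\sum_{k>t}\auvk\,x_{u,v_k}}_{(c)}\ +\ \underbrace{\sum_{k>t}\sum_{w\in\nut}\awvk\,x_{w,v_k}x_{w,v_t}}_{(d)}.
\]
I would then multiply by $x_{u,v_t}$, sum over $u\in N(v_t)$ and $t\in[n]$, and treat the four contributions in turn, using $M_0=\bE[|\cM_c|]=\tfrac12\sum_{e\in E_c}x_e$ since $\Pr[e\in\cM]=x_e/2$. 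Term $(a)$ sums to exactly $\sum_{t}\sum_{u}x_{u,v_t}\auvt=\sum_{e\in E_c}x_e=2M_0$. For $(b)$, factor it as $\big(\sum_{u}x_{u,v_t}\big)\big(\sum_{w}\awvt\,x_{w,v_t}\big)$ and invoke~\eqref{eq:left} at $v_t$, i.e.\ $\sum_{u\in N(v_t)}x_{u,v_t}\le 1$, to bound it by $\sum_{w}\awvt\,x_{w,v_t}$; summing over $t$ gives $\le 2M_0$. For $(c)$, swap the order of summation to get $\sum_{u}\sum_{k}\auvk\,x_{u,v_k}\big(\sum_{t<k}x_{u,v_t}\big)$ and apply~\eqref{eq:left} at each $u\in U$, so that $\sum_{t<k}x_{u,v_t}\le 1$ and the term is again $\le 2M_0$. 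Term $(d)$ is handled the same way: resum over $u$ using~\eqref{eq:left} at $v_t$, then over $t$ using~\eqref{eq:left} at each $w$, obtaining $\le 2M_0$.

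Adding these four bounds only yields $8M_0$, so the remaining ingredient is to recover a factor of two by cancellation. The observation is that $(a)$ and $(c)$ must not both be charged in full: if $(u,v_t)$ enters $\cM$ then $u$ is matched at no step $k>t$, so the color-$c$ matchings responsible for $(a)$ and for $(c)$ come from mutually exclusive events. Concretely, write $Y_u:=\mathbf{1}[u\text{ is matched by a color-}c\text{ edge}]$, so $M_n=\sum_{u}Y_u$, and split $\Delta M_t(u^*)$ into the contribution of $Y_{u^*}$ and that of $\sum_{u\ne u^*}Y_u$. The first contribution equals $\mathbf{1}[u^*\text{ safe at }t]$ times the \emph{difference} $\psi_c(u^*,v_t)-\bE[Y_{u^*}\mid\cH_{t-1},\,u^*\text{ safe at }t]$ of the quantities behind $(a)$ and $(c)$, not their sum; using the telescoping identity satisfied by the acceptance probabilities $a_{u,v}$ of \cref{alg:ocrs}---which forces $\Pr[u^*\text{ matched to }v_k\mid u^*\text{ safe at }t]=\tfrac{x_{u^*,v_k}}{2p_t}$ for all $k\ge t$ (where $p_t=\Pr[u^*\text{ safe at }t]\ge\tfrac12$), so $\bE[Y_{u^*}\mid\cH_{t-1},\,u^*\text{ safe at }t]\le\tfrac12$---this difference is dominated by the \emph{maximum} of the two contributions, and analogously the immediate term $(b)$ pairs off against the future term $(d)$. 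Carrying this refinement through the summation in place of the crude addition tightens the total to $4M_0$.

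The step I expect to be the main obstacle is precisely this cancellation: making rigorous that the immediate gain at $v_t$ from matching $u$ is offset by the loss of $u$'s future matching opportunities. This requires the explicit form of \cref{alg:ocrs}'s acceptance probabilities and careful bookkeeping of the signs of the one-step changes, rather than the black-box triangle-inequality estimates of \cref{lemma:one_step_one_vertex_diff}, which have already collapsed this structure into a sum.
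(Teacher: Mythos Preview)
Your decomposition into terms $(a)$--$(d)$ and the individual bounds you derive are exactly what the paper does: it also invokes \cref{lemma:one_step_one_vertex_diff} (packaged in the appendix as \cref{lemma:gut}), sums over $t$ and $u$, and bounds each of the four contributions separately using the matching constraints \eqref{eq:left} on both sides, just as you describe. Each of the four pieces is bounded by $\sum_{e\in E_c}x_e$, giving a total of $4\sum_{e\in E_c}x_e$.

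The apparent factor-of-two gap that leads you to the cancellation step is not mathematical but notational. In the proof of this lemma (and of \cref{lemma:bt}) the paper explicitly writes $\sum_{(u,v)\in E}\psi_c(u,v)\,x_{u,v}=M_0$, i.e.\ it is using $M_0=\sum_{e\in E_c}x_e$ here, not $\tfrac12\sum_{e\in E_c}x_e$; this is carried consistently through \cref{eq:var_t_sum} and \cref{thm:concentration_mc}, whose exponent is stated in terms of $\sum_{e\in E_c}x_e$. So your four bounds already add up to what the paper calls $4M_0$, and no further work is required.

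Consequently, the cancellation argument you propose is unnecessary, and the paper does nothing of the sort. It is also shaky as written: once you have taken absolute values termwise via \cref{lemma:one_step_one_vertex_diff}, the sign structure you want to exploit is gone, and your claim that the $u^*$-contribution is ``dominated by the maximum'' of the $(a)$ and $(c)$ parts does not follow from $\bE[Y_{u^*}\mid\cH_{t-1},u^*\text{ safe at }t]\le\tfrac12$ (that bound is not even obviously true conditionally, and even if it were, $|\psi_c(u^*,v_t)-q|$ with $q\le\tfrac12$ is not in general bounded by $\max\{\psi_c(u^*,v_t),q\}$ in a way that halves the weighted sum). Drop that paragraph entirely; your direct argument already proves the lemma under the paper's convention for $M_0$.
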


\begin{proof}
Using \Cref{lemma:gut} and the fact \(\sum_{t\in[n]} \sum_{ u \in N(v_t)}\auvt x_{u,v_t} = \sum_{(u,v)\in E}\auv x_{u,v}  = M_0,\) we can say that
    \begin{align}
    \begin{split}
       &  \sum_{t\in [n]} \sum_{ u \in N(v_t)}x_{u,v_t}  
        \big| \bE[M_n \mid \cH_{t-1},\tF_{v_t}=u] - \bE[M_n \mid \cH_{t-1}] \big| \\
         &\leq M_0 +  \sum_{t\ge 1 }\Big( \sum_{ u \in N(v_t)} x_{u,v_t}   \sum \limits_{w \in \nut}  \big(  \awvt x_{w,v_t}   +    \sum_{k >t}  \awvk x_{w,v_k}x_{w,v_t}  \big)  + \notag \\
          &\quad \sum_{u \in N(v_t)}x_{u,v_t}\auvk x_{u,v_k}  \Big)
         \end{split} \notag \\
          &\leq  2M_0 +\sum_{(u,v)\in E} \sum_{w \in N(v)} \psi_c({w,v}) x_{u,v} x_{w,v}  + \sum_{t\ge 1}\sum_{u \in N(v_t)} x_{u,v_t}    \sum \limits_{w \in \nut} \sum_{k >t}  \awvk x_{w,v_k}x_{w,v_t} \label{eq:one} \\
        &\leq 2M_0 + \sum_{(w,v)\in E_c}x_{w,v} \sum_{u  \in N(v)} x_{u,v}    +  \sum_{ (u,v) \in E}x_{u,v} \sum_{w \in N_{\bar{u}}(v)} \sum_{v' \in N(w)} \psi_c({w,v'}) x_{w,v'} x_{w,v} \label{eq:event_e} \\
         &\leq 2M_0 +   \sum_{(w,v)\in E_c} x_{w,v}  + \sum_{ (u,v) \in E} \sum_{w \in N(v)}  \sum_{{ (w,v') \in E_c\cap \delta(w) }}  x_{w,v}x_{w,v'} x_{u,v} \label{eq:all_edges} \\
          &\leq  2M_0 + M_0+ \sum_{ (w,v') \in E_c}x_{w,v'} \sum_{v \in N(w)}x_{w,v}  \sum_{u\in N(v)} x_{u,v} \label{eq:critical_inequality} \\
         &= 3M_0 +\sum_{ (w,v') \in E_c}x_{w,v'}  \Big(\sum_{v \in N(w)}x_{w,v} \big(\sum_{u \in N(v) } x_{u,v}\big) \Big) \notag  \\
         &\leq 3M_0 + \sum_{ (w,v') \in E_c}x_{w,v'} \leq 4 M_0 \notag
    \end{align}
\cref{eq:one} is due the fact that
\begin{align*}
    \sum_{t\geq1}\sum_{u \in N(v_t)} x_{u,v_t}  \auvk x_{u,v_k} & \leq \sum_{(u,v) \in E}x_{u,v} \sum_{v' \in N(u)} \psi_c(u,v')x_{u,v'} \\ & \leq \sum_{(u,v') \in E_c} x_{u,v'} \sum_{v \in \delta(u)}x_{u,v} \leq \sum_{(u,v') \in E_c} x_{u,v'}=M_0
\end{align*}

 Note that each $\tF_{v_t} =u$ indicates that the edge \((u, v_t)\) satisfies $F_{v_t}=u$ and  $A_{u,v_t}=1$. Therefore, the summation \(\sum_{t \in [n]} \sum_{u \in N(v_t)}\) simplifies to a summation over the set of edges \( E \) as shown in \cref{eq:event_e}. We obtain \cref{eq:all_edges} due to the fact that \(\sum_{s > t} \sum_{w \in \nut} a_{w,v_s}\) can be upper bounded by summing over all edges of color class \( c \).  The inequality~\eqref{eq:critical_inequality} is true because we change the order of summations from $\sum_{(u,v)\in E}\sum_{w \in N(v)}\sum_{(w,v') \in E_c \cap \delta(w)}$ without changing the value of the summation. The final inequality follows from the fact that \(\left( \sum_{v \in N(w)} x_{w,v} \left( \sum_{u \in N(v)} x_{u,v} \right) \right) \leq 1\) for any \( w \in U \). This concludes the proof.
\end{proof}

We next consider the case when $\tF_{v_t} = \bot$. Recall that this means $F_{v_t} = \bot$, or $F_{v_t} = u$ for some $u \in N(v_t)$, yet $A_{u,v_t} = 0$.
\begin{lemma} \label{lemma:bt}
For any step $1 \le t \le n$,
\[\sum_{t\in [n]} | \bE \big[M_n \mid \cH_{t-1},\tF_{v_t} = \bot\big] \!- \!\bE\big[ M_n \mid \cH_{t-1}\big]| \leq 2M_0.\]
\end{lemma}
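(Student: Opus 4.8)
The plan is to mirror the proof of \cref{lemma:guts}, using \cref{lemma:safety_k_u_2} in place of \cref{lemma:safety_k_u}. The one structural simplification is that conditioning on $\tF_{v_t}=\bot$ matches no edge to $v_t$ and hence blocks no vertex of $U$, so the first-order term $\auvk x_{u,v_k}$ appearing in \cref{lemma:one_step_one_vertex_diff} (which comes from the vertex $u$ that $v_t$ would occupy) is absent here. Since $Z(v_k)$ is $\cH_{t-1}$-measurable for every $k<t$, the $k<t$ contributions cancel and, exactly as in the identity derived for the case $\tF_{v_t}=u$ in the discussion preceding \cref{lemma:guts},
\[
\bE[M_n \mid \cH_{t-1},\tF_{v_t}=\bot] - \bE[M_n \mid \cH_{t-1}]
= \sum_{k \ge t}\Big(\bE[Z(v_k) \mid \cH_{t-1},\tF_{v_t}=\bot] - \bE[Z(v_k) \mid \cH_{t-1}]\Big).
\]

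Next I would bound the $k$-th summand and then sum over $t$. For $k=t$, conditioning on $\tF_{v_t}=\bot$ forces $Z(v_t)=0$, whereas $\bE[Z(v_t)\mid\cH_{t-1}]=\sum_{w\in N(v_t)}\awvt\,\Pr[\tF_{v_t}=w,\ w\text{ safe at }t\mid\cH_{t-1}]\le\sum_{w\in N(v_t)}\awvt x_{w,v_t}$, using $\Pr[\tF_{v_t}=w\mid\cH_{t-1}]\le x_{w,v_t}$; so that summand is at most $\sum_{w\in N(v_t)}\awvt x_{w,v_t}$ in absolute value. For $k>t$, writing $Z_{w,v_k}=\mathbf{1}[\tF_{v_k}=w]\cdot\mathbf{1}[w\text{ safe at }k]$ and using that $\tF_{v_k}$ is independent of $\cH_{t-1}$, of $\tF_{v_t}$, and of the event ``$w$ safe at $k$'', one obtains
\[
\bE[Z_{w,v_k}\mid\cH_{t-1},\tF_{v_t}=\bot] - \bE[Z_{w,v_k}\mid\cH_{t-1}]
=\Pr[\tF_{v_k}=w]\Big(\Pr[w\text{ safe at }k\mid\cH_{t-1},\tF_{v_t}=\bot]-\Pr[w\text{ safe at }k\mid\cH_{t-1}]\Big),
\]
which by \cref{lemma:safety_k_u_2} and $\Pr[\tF_{v_k}=w]\le x_{w,v_k}$ lies in $[0,\,x_{w,v_k}x_{w,v_t}]$; summing over $w\in N(v_t)$ (vertices outside $N(v_t)$ have their safety unaffected, and terms with $(w,v_k)\notin E$ vanish) bounds the $k$-th summand by $\sum_{w\in N(v_t)}\awvk x_{w,v_k}x_{w,v_t}$. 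The triangle inequality over $k\ge t$ then gives
\[
\big|\bE[M_n \mid \cH_{t-1},\tF_{v_t}=\bot] - \bE[M_n \mid \cH_{t-1}]\big|\le\sum_{w\in N(v_t)}\awvt x_{w,v_t}+\sum_{k>t}\ \sum_{w\in N(v_t)}\awvk x_{w,v_k}x_{w,v_t}.
\]

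Finally I would sum over $t\in[n]$ and reorganize as in \cref{lemma:guts}: the first term collapses to $\sum_{t}\sum_{w\in N(v_t)}\awvt x_{w,v_t}=\sum_{(w,v)\in E_c}x_{w,v}=M_0$ (the identity used repeatedly in \cref{lemma:guts}), and for the second, grouping the summands by their common vertex $w\in U$ and using the matching constraint \eqref{eq:left} in the form $\sum_{v\in N(w)}x_{w,v}\le 1$ gives
\[
\sum_{t}\sum_{k>t}\sum_{w\in N(v_t)}\awvk x_{w,v_k}x_{w,v_t}\le\sum_{w\in U}\Big(\sum_{v\in N(w)}x_{w,v}\Big)\Big(\sum_{v'\in N(w)}\psi_c(w,v')x_{w,v'}\Big)\le\sum_{(w,v')\in E_c}x_{w,v'}=M_0,
\]
for a total of $2M_0$. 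I expect the only delicate point to be this $k>t$ summand bound: one has to check that conditioning on $\tF_{v_t}=\bot$ perturbs only the safety events of the $w\in N(v_t)$, each by at most $x_{w,v_t}$ (this is exactly \cref{lemma:safety_k_u_2}), with the extra factor $\Pr[\tF_{v_k}=w]\le x_{w,v_k}$ accounting for $v_k$ proposing to $w$. Once this per-round estimate is established, the remaining double- and triple-sum manipulations are routine and structurally identical to those in \cref{lemma:guts}.
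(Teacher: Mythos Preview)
Your proof is correct and follows essentially the same approach as the paper: decompose into $k=t$ and $k>t$ contributions, bound the $k=t$ term by $\sum_{w\in N(v_t)}\awvt x_{w,v_t}$ and the $k>t$ term via \cref{lemma:safety_k_u_2} by $\sum_{w\in N(v_t)}\awvk x_{w,v_k}x_{w,v_t}$, then sum over $t$ and swap sums using the matching constraint $\sum_{v\in N(w)}x_{w,v}\le 1$ to get $M_0+M_0$. The only cosmetic difference is that the paper packages the per-step bound as \eqref{eq:botmn} inside \cref{lemma:wose_case_mt_2} and cites it, whereas you derive it inline.
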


\begin{proof}
In \cref{lemma:wose_case_mt_2} we showed that the worst case one-step change  $\Delta M_t$ when $\tF_{v_t} = \bot$ occurs. Therefore, we directly apply \cref{eq:botmn} to get
$\sum_{ u \in N(v_t) \setminus \{u\}} \sum_{k \geq t}  \auvk  x_{u,v_k} x_{u,v_t} +  \sum_{u \in N(v_t)}\auvt x_{u,v_t}$
 
\begin{align}
      &\sum_{t\geq 1} \Big| \bE[M_n \mid \cH_{t-1},\tF_{v_t} = \bot] - \bE[ M_n \mid \cH_{t-1}]  \Big| \notag  \\
     &\leq   \sum_{t\geq 1} \sum_{u \in \delta(v_t)}\auvt x_{u,v_t} + \sum_{t\ge 1}\sum_{k> t} \sum_{ u \in N(v_t)} \auvk x_{u,v_k} x_{u,v_t}  \notag  \\
        &= M_0 + \sum_{t \in [n]} \sum_{u \in N(v_t)}x_{u,v_t}  \sum_{(u,v_k)\in E_c:k>t}x_{u,v_k} \label{eq:final_f}
\end{align}
Finally, we can say that,
    \begin{align}
       \sum_{t \in [n]} \sum_{u \in N(v_t)}x_{u,v_t}  \sum_{(u,v_k)\in E_c:k>t}x_{u,v_k}  
       &\leq   \sum_{t \in [n]} \sum_{u \in N(v_t)}x_{u,v_t}  \sum_{v':(u,v')\in E_c}x_{u,v'} \notag \\
         &\leq   \sum_{(u,v)\in E}x_{u,v} \sum_{ (u,v') \in E_c \cap \delta(u) }x_{u,v'} \notag  \\
         & = \sum_{(u,v') \in E_c} x_{u,v'} \sum_{v \in N(u)}x_{u,v} \notag \\
         &\leq \sum_{(u,v') \in E_c} x_{u,v'} =M_0. \label{eq:final_e}
    \end{align}
    Therefore, we get the desired bound by combining  \cref{eq:final_f} and \cref{eq:final_e}. 
\end{proof}

Therefore, by combining \cref{lemma:guts} and \cref{lemma:bt}, we can easily derive a bound on the sum of one-step variances i.e., \(\sum_{t \in [n]} \var(\Delta M_t \mid \cH_{t-1})\) used in the Freedman's inequality. Formally, we can prove the following.
\begin{lemma} \label{eq:var_t_sum}
$ \sum_{t\in [n]}\var(\Delta M_t \mid \cH_{t-1}) \leq 12M_0 $ where $\var(\Delta M_t \mid \cH_{t-1})$ is the variance in the one-step change of $M_n$ at time $t$ and $M_0=\bE[|\cM_c|]$. 
\end{lemma}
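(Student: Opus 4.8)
The plan is to reduce the statement to the two estimates already proved, \cref{lemma:guts} and \cref{lemma:bt}, using the variance-to-mean bound of \cref{obs:var_ub}. Since $\var(\Delta M_t \mid \cH_{t-1}) \le 2\,\bE[|\Delta M_t| \mid \cH_{t-1}]$, it suffices to show
\[
\sum_{t \in [n]} \bE\big[\,|\Delta M_t| \;\big|\; \cH_{t-1}\,\big] \;\le\; 6 M_0 .
\]

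First I would decompose the conditional expected one-step change by the law of total expectation, conditioning on the value of $\tF_{v_t}$. Because $\cH_t$ is obtained from $\cH_{t-1}$ by revealing $\tF_{v_t}$, we have $M_t = \bE[M_n \mid \cH_{t-1}, \tF_{v_t}]$, so writing $\Delta M_t(q) := \bE[M_n \mid \cH_{t-1}, \tF_{v_t} = q] - \bE[M_n \mid \cH_{t-1}]$ for $q \in N(v_t) \cup \{\bot\}$ (which extends the notation of \cref{lemma:guts}),
\[
\bE\big[\,|\Delta M_t| \;\big|\; \cH_{t-1}\,\big] \;=\; \sum_{u \in N(v_t)} \Pr[\tF_{v_t} = u \mid \cH_{t-1}] \, |\Delta M_t(u)| \;+\; \Pr[\tF_{v_t} = \bot \mid \cH_{t-1}] \, |\Delta M_t(\bot)| .
\]

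Next I would bound the two mixing weights: $\Pr[\tF_{v_t} = u \mid \cH_{t-1}] \le \Pr[F_{v_t} = u \mid \cH_{t-1}] \le x_{u,v_t}$ (as used already in the proof of \cref{lemma:one_step_one_vertex_diff}), and trivially $\Pr[\tF_{v_t} = \bot \mid \cH_{t-1}] \le 1$. Summing over $t$ then gives
\[
\sum_{t \in [n]} \bE\big[\,|\Delta M_t| \;\big|\; \cH_{t-1}\,\big] \;\le\; \sum_{t \in [n]} \sum_{u \in N(v_t)} x_{u,v_t}\, |\Delta M_t(u)| \;+\; \sum_{t \in [n]} |\Delta M_t(\bot)| \;\le\; 4 M_0 + 2 M_0 ,
\]
where the first term is bounded by \cref{lemma:guts} and the second by \cref{lemma:bt}. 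Combining with \cref{obs:var_ub} yields $\sum_{t \in [n]} \var(\Delta M_t \mid \cH_{t-1}) \le 2(4M_0 + 2M_0) = 12 M_0$, which feeds into Freedman's inequality (\cref{thm:freedman}) with $\nu = 12 M_0$.

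There is essentially no remaining obstacle: all the real work — in particular the per-step cancellations from \cref{lemma:one_step_one_vertex_diff} — is already absorbed into \cref{lemma:guts} and \cref{lemma:bt}. The only point that needs a moment's care is that the bounds in those two lemmas hold \emph{deterministically} for every realization of $\cH_{t-1}$ (they are ultimately bounds in terms of the LP values $x_e$), so the displayed inequality for $\sum_t \var(\Delta M_t \mid \cH_{t-1})$ holds surely, as required by the hypothesis of Freedman's inequality.
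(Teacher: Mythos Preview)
Your proposal is correct and mirrors the paper's own proof essentially line for line: reduce to bounding $\sum_t \bE[|\Delta M_t|\mid\cH_{t-1}]$ via \cref{obs:var_ub}, decompose over the outcomes of $\tF_{v_t}$, bound the mixing weights by $x_{u,v_t}$ and $1$, and invoke \cref{lemma:guts} and \cref{lemma:bt} for $4M_0+2M_0$. If anything, your version is slightly more careful in writing $\Pr[\tF_{v_t}=u\mid\cH_{t-1}]\le x_{u,v_t}$ as an inequality and in noting explicitly that the resulting bound is deterministic in $\cH_{t-1}$, which is what Freedman's hypothesis requires.
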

\begin{proof}
    Letting $x(t) = \sum_{e \in \delta(v_t)}x_e$, we have the following. 
    \begin{align}
        \bE[ |\Delta M_t| \mid \cH_{t-1} ]
        &= \sum_{u \in \tF_{v_t}\cup \{\bot \} } \Pr[\tF_{v_t} = u] \Big| \bE[M_n \mid \cH_{t-1},\tF_{v_t}=u ] - \bE[M_n \mid \cH_{t-1}]  \Big| \label{eq:indep_1} 
    \end{align} 
    Inequality~\eqref{eq:indep_1} is by definition and the fact that the random variable $\tF_{v_t}$ is independent of $\cH_{t-1}.$ The support of $\tF_{v_t}$ comprises of mainly two types of events (i) an edge $(u,v_t) \in \delta(v_t)$ has been both selected ($F_{v_t}=u$) and survived $A_{u,v_t}=1$, and (ii) no edge gets sampled at time $t$ i.e., $F_{v_t}= \tF_{v_r} = \bot$. We let $\tF_{v_t}=u$ represent the event that an edge $(u,v_t)$ is both selected i.e., $F_{v_t}=u$ and $F_{v_t}\neq w$ for any $w \in \nut$ and $F_{v_t}=\bot$ represent the event that none of edges were selected. 
    Since \( \mathbb{E}[M_n \mid \mathcal{H}_{t-1}, \tF_{v_t} = u] - \mathbb{E}[M_n \mid \mathcal{H}_{t-1}] \) and \( \bE[M_n \mid \mathcal{H}_{t-1}, \tF_{v_t} = \bot] - {\bE}[M_n \mid {\cH}_{t-1}] \) represent the one-step change in \( M_n \) (the expected matching size restricted to color class \( c \)) when \( \tF_{v_t} = u \) and \( \tF_{v_t} = \bot \), respectively.  
    By the definition of the random variable $\tF_{v_t}$ we have $\Pr[\tF_{v_t}=u]=x_{u,v_t}$ and $\Pr[\tF_{v_t} = \bot] = (1-x(t))$, therefore
    \begin{align}
   & \bE[ |\Delta M_t| \mid \cH_{t-1} ] \notag \\
        &  = \sum_{u\in N(v_t)} x_{u,v_t} \big|  \bE[M_n \mid \mathcal{H}_{t-1}, \tF_{v_t} = u] - {\bE}[M_n \mid {\cH}_{t-1}] \big| + \notag \\ & \quad  (1-x(t))| \bE[M_n \mid \mathcal{H}_{t-1}, \tF_{v_t} = \bot] - {\bE}[M_n \mid {\cH}_{t-1}]|. \label{eq:sample_def} 
    \end{align}
By summing \Cref{eq:sample_def} over all time steps \( t \in [n] \), we obtain the following after applying \Cref{lemma:guts}, and \Cref{lemma:bt}.  
\begin{align}
        \sum_{t\in [n]} \bE[ |\Delta M_t| \mid \cH_{t-1} ]
            &\leq 4M_0 + 2M_0  \label{eq:bt} 
    \end{align}
    The last inequality follows from \Cref{lemma:bt} and \Cref{lemma:guts}. Finally we can conclude that $\sum_{t\in [n]}V_t \leq 12M_0$ by using \Cref{obs:var_ub}.
    \end{proof}
Therefore, by applying the Freedman's inequality (i.e., \cref{thm:freedman}) we get the desired concentration for the size of matching from color class $c$. 
\begin{theorem} \label{thm:concentration_mc}
For any color class \(c\), \cref{alg:ocrs} returns a matching $\scr{M}_c = \scr{M} \cap E_c$ that is sharply concentrated around its expected value \(M_0=\bE[|\cM_c|]\). Specifically, for any \(\delta > 0\), we have that: \begin{equation*} 
    \Pr\Big[ \big|  |\cM_c| -M_0 \big| \geq \delta M_0  \Big]   \leq  2 \exp \Big({- \frac{\delta^2 \sum_{e\in E_c}x_e}{28}} \Big)
\end{equation*}

\end{theorem}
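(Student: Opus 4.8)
The plan is to invoke Freedman's inequality (Theorem~\ref{thm:freedman}) on the Doob martingale $(M_t)_{t=0}^n$, using the two ingredients already assembled: the uniform one-step bound $|\Delta M_t| \le 2$ from Lemma~\ref{lem:one_step_worst_case}, which plays the role of the parameter $\Lambda$, and the total conditional-variance bound $\sum_{t\in[n]} \var(\Delta M_t \mid \cH_{t-1}) \le 12 M_0$ from Lemma~\ref{eq:var_t_sum}, which plays the role of $\nu$. Recall that $M_0 = \bE[|\cM_c|]$ while $M_n = |\cM_c|$ is the actual number of color-$c$ edges output, so a concentration statement for $M_n$ about $M_0$ is precisely the claim.

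Concretely, I would apply Theorem~\ref{thm:freedman} with $\lambda = \delta M_0$, $\Lambda = 2$, and $\nu = 12 M_0$, which yields
\[
\Pr\big[\,\big|\,|\cM_c| - M_0\,\big| \ge \delta M_0\,\big] \;\le\; 2\exp\!\left(\frac{-\delta^2 M_0^2}{2\,(12 M_0 + 2\delta M_0)}\right) \;=\; 2\exp\!\left(\frac{-\delta^2 M_0}{24 + 4\delta}\right).
\]
In the regime of interest $0 < \delta \le 1$ (the case $\delta \ge 1$ being the trivial one flagged after~\eqref{eq:paf}), the denominator satisfies $24 + 4\delta \le 28$, so the right-hand side is at most $2\exp(-\delta^2 M_0/28)$. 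Substituting the identity $M_0 = \sum_{e\in E_c} x_e$ (the normalization used in Lemma~\ref{lemma:guts}) then gives the stated bound $2\exp\!\big(-\delta^2 \sum_{e\in E_c} x_e/28\big)$.

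The only point requiring a little care is the passage from the generic Freedman bound to the clean constant $28$: this is exactly where the restriction $\delta \le 1$ enters, and it is worth writing out, since the exponent $\delta^2/(24+4\delta)$ is not friendly to leave unsimplified for large $\delta$. Beyond that, I do not expect any genuine obstacle at this stage — all the substantive work is upstream. The real difficulty was controlling $\sum_{t} \bE[|\Delta M_t| \mid \cH_{t-1}]$ carefully enough (via Lemmas~\ref{lemma:guts} and~\ref{lemma:bt}, using that worst-case one-step changes occur only with probability proportional to the relevant $x_e$) so that the martingale's \emph{variance}, rather than the $\Theta(n)$ worst-case sum of one-step changes, governs the tail. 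Given Lemmas~\ref{lem:one_step_worst_case} and~\ref{eq:var_t_sum}, this theorem is just the payoff of feeding those two quantities into Freedman's inequality with the right choice of $\lambda$.
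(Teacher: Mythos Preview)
Your proposal is correct and matches the paper's proof essentially line for line: apply Freedman's inequality with $\Lambda=2$ (Lemma~\ref{lem:one_step_worst_case}), $\nu=12M_0$ (Lemma~\ref{eq:var_t_sum}), and $\lambda=\delta M_0$, then simplify $24+4\delta\le 28$ and identify $M_0$ with $\sum_{e\in E_c}x_e$ as in Lemma~\ref{lemma:guts}. Your explicit flag that the simplification to the constant $28$ uses $\delta\le 1$ is a welcome clarification that the paper leaves implicit.
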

\begin{proof} 
We have \(\Lambda = \max_{t} \Delta M_t \leq 2\). Applying Freedman's inequality from \Cref{thm:freedman} to the martingale \((M_t)_{t \in [n]}\), where \(M_n:=|\cM_c|\) represents the random variable denoting the number of edges of color \(c\) selected by our algorithm in the returned matching, we obtain:

\begin{align*}
    \Pr[ |M_0 -M_n| \geq \lambda] \leq 2\exp \Big(\frac{-\lambda^2}{2(\nu +\lambda (2))} \Big) \leq 2\exp \Big(\frac{-\lambda^2}{2(12M_0 +2\lambda )} \Big).
\end{align*}

{By taking $\lambda = \delta M_0$ we get :
\begin{align*}
     \Pr[ |M_0 -M_n| \geq \delta M_0] \leq  2\exp \Big(\frac{-\delta^2M_0^2}{ 24M_0 +4\delta M_0 } \Big)  \leq 2 \exp \Big(-\frac{\delta^2 M_0}{28} \Big)
\end{align*}
}
as desired. \end{proof}

This concludes that  Property~\ref{eqn:concentration} described in \cref{sec:prelim-prob} can be satisfied. Finally combining Property~\ref{eqn:concentration} together with the fairness constraints in \eqref{eq:coloring_constraint}, we can conclude the following lemma.
\begin{lemma} \label{lemma:delta_prob_fair}
    For any $c\in [\ell]$, and $\delta>0$, \cref{alg:ocrs} returns a matching $\cM$ that satisfies the following: 
    \begin{align*} 
    \Pr \Bigg[ {(1-\delta)} \alpha \leq  \frac{|\cM_c|}{|\cM|} \leq  {(1+\delta)} \beta \Bigg] 
    &\geq 1- 4\exp \Bigg(- \frac{\delta^2\sum_{e\in E_c}x_e}{28}\Bigg)
    \end{align*}
\end{lemma}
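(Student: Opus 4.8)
The plan is to obtain \Cref{lemma:delta_prob_fair} as a short corollary of \Cref{thm:concentration_mc}, since the real work — the Freedman/one‑step‑variance argument — is already done there. The key observation is that nothing in the derivation of \Cref{thm:concentration_mc} (the Doob martingale $(M_t)$, the one‑step bounds of \Cref{lemma:guts} and \Cref{lemma:bt}, and the resulting $\sum_t \var(\Delta M_t\mid\cH_{t-1})\le 12M_0$) uses that $E_c$ is one of the parts of the color partition; it only uses $E_c\subseteq E$ and the independence of $(F_v)_v,(A_e)_e$. So I would invoke it twice. First, for the color class $c$ itself: outside an event of probability $\le 2\exp(-\delta^2\sum_{e\in E_c}x_e/28)$, we have $|\cM_c|\in\big[(1-\delta)M_0,(1+\delta)M_0\big]$ with $M_0=\bE[|\cM_c|]=\tfrac12\sum_{e\in E_c}x_e$. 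Second, with the ``color class'' taken to be all of $E$ (equivalently $\psi_c\equiv 1$): outside an event of probability $\le 2\exp(-\delta^2\sum_{e\in E}x_e/28)\le 2\exp(-\delta^2\sum_{e\in E_c}x_e/28)$ (the last step since $E_c\subseteq E$), we have $|\cM|\in\big[(1-\delta)\bE[|\cM|],(1+\delta)\bE[|\cM|]\big]$ with $\bE[|\cM|]=\tfrac12\sum_{e\in E}x_e$ by Property~\ref{eqn:rounding_guarantee}.

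A union bound then gives that both estimates hold simultaneously with probability at least $1-4\exp(-\delta^2\sum_{e\in E_c}x_e/28)$. On that event I would combine them with the LP color constraint \eqref{eq:coloring_constraint}: dividing \eqref{eq:coloring_constraint} through by $\sum_{e\in E}x_e$ and substituting the two expectations gives $\alpha\le \bE[|\cM_c|]/\bE[|\cM|]\le\beta$, whence
\[
\frac{|\cM_c|}{|\cM|}\ \le\ \frac{(1+\delta)\,\bE[|\cM_c|]}{(1-\delta)\,\bE[|\cM|]}\ \le\ \frac{1+\delta}{1-\delta}\,\beta,\qquad \frac{|\cM_c|}{|\cM|}\ \ge\ \frac{1-\delta}{1+\delta}\,\alpha .
\]
Reading off $(1-\delta)\alpha\le |\cM_c|/|\cM|\le(1+\delta)\beta$ from this finishes the argument.

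The one place needing care — and thus the main obstacle, modest as it is — is this last step, since $\tfrac{1+\delta}{1-\delta}\neq 1+\delta$. The clean fix is to carry out the whole argument with $\delta$ replaced by $\delta':=\delta/3$ (legitimate for $\delta\le 1$; for $\delta\ge 1$ the conclusion is trivial by the footnote accompanying \eqref{eq:paf}): one checks $\tfrac{1+\delta'}{1-\delta'}\le 1+\delta$ and $\tfrac{1-\delta'}{1+\delta'}\ge 1-\delta$, at the harmless cost of enlarging the absolute constant in the exponent, which does not affect the asymptotic statements of \Cref{thm:thm_two_sided_fairness} or its remark. I would also dispose of the degenerate case $|\cM|=0$ (then $|\cM_c|=0$ and the ratio is vacuous) separately: it lies outside the good event once $\bE[|\cM|]=\tfrac12\sum_{e\in E}x_e$ exceeds a constant, and if $\sum_{e\in E}x_e$ is that small then $\OPT$ is negligible and there is nothing to prove.
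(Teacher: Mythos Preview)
Your proposal is correct and follows essentially the same route as the paper: apply the concentration theorem once to $|\cM_c|$ and once to $|\cM|$ (the paper does this too, observing that the same martingale argument applies to $|\cM|$), take a union bound to get the factor $4$, and then combine the two‐sided estimates with the LP constraint \eqref{eq:coloring_constraint}. Your treatment of the $(1+\delta)/(1-\delta)$ versus $(1+\delta)$ discrepancy via the substitution $\delta'=\delta/3$ is in fact cleaner than the paper's handling, which introduces auxiliary parameters $\delta_1,\delta_2$ with somewhat informal side conditions; the only cost, as you note, is a constant in the exponent, and the paper's own derivation is equally loose on that front.
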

\begin{proof}
From \cref{thm:concentration_mc} we know that for any color class $c$ the matching size $|\cM_c|$ is concentrated sharply around the mean. Therefore, we have the following,
\begin{align*}
   \Pr \Bigg[ |\cM_c| \leq \frac{(1-\delta)}{2} \alpha \sum_{e \in E}x_e  \Bigg] \leq   \Pr \Bigg[ |\cM_c| \leq \frac{(1-\delta)}{2} \sum_{e\in E_c}x_e  \Bigg] &\leq \exp \Bigg(- \frac{\delta^2\bE[|\cM_c|]}{28}\Bigg) \\
\end{align*}
A similar bound holds for the random variable $|\cM|$ which is given by 
\begin{align*}
    \Pr \big[  {(1-\delta)} \bE[|\cM|]    \leq |\cM|  \leq {(1+\delta)} \bE[|\cM|] \big] &\geq 1- 2\exp \Bigg(- \frac{\delta^2\bE[|\cM|]}{28}\Bigg) 
\end{align*}

Therefore, we have the following, for any $\delta_1, \delta_2 \geq 0$ such that $\frac{1-\delta_2}{1+\delta_2}\geq 1$,
\begin{align*}\small
      \Pr \left[ \Big(|\cM_c| \geq \frac{(1-\delta_1)}{2} \alpha \sum_{e \in E}x_e \Big) \cap  \Big( |\cM| \leq  \frac{(1+\delta_2)}{2}  \sum_{e\in E}x_e \Big) \right] &\geq 1- \exp \Bigg(\frac{-\delta^2\sum_{e\in E}x_e}{28}\Bigg)- \\ & \quad \exp  \Bigg(\frac{-\delta^2\sum_{e\in E_c}x_e}{28}\Bigg) \\
       \Pr \left[ \frac{|\cM_c|}{|\cM|} \geq \frac{(1-\delta_1)}{(1+\delta_2)} \alpha  \Big)  \right] &\geq 1- 2\exp \Bigg(\frac{-\delta^2\sum_{e\in E_c}x_e}{28}\Bigg)  \\
       \Pr \left[ \frac{|\cM_c|}{|\cM|} \geq (1-\delta) \alpha   \right] &\geq 1- 2\exp \Bigg(\frac{-\delta^2\sum_{e\in E_c}x_e}{28}\Bigg) 
\end{align*} 

Similarly, we know that 
\begin{align*}
   \Pr \left[  |\cM_c| \leq  \frac{(1+\delta)}{2} \beta \sum_{e\in E}x_e \right] \geq \Pr \left[  |\cM_c| \leq  \frac{(1+\delta)}{2} \sum_{e\in E_c}x_e \right] \geq 1- \exp \Bigg(\frac{-\delta^2\sum_{e\in E_c}x_e}{28}\Bigg) 
\end{align*}
Moreover, we know that for any $\delta_1,\delta_2 >0$ such that $\frac{1+\delta_1}{1-\delta_2} \leq 1$ we have: 
\begin{align*}
     \Pr \left[ \Big(  |\cM_c| \leq  \frac{(1+\delta_1)}{2} \beta \sum_{e\in E}x_e  \Big) \cap \Big( {|\cM|} \geq  
     \frac{(1-\delta_2)}{2} \sum_{e \in E}x_e  \Big)    \right]  &\geq 1- \exp \Bigg(\frac{-\delta^2\sum_{e\in E}x_e}{28}\Bigg) - \\ & \quad \exp \Bigg(\frac{-\delta^2\sum_{e\in E_c}x_e}{28}\Bigg)  \\
      \Pr \left[   \frac{|\cM_c|}{|\cM|} \leq  \frac{(1+\delta_1)}{2} \cdot \frac{2}{(1-\delta_2)} \beta   \right] &\geq 1- 2\exp \Bigg(\frac{-\delta^2\sum_{e\in E_c}x_e}{28}\Bigg) \\ 
        \Pr \left[   \frac{|\cM_c|}{|\cM|} \leq  {(1+\delta)} \beta   \right]&\geq  1-2\exp \Bigg(\frac{-\delta^2\sum_{e\in E_c}x_e}{28}\Bigg).
\end{align*}
Therefore, combining the two-sided fairness bounds we get the following, for any $\delta>0$,
\begin{align*}
    \Pr \left[ {(1-\delta)} \alpha \leq  \frac{|\cM_c|}{|\cM|} \leq  {(1+\delta)} \beta \right] \geq 1- f_c(\delta,G)
\end{align*}
where $f_c(\delta,G) =4 \exp \big(  \frac{ -\delta^2\sum_{e\in E_c}x_e}{28} \big).$
\end{proof}
\cref{lemma:delta_prob_fair} proves that \cref{alg:ocrs} returns a $\delta$-$\paf$ matching. This lemma, along with the edge selectability guarantees of the OCRS rounding algorithm as stated in Property~\ref{eqn:rounding_guarantee} completes the proof of \cref{thm:thm_two_sided_fairness}.

\section{Exact fairness for $\beta$-fairness constraints in Bipartite Graphs}
\label{sec:beta-limited}
 In the previous section we showed that our algorithm provides a $1/2$-approximation of the objective, provided we violate 
 the two-sided fairness constraints by multiplicative factors dependent on $\delta$.  Suppose that $\alpha\!=\!0$ and only $\beta$ imposes the fairness constraint, we can argue that by using a simple \emph{LP perturbation} trick, we can remove the $\delta$-violations and make it an \emph{exact} guarantee while still attaining an asymptotic $1/2$-approximation of the objective. More precisely, we satisfy the $\beta$-sided fairness constraint \emph{exactly}. We are also able to provide a better bound on our failure probability. Previously, this was $f_c(\delta,G)$, and it depended on $\sum_{e\in E_c}x_e$.
 We improve this to a new, \emph{smaller} failure probability $f(\epsilon, G)$, which now only depends on $\beta \sum_{e\in E}x_e$ and a tunable parameter $\epsilon$.
 This makes our algorithm more robust to the structure of the input instance. 

If $\alpha = 0$, we modify our algorithm slightly. We begin by solving a perturbed LP where  \(\beta\) in \ref{lp:fairmat} is replaced by \(\beta(1-\epsilon)\) where $0<\epsilon<1$. The perturbed LP requires that any feasible solution must satisfy a stronger $\beta$-fairness constraint, which results in a slightly weaker solution to our problem (see \cref{lemma:objective_perturbed}). Let $\bm{x}$ denote the optimal solution to the perturbed LP i.e., \ref{lp:fairmat} with $\tbeta$. Since our algorithm still executes an OCRS on each $u \in U$ concurrently, \cref{eqn:ocrs_guarantee} ensures that the matching $\scr{M}$ 
has expected weight $\sum_{e \in E} w_e x_e/2$. We now relate $\sum_{e \in E} w_e x_e$
to the optimal solution to \ref{lp:fairmat} when the fairness constraint is $\beta$.
This will allow us to lower bound the expected weight of our matching in terms of $\OPT$ in \Cref{cor:expected_weight}.
(Recall that $\OPT$ is the weight of the optimal matching in the original problem with fairness constraint $\beta.$)

\begin{lemma}\label{lemma:objective_perturbed}
For any $0<\epsilon <1$, the optimal LP solution to \ref{lp:fairmat} with $\tbeta\!=\!(1-\epsilon) \beta$ is at least $ \sum_{e\in E}w_ex_e \geq (1-\epsilon) \sum_{e \in E} w_e y_e$, where $\bm{y} = (y_e)_{e \in E}$
is an optimal LP solution to \ref{lp:fairmat} with $\beta$.
\end{lemma}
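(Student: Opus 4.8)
The plan is to prove the inequality by producing one explicit feasible point for the perturbed program --- namely \ref{lp:fairmat} with $\beta$ replaced by $\tbeta = (1-\epsilon)\beta$ in \eqref{eq:coloring_constraint}, and with $\alpha = 0$ --- whose objective value is at least $(1-\epsilon)\sum_{e\in E}w_e y_e$. Since $\bm{x}$ is by definition an optimal solution of that program, exhibiting any such point immediately yields $\sum_{e \in E} w_e x_e \ge (1-\epsilon)\sum_{e\in E}w_e y_e$. The natural candidate is the uniformly scaled-down $\beta$-optimum, $\bm{z} := (1-\epsilon)\bm{y}$, whose objective value is exactly $(1-\epsilon)\sum_{e\in E}w_e y_e$.

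It then remains to check that $\bm z$ lies in the perturbed polytope. Nonnegativity is immediate from $0 < 1-\epsilon < 1$ and $\bm y \ge 0$, and the matching constraints~\eqref{eq:left} hold because $\sum_{e\in\delta(v)} z_e = (1-\epsilon)\sum_{e\in\delta(v)} y_e \le 1-\epsilon \le 1$ for every vertex $v$. The step that requires care, and is really the crux of the lemma, is the color constraint $\sum_{e\in E_c} z_e \le \tbeta \sum_{e\in E} z_e$ for each $c \in [\ell]$: one wants to combine feasibility of $\bm y$ for the $\beta$-program (i.e.\ $\sum_{e\in E_c} y_e \le \beta\sum_{e\in E} y_e$) with the fact that scaling $\bm y$ uniformly multiplies both the color sum $\sum_{e\in E_c}(\cdot)$ and the grand total $\sum_{e\in E}(\cdot)$ by the same factor $(1-\epsilon)$. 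I expect this to be the main obstacle, since passing from $\beta$ to the strictly smaller $\tbeta$ tightens a \emph{ratio} constraint, and uniform rescaling by itself leaves such ratios unchanged; handling it may require instead scaling down only those colors whose $\bm y$-fraction exceeds $\tbeta$, and then arguing that the resulting loss in $\sum_e w_e(\cdot)$ is at most an $\epsilon$-fraction (exploiting optimality of $\bm y$ and/or the slack available in under-represented color classes).

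Once feasibility of $\bm z$ (or of such a rebalanced variant) has been established, optimality of $\bm x$ for the perturbed LP gives $\sum_{e\in E}w_e x_e \ge \sum_{e\in E}w_e z_e = (1-\epsilon)\sum_{e\in E}w_e y_e$, which is the claimed bound. Chaining this with \cref{lem:relaxation}, which gives $\OPT \le \sum_{e\in E}w_e y_e$, is exactly what is needed to lower bound the algorithm's expected weight by $\tfrac12(1-\epsilon)\OPT$ in the $\alpha = 0$ regime.
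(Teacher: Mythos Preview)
You put your finger on exactly the right obstacle: the color constraint \eqref{eq:coloring_constraint} is a \emph{ratio} constraint, so uniformly scaling $\bm y$ leaves every ratio $\sum_{e\in E_c} y_e\big/\sum_{e\in E} y_e$ unchanged and cannot turn a $\beta$-feasible point into a $\tilde\beta$-feasible one. The paper's own argument does not deal with this either: it only records that every $\tilde\beta$-feasible point is $\beta$-feasible---an observation that yields the \emph{reverse} inequality $\sum_e w_e x_e \le \sum_e w_e y_e$---and then simply asserts the claimed bound without further justification. So you are already being more careful than the paper.

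Unfortunately your proposed repair (scale down only the over-represented colors and argue the weight loss is at most an $\epsilon$-fraction) cannot be completed, because the lemma as stated is false. Take $\ell=3$, $\alpha=0$, $\beta=\tfrac12$, $\epsilon=\tfrac15$ (so $\tilde\beta=\tfrac25>1/\ell$), and four pairwise vertex-disjoint edges $e_1,e_2\in E_1$, $e_3\in E_2$, $e_4\in E_3$ with $w_{e_1}=w_{e_2}=1$ and $w_{e_3}=w_{e_4}=0$. The $\beta$-LP optimum is $\bm y=(1,1,1,1)$ with value $2$. In the $\tilde\beta$-LP, the constraint $x_{e_1}+x_{e_2}\le \tfrac25\sum_i x_{e_i}$ together with $x_{e_3},x_{e_4}\le 1$ forces $x_{e_1}+x_{e_2}\le \tfrac43$, so $\sum_e w_e x_e=\tfrac43<\tfrac85=(1-\epsilon)\cdot 2$. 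More generally, when the light color classes are already saturated by their matching constraints at the $\beta$-optimum, the heavy color's achievable mass scales like $\tilde\beta/(1-\tilde\beta)$ rather than like $\tilde\beta$, and one checks that $\frac{\tilde\beta(1-\beta)}{\beta(1-\tilde\beta)}<1-\epsilon$ whenever $0<\tilde\beta<\beta<1$. Thus neither your outline nor the paper's proof can be salvaged without an additional hypothesis (for instance, enough slack in the matching constraints of the under-represented color classes), and the downstream $\tfrac12(1-\epsilon)\,\OPT$ guarantee of \Cref{cor:expected_weight} inherits the same gap.
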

\begin{proof} 
Any feasible solution to \ref{lp:fairmat} with \(\alpha=0\) and \(\beta(1-\epsilon) < 1\) is also a feasible solution when \(\alpha=0\) and \(\beta < 1\). 
Let $\bm{x}=(x_e)_{e\in E}$ and $\bm{y}:= (y_e)_{e\in E}$ are optimal fractional solutions to \ref{lp:fairmat} with $\tbeta= \beta(1-\epsilon)$  and $\tbeta=\beta$ respectively. Therefore,
\begin{align}
    \sum_{e \in E}w_ex_e \geq (1-\epsilon)  \sum_{e \in E}w_ey_e.
\end{align}
\end{proof} 
\begin{corollary} \label{cor:expected_weight}
    \cref{alg:ocrs} returns a matching $\cM$ with an expected weight of at least $\frac{1}{2}(1-\epsilon)\OPT.$
\end{corollary}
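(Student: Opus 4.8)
The plan is to chain together three facts already established in the excerpt. First, I would record that the matching $\cM$ returned by \cref{alg:ocrs} satisfies $\Pr[e \in \cM] = x_e/2$ for every edge $e \in E$, i.e. Property~\ref{eqn:rounding_guarantee}. This holds because each $v \in V$ proposes to $u$ with probability exactly $x_{u,v}$ (see \eqref{eqn:proposal_draw}), the proposals $(F_v)_{v \in N(u)}$ are mutually independent, and the per-vertex scheme of \cite{ezra2022prophet} run concurrently on each $u \in U$ is $1/2$-selectable, i.e. satisfies \eqref{eqn:ocrs_guarantee} — this is precisely what the choice of Bernoulli parameters $a_{u,v_t}$ in the algorithm realizes. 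Multiplying the two marginals gives $\Pr[(u,v)\in\cM] = \Pr[F_v=u]\cdot\Pr[(u,v)\text{ output}\mid F_v = u] = x_{u,v}/2$.

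Second, by linearity of expectation,
\[
\bE\Big[\sum_{e \in \cM} w_e\Big] \;=\; \sum_{e \in E} w_e \Pr[e \in \cM] \;=\; \frac12 \sum_{e \in E} w_e x_e,
\]
where $\bm{x}$ is the optimal solution to the \emph{perturbed} LP \ref{lp:fairmat} with $\tbeta = (1-\epsilon)\beta$ that the algorithm solves in the $\alpha=0$ regime. Third, I would apply \cref{lemma:objective_perturbed} to obtain $\sum_{e\in E} w_e x_e \ge (1-\epsilon)\sum_{e \in E} w_e y_e$, where $\bm{y}$ is an optimal solution to \ref{lp:fairmat} with $\beta$, and then \cref{lem:relaxation} to obtain $\sum_{e\in E} w_e y_e \ge \OPT$. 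Combining these,
\[
\bE\Big[\sum_{e \in \cM} w_e\Big] \;=\; \frac12 \sum_{e \in E} w_e x_e \;\ge\; \frac12 (1-\epsilon) \sum_{e \in E} w_e y_e \;\ge\; \frac12 (1-\epsilon)\,\OPT,
\]
which is the claimed bound.

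The argument is essentially bookkeeping: the only genuinely nontrivial ingredient — that concurrently running a separate OCRS for each $u \in U$ still yields the per-edge marginal $\Pr[e\in\cM]=x_e/2$ for all edges simultaneously — is already handled in the discussion surrounding \eqref{eqn:ocrs_guarantee} and Property~\ref{eqn:rounding_guarantee}, so it can be cited rather than re-derived. The main point to be careful about is to quote the expected-weight identity with respect to the optimal solution of the \emph{perturbed} LP (the one the algorithm actually solves when $\alpha = 0$), and only then pass to $\OPT$ via \cref{lemma:objective_perturbed} and \cref{lem:relaxation}; conflating the perturbed and unperturbed LPs would be the one place the chain could break.
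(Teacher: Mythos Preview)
Your proposal is correct and follows essentially the same approach as the paper: the paper states the corollary immediately after \cref{lemma:objective_perturbed}, having already noted in the surrounding text that the OCRS guarantee \eqref{eqn:ocrs_guarantee} gives expected weight $\tfrac12\sum_{e\in E} w_e x_e$, so the corollary follows by chaining this with \cref{lemma:objective_perturbed} and \cref{lem:relaxation} exactly as you do. Your care in distinguishing the perturbed LP solution $\bm{x}$ from the unperturbed solution $\bm{y}$ is precisely the point of the argument.
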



We next argue that the matching $\cM$ we output satisfies the $\beta$-fairness constraints exactly with a probability of at least \(1 - f(\epsilon, G)\) where $f(\epsilon,G) = 2\exp \Big(- \frac{\epsilon^2\beta \sum_{e\in E}x_e}{28} \Big) $. Note that the failure probability of our algorithm now only depends on the parameter $\epsilon$ and $\sum_{e\in E}x_e$ (see \cref{lemma:beta_guarantee}). It is also reduced by a factor of $2$, since we can use a one-sided version of Freedman's inequality when $\alpha =0$ and $\beta > 0$. Here, \(\sum_{e\in E}x_e\) refers to the size of optimal fractional matching of \ref{lp:fairmat} with \(\tbeta = \beta(1-\epsilon)\). 




\begin{lemma}  \label{lemma:beta_guarantee}
    Given an optimal solution $\bm{x}$ to \ref{lp:fairmat} with $\tbeta= (1-\epsilon)\beta$, then we can recover a solution that satisfies the constraints exactly i.e., no violations in $\beta$-fairness constraints. Formally, we can say that for any color $c$, and $0<\epsilon<1$, the size of the matching restricted to color class $c$ is at most $\beta |\cM|$ with high probability,
    \begin{align*}
   \Pr \left[ \frac{|\cM_c|}{|\cM|} \leq \beta \right]  \geq 1- 2\exp\Big( \frac{-{\epsilon^2}\beta\sum_{e\in E}x_e}{28} \Big).
\end{align*}
\end{lemma}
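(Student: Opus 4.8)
The plan is to reduce the bad event $\{|\cM_c| > \beta|\cM|\}$ to two \emph{one-sided} martingale deviations — one pushing $|\cM_c|$ above its mean, one pushing $|\cM|$ below its mean — and to control each with Freedman's inequality (\Cref{thm:freedman}), using the slack that the LP perturbation buys. First I would record that slack: since $\bm{x}$ is feasible for \ref{lp:fairmat} with $\tbeta = (1-\eps)\beta$, constraint \eqref{eq:coloring_constraint} gives $\sum_{e\in E_c}x_e \le (1-\eps)\beta\sum_{e\in E}x_e$, and since Property~\ref{eqn:rounding_guarantee} makes $\bE[|\cM_c|] = \tfrac12\sum_{e\in E_c}x_e$ and $\bE[|\cM|] = \tfrac12\sum_{e\in E}x_e$, this yields $\bE[|\cM_c|] \le (1-\eps)\,\beta\,\bE[|\cM|]$. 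Hence the gap $g := \beta\,\bE[|\cM|] - \bE[|\cM_c|]$ is at least $\tfrac{\eps\beta}{2}\sum_{e\in E}x_e$, i.e.\ of order $\eps\beta\sum_{e\in E}x_e$, and also at most $\beta\,\bE[|\cM|]$.

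Next I would split $g = g_1 + g_2$ with $g_1,g_2 = \Theta(g)$, and observe that if $|\cM_c| \le \bE[|\cM_c|] + g_1$ and $|\cM| \ge \bE[|\cM|] - g_2/\beta$ both hold, then $\beta|\cM| \ge \beta\bE[|\cM|] - g_2 = \bE[|\cM_c|] + g_1 \ge |\cM_c|$; so a union bound reduces the claim to bounding $\Pr[|\cM_c| - \bE[|\cM_c|] > g_1]$ and $\Pr[\bE[|\cM|] - |\cM| > g_2/\beta]$. Both quantities are terminal values of Doob martingales: $|\cM_c|$ is $M_n$ from \Cref{sec:alg-analysis}, and $|\cM|$ is $M_n$ for the trivial colouring in which every edge shares one colour, so that \Cref{lem:one_step_worst_case}, \Cref{lemma:guts}, \Cref{lemma:bt} and \Cref{eq:var_t_sum} all apply, giving one-step change $\le 2$ and summed conditional variances at most $12\bE[|\cM_c|] \le 6(1-\eps)\beta\sum_{e\in E}x_e$ and $12\bE[|\cM|] = 6\sum_{e\in E}x_e$ respectively. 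I would then apply the one-sided form of Freedman with $\lambda = g_1 = \Theta(\eps\beta\sum_{e\in E}x_e)$ for the first tail, obtaining $\exp(-\Theta(\eps^2\beta\sum_{e\in E}x_e))$, and with $\lambda = g_2/\beta = \Theta(\eps\sum_{e\in E}x_e)$ for the second, obtaining $\exp(-\Theta(\eps^2\sum_{e\in E}x_e))$, which is at most the first since $\beta \le 1$. Summing the two, and using a one-sided rather than a two-sided Freedman bound inside each, is precisely what turns the $4$ in $f_c(\delta,G)$ into the $2$ here.

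The step I expect to be the main obstacle is pinning down the constant $28$ rather than an unspecified $\Theta(1)$: this forces one to use the one-sided version of Freedman's inequality (shaving a factor $2$ inside each tail relative to the packaged bound of \Cref{thm:concentration_mc}), and to choose the split $g = g_1 + g_2$ so that the two resulting exponents are equalized and jointly optimized — in particular keeping $g_2/\beta$ a $\Theta(\eps)$-fraction of $\bE[|\cM|]$ so that its exponent $\Theta(\eps^2\sum_{e\in E}x_e)$ never becomes the bottleneck. A minor point to dispatch along the way is the degenerate case $|\cM| = 0$: then $|\cM_c| = 0$ and $|\cM_c| \le \beta|\cM|$ holds vacuously, and in any case this event is already absorbed into the lower-tail event on $|\cM|$.
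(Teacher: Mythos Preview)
Your proposal is correct and follows essentially the same strategy as the paper: exploit the $\eps$-slack from the perturbed LP, decompose the bad event into an upper-tail deviation of $|\cM_c|$ and a lower-tail deviation of $|\cM|$, bound each with a one-sided concentration inequality, and union-bound. The paper packages the upper tail on $|\cM_c|$ as a separate supermartingale lemma, centering at the upper bound $\mu_H = \tfrac{\tbeta}{2}\sum_{e\in E} x_e$ rather than at $M_0$; this is equivalent to your additive-gap formulation with $g_1$ absorbing essentially all of $g$. The one substantive difference is the lower tail on $|\cM|$: instead of reusing Freedman on the Doob martingale of $|\cM|$ (your ``trivial colouring'' reduction, which is perfectly valid), the paper invokes a Chernoff-type bound from \citet{panconesi1997randomized}, obtaining exponent $\tfrac{\delta_2^2\sum_{e\in E} x_e}{2}$ rather than the $\Theta\!\big(\tfrac{\delta_2^2\sum_{e\in E} x_e}{28}\big)$ that Freedman would give. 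This makes the $|\cM|$ tail automatically dominated by the $|\cM_c|$ tail without any careful split optimization, and is how the final constant lands at $28$; your diagnosis that pinning down that constant is the delicate step is exactly right.
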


\begin{proof}
    Recall that $\bm{x}$ denotes the optimal solution to \ref{lp:fairmat} with $\Tilde{\beta} = (1-\epsilon)\beta$ and $\cM$ is the matching returned after the rounding $\bm{x}$ by \cref{alg:ocrs}. Therefore we know that 
    \[\bE[|\cM_c|] \leq \sum_{e\in E_c}x_e \leq \Tilde{\beta}\sum_{e\in E}x_e = (1-\epsilon)\beta \sum_{e\in E}x_e.\]  
Using \cref{lemma:muh}, we can say that  
    \begin{align*}
        \Pr \left[ |\cM_c| \geq (1+\delta_1) (1-\epsilon)\beta \sum_{e \in E}x_e \right] \leq \exp \Bigg( -\frac{\tbeta \delta_1^2 \sum_{e\in E}x_e }{  28} \Bigg).
    \end{align*}
Additionally, we have that $|\cM| \leq  (1-\delta_2)\sum_{e\in E}x_e$ with probability $\exp \big( \frac{-\delta_2^2 \sum_{e\in E}x_e}{2} \big)$ which follows from Chernoff-like bounds of \citep{panconesi1997randomized} and the facts $\bE[Z_{u,v}] \leq \Pr[F_{v}=u]=x_{u,v}$ for any $e = (u,v)\in E$.
Therefore, we have the desired exact fairness as follows,
\begin{equation*}
        \Pr \left[ \frac{|\cM_c|}{|\cM|} \geq \frac{(1-\epsilon)(1+\delta_1)}{(1-\delta_2)}{\beta}  \right] \leq 2\exp\Bigg( -\frac{\delta_1^2 (1-\epsilon)}{28}\beta \sum_{e\in E}x_e  \Bigg) \notag
\end{equation*}
We know that for any $\epsilon \geq \frac{\delta_1+\delta_2}{1+\delta_1},$ the term $\frac{(1-\epsilon)(1+\delta_1)}{(1-\delta_2)}\leq  1$  implying that
\begin{equation*}
   \Pr \left[ \frac{|\cM_c|}{|\cM|} \leq \beta \right]  \geq 1- 2\exp\Bigg( -\frac{\delta_1^2}{28(1-\epsilon)}\beta \sum_{e\in E}x_e  \Bigg) \geq 1- 2\exp\Bigg( -{\epsilon^2}\frac{\beta}{28}\sum_{e\in E}x_e  \Bigg) 
\end{equation*}
The last inequality assumes $\delta_1\geq (1-\epsilon)\epsilon.$
\end{proof}

Therefore we can conclude that the probability with which our algorithm fails to satisfy $\beta$-fairness constraints is now tightened from $f_c(\delta,G)$ which depends on $\sum_{e\in E_c}x_e$ of individual color classes, to a global quantity $\beta \sum_{e\in E}x_e$ in $f(\epsilon,G).$ 
This concludes the proof of \cref{thm:one_sided_fairness}. 

\subsection{Brute Force Algorithm for Small Instances}\label{sec:brute_force_alg}
Given an instance of \pfm{}, recall that the performance of \Cref{alg:ocrs}
improves the larger $\sum_{e \in E} \beta x_e$ is, where $\bm{x} = (x_e)_{e \in E}$
is an optimal solution to \ref{lp:fairmat} with $\tilde{\beta} = (1-\eps) \beta$.

We now handle the case where $\sum_{e\in E} \beta x_e \le C$ by running a simple brute-force algorithm. In order to see how to do this,
let $\bm{y}=(y_e)_{e \in E}$ be an optimal solution to \ref{lp:fairmat} with $\beta$. (We don't actually use $\bm{y}$ in our algorithm, it is just for the analysis). Define $L = \min_{e \in E} w_e $ and $U = \max_{e \in E} w_e$. Then,
$$
U \sum_{e \in E} x_e \ge \sum_{e \in E} w_e x_e \ge (1 - \eps) \sum_{e \in E} w_e y_e \ge (1 - \eps) L \sum_{e \in E} y_e,
$$ where the second inequality from the left uses \Cref{lemma:objective_perturbed}, and the rest use the trivial upper and lower bounds of $U$ and $L$. We thus have that
\begin{equation} \label{eqn:lp_comp}
\frac{U}{L(1- \eps)} \sum_{e \in E} x_e \ge  \sum_{e \in E} y_e.
\end{equation}
Thus, if $\sum_{e \in E} \beta x_e \le C$, then 
$\sum_{e \in E} y_e \le \frac{U}{L(1-\eps)} \frac{C}{\beta}$ via \eqref{eqn:lp_comp}.
Let us now suppose that $M_{OPT}$ is a maximum weighted matching with respect
to fairness constraint $\beta$. Then,
$L |M_{\OPT}| \le \sum_{e \in M_{\OPT}} w_e \le \sum_{e \in E} w_e y_e \le U\sum_{e \in E} y_e \le U \frac{U}{\beta L} \frac{C}{1-\eps}$,
where the last inequality uses the assumed inequality $\sum_{e \in E} y_e \le \frac{U}{\beta L} \frac{C}{1-\eps}$. It follows that
\begin{equation} \label{eqn:matching_upper_bound}
|M_{OPT}| \le \frac{U^2}{L^2} \frac{C}{\beta (1 -\eps)}.
\end{equation}
Thus, if we run a brute force algorithm where we check matchings which are fair with respect
to $\beta$, and which contain at most $\frac{U^2}{L^2} \frac{C}{\beta (1 -\eps)}$ edges,
then \eqref{eqn:matching_upper_bound} ensures that we'll return an optimal matching.

\section{Conclusion and Future work}
In contrast to our current algorithm which processes vertices in a fixed or arbitrary order, Random Order Contention Resolution Scheme (RCRS) from \cite{adamczyk2018random} provides better selection guarantees where vertices are processed in a random order. More precisely, RCRS achieves an approximation ratio of \(1 - \frac{1}{e}\). However, it appears to be much more challenging to compute the one-step variance of the martingale at each time \(t\), making it difficult to derive a \emph{useful bound} for the sum of variances \(\sum_{t \in [n]} \var(|\Delta M_t| \mid \cH_{t-1})\). 

Thus, it remains an interesting open problem to determine if we can overcome this obstacle to improve the approximation factor from \(1/2\) to \(1 - \frac{1}{e}\). Moreover, adapting the analysis to general graphs is also an interesting future direction.

\section*{Acknowledgments}
Sharmila Duppala, Nathaniel Grammel, Juan Luque, and Aravind Srinivasan were supported in part by NSF award number CCF-1918749.

\bibliographystyle{abbrvnat}
\bibliography{references}

\begin{thebibliography}{41}
\providecommand{\natexlab}[1]{#1}
\providecommand{\url}[1]{\texttt{#1}}
\expandafter\ifx\csname urlstyle\endcsname\relax
  \providecommand{\doi}[1]{doi: #1}\else
  \providecommand{\doi}{doi: \begingroup \urlstyle{rm}\Url}\fi

\bibitem[Adamczyk and W{\l}odarczyk(2018)]{adamczyk2018random}
M.~Adamczyk and M.~W{\l}odarczyk.
\newblock Random order contention resolution schemes.
\newblock In \emph{2018 IEEE 59th Annual Symposium on Foundations of Computer Science (FOCS)}, pages 790--801. IEEE, 2018.

\bibitem[Bandyapadhyay et~al.(2023)Bandyapadhyay, Fomin, Inamdar, and Simonov]{bandyapadhyay2023proportionally}
S.~Bandyapadhyay, F.~V. Fomin, T.~Inamdar, and K.~Simonov.
\newblock Proportionally fair matching with multiple groups.
\newblock \emph{arXiv preprint arXiv:2301.03862}, 2023.

\bibitem[Belongie et~al.(2002)Belongie, Malik, and Puzicha]{belongie2002shape}
S.~Belongie, J.~Malik, and J.~Puzicha.
\newblock Shape matching and object recognition using shape contexts.
\newblock \emph{IEEE transactions on pattern analysis and machine intelligence}, 24\penalty0 (4):\penalty0 509--522, 2002.

\bibitem[Bera et~al.(2019)Bera, Chakrabarty, Flores, and Negahbani]{bera2019fair}
S.~Bera, D.~Chakrabarty, N.~Flores, and M.~Negahbani.
\newblock Fair algorithms for clustering.
\newblock \emph{Advances in Neural Information Processing Systems}, 32, 2019.

\bibitem[Bohman and Keevash(2019)]{bohman2019dynamicconcentrationtrianglefreeprocess}
T.~Bohman and P.~Keevash.
\newblock Dynamic concentration of the triangle-free process, 2019.
\newblock URL \url{https://arxiv.org/abs/1302.5963}.

\bibitem[Chekuri et~al.(2014)Chekuri, Vondr{\'a}k, and Zenklusen]{chekuri2014submodular}
C.~Chekuri, J.~Vondr{\'a}k, and R.~Zenklusen.
\newblock Submodular function maximization via the multilinear relaxation and contention resolution schemes.
\newblock \emph{SIAM Journal on Computing}, 43\penalty0 (6):\penalty0 1831--1879, 2014.

\bibitem[Chierichetti et~al.(2019)Chierichetti, Kumar, Lattanzi, and Vassilvtiskii]{chierichetti2019matroids}
F.~Chierichetti, R.~Kumar, S.~Lattanzi, and S.~Vassilvtiskii.
\newblock Matroids, matchings, and fairness.
\newblock In \emph{The 22nd international conference on artificial intelligence and statistics}, pages 2212--2220. PMLR, 2019.

\bibitem[Choromanski et~al.(2013)Choromanski, Jebara, and Tang]{choromanski2013adaptive}
K.~M. Choromanski, T.~Jebara, and K.~Tang.
\newblock Adaptive anonymity via $ b $-matching.
\newblock \emph{Advances in Neural Information Processing Systems}, 26, 2013.

\bibitem[Confessore(2018)]{nytCambridgeMZ}
N.~Confessore.
\newblock {Cambridge Analytica and Facebook: The Scandal and the Fallout So Far}.
\newblock \emph{The New York Times}, 2018.
\newblock URL \url{https://www.nytimes.com/2018/04/04/us/politics/cambridge-analytica-scandal-fallout.html}.

\bibitem[Dickerson et~al.(2018)Dickerson, Sankararaman, Srinivasan, and Xu]{dickerson2018ridesharing}
J.~Dickerson, K.~Sankararaman, A.~Srinivasan, and P.~Xu.
\newblock Allocation problems in ride-sharing platforms: Online matching with offline reusable resources.
\newblock \emph{Proceedings of the AAAI Conference on Artificial Intelligence}, 32\penalty0 (1), Apr. 2018.
\newblock \doi{10.1609/aaai.v32i1.11477}.
\newblock URL \url{https://ojs.aaai.org/index.php/AAAI/article/view/11477}.

\bibitem[Dickerson et~al.(2013)Dickerson, Procaccia, and Sandholm]{dickerson2013failure}
J.~P. Dickerson, A.~D. Procaccia, and T.~Sandholm.
\newblock Failure-aware kidney exchange.
\newblock In \emph{Proceedings of the fourteenth ACM conference on Electronic commerce}, pages 323--340, 2013.

\bibitem[Dickerson et~al.(2019)Dickerson, Sankararaman, Srinivasan, and Xu]{dickerson2019balancing}
J.~P. Dickerson, K.~A. Sankararaman, A.~Srinivasan, and P.~Xu.
\newblock Balancing relevance and diversity in online bipartite matching via submodularity.
\newblock In \emph{Proceedings of the AAAI Conference on Artificial Intelligence}, volume~33, pages 1877--1884, 2019.

\bibitem[Dubhashi and Ranjan(1996)]{Dubhashi_Ranjan_1996}
D.~P. Dubhashi and D.~Ranjan.
\newblock Balls and bins: A study in negative dependence.
\newblock \emph{BRICS Report Series}, 3\penalty0 (25), Jan. 1996.
\newblock \doi{10.7146/brics.v3i25.20006}.
\newblock URL \url{https://tidsskrift.dk/brics/article/view/20006}.

\bibitem[Duppala et~al.(2023)Duppala, Luque, Dickerson, and Srinivasan]{duppala2023packing}
S.~Duppala, J.~Luque, J.~Dickerson, and A.~Srinivasan.
\newblock Group fairness in set packing problems.
\newblock In \emph{Proceedings of the Thirty-Second International Joint Conference on Artificial Intelligence}, IJCAI '23, 2023.
\newblock ISBN 978-1-956792-03-4.
\newblock \doi{10.24963/ijcai.2023/44}.
\newblock URL \url{https://doi.org/10.24963/ijcai.2023/44}.

\bibitem[Esmaeili et~al.(2023)Esmaeili, Duppala, Cheng, Nanda, Srinivasan, and Dickerson]{esmaeili2023rawlsian}
S.~Esmaeili, S.~Duppala, D.~Cheng, V.~Nanda, A.~Srinivasan, and J.~P. Dickerson.
\newblock Rawlsian fairness in online bipartite matching: Two-sided, group, and individual.
\newblock In \emph{Proceedings of the AAAI Conference on Artificial Intelligence}, volume~37, pages 5624--5632, 2023.

\bibitem[Ezra et~al.(2022)Ezra, Feldman, Gravin, and Tang]{ezra2022prophet}
T.~Ezra, M.~Feldman, N.~Gravin, and Z.~G. Tang.
\newblock Prophet matching with general arrivals.
\newblock \emph{Mathematics of Operations Research}, 47\penalty0 (2):\penalty0 878--898, 2022.

\bibitem[Farhadi et~al.(2022)Farhadi, Gilbert, and Hajiaghayi]{farhadi2022generalized}
A.~Farhadi, J.~Gilbert, and M.~Hajiaghayi.
\newblock Generalized stochastic matching.
\newblock In \emph{Proceedings of the AAAI Conference on Artificial Intelligence}, volume~36, pages 10008--10015, 2022.

\bibitem[Farnadi et~al.(2021)Farnadi, St-Arnaud, Babaki, and Carvalho]{farnadi2021kidney}
G.~Farnadi, W.~St-Arnaud, B.~Babaki, and M.~Carvalho.
\newblock Individual fairness in kidney exchange programs.
\newblock \emph{Proceedings of the AAAI Conference on Artificial Intelligence}, 35\penalty0 (13):\penalty0 11496--11505, May 2021.
\newblock \doi{10.1609/aaai.v35i13.17369}.
\newblock URL \url{https://ojs.aaai.org/index.php/AAAI/article/view/17369}.

\bibitem[Feldman et~al.(2015)Feldman, Friedler, Moeller, Scheidegger, and Venkatasubramanian]{feldman2015disparate}
M.~Feldman, S.~A. Friedler, J.~Moeller, C.~Scheidegger, and S.~Venkatasubramanian.
\newblock Certifying and removing disparate impact.
\newblock In \emph{Proceedings of the 21th ACM SIGKDD International Conference on Knowledge Discovery and Data Mining}, KDD '15, page 259–268, New York, NY, USA, 2015. Association for Computing Machinery.
\newblock ISBN 9781450336642.
\newblock \doi{10.1145/2783258.2783311}.
\newblock URL \url{https://doi.org/10.1145/2783258.2783311}.

\bibitem[Feldman et~al.(2021)Feldman, Svensson, and Zenklusen]{feldman2021online}
M.~Feldman, O.~Svensson, and R.~Zenklusen.
\newblock Online contention resolution schemes with applications to bayesian selection problems.
\newblock \emph{SIAM Journal on Computing}, 50\penalty0 (2):\penalty0 255--300, 2021.

\bibitem[Freedman(1975)]{freedman1975tail}
D.~A. Freedman.
\newblock On tail probabilities for martingales.
\newblock \emph{the Annals of Probability}, pages 100--118, 1975.

\bibitem[Gandhi et~al.(2006)Gandhi, Khuller, Parthasarathy, and Srinivasan]{gandhi2006dependent}
R.~Gandhi, S.~Khuller, S.~Parthasarathy, and A.~Srinivasan.
\newblock Dependent rounding and its applications to approximation algorithms.
\newblock \emph{Journal of the ACM (JACM)}, 53\penalty0 (3):\penalty0 324--360, 2006.

\bibitem[Garc{\'\i}a-Soriano and Bonchi(2020)]{garcia2020fair}
D.~Garc{\'\i}a-Soriano and F.~Bonchi.
\newblock Fair-by-design matching.
\newblock \emph{Data Mining and Knowledge Discovery}, 34:\penalty0 1291--1335, 2020.

\bibitem[Hikima et~al.(2021)Hikima, Akagi, Kim, Kohjima, Kurashima, and Toda]{hikima2021Matching}
Y.~Hikima, Y.~Akagi, H.~Kim, M.~Kohjima, T.~Kurashima, and H.~Toda.
\newblock Integrated optimization of bipartite matching and its stochastic behavior: New formulation and approximation algorithm via min-cost flow optimization.
\newblock \emph{Proceedings of the AAAI Conference on Artificial Intelligence}, 35\penalty0 (5):\penalty0 3796--3805, May 2021.
\newblock \doi{10.1609/aaai.v35i5.16497}.
\newblock URL \url{https://ojs.aaai.org/index.php/AAAI/article/view/16497}.

\bibitem[Ho and Vaughan(2021)]{ho2021Online}
C.-J. Ho and J.~Vaughan.
\newblock Online task assignment in crowdsourcing markets.
\newblock \emph{Proceedings of the AAAI Conference on Artificial Intelligence}, 26\penalty0 (1):\penalty0 45--51, Sep. 2021.
\newblock \doi{10.1609/aaai.v26i1.8120}.
\newblock URL \url{https://ojs.aaai.org/index.php/AAAI/article/view/8120}.

\bibitem[Huang and Jebara(2007)]{huang2007loopy}
B.~Huang and T.~Jebara.
\newblock Loopy belief propagation for bipartite maximum weight b-matching.
\newblock In \emph{Artificial Intelligence and Statistics}, pages 195--202. PMLR, 2007.

\bibitem[Huang and Jebara(2011)]{huang2011fast}
B.~Huang and T.~Jebara.
\newblock Fast $ b $-matching via sufficient selection belief propagation.
\newblock In \emph{Proceedings of the Fourteenth International Conference on Artificial Intelligence and Statistics}, pages 361--369. JMLR Workshop and Conference Proceedings, 2011.

\bibitem[Impagliazzo and Paturi(1999)]{impagliazzoETH}
R.~Impagliazzo and R.~Paturi.
\newblock Complexity of k-sat.
\newblock In \emph{Proceedings. Fourteenth Annual IEEE Conference on Computational Complexity (Formerly: Structure in Complexity Theory Conference) (Cat.No.99CB36317)}, pages 237--240, 1999.
\newblock \doi{10.1109/CCC.1999.766282}.

\bibitem[Jebara et~al.(2009)Jebara, Wang, and Chang]{jebara2009graph}
T.~Jebara, J.~Wang, and S.-F. Chang.
\newblock Graph construction and b-matching for semi-supervised learning.
\newblock In \emph{Proceedings of the 26th annual international conference on machine learning}, pages 441--448, 2009.

\bibitem[Kleindessner et~al.(2019)Kleindessner, Samadi, Awasthi, and Morgenstern]{kleindessner2019guarantees}
M.~Kleindessner, S.~Samadi, P.~Awasthi, and J.~Morgenstern.
\newblock Guarantees for spectral clustering with fairness constraints.
\newblock In \emph{International Conference on Machine Learning}, pages 3458--3467. PMLR, 2019.

\bibitem[Lewis and Hilder(2018)]{guardianCambridgeAnalytica}
P.~Lewis and P.~Hilder.
\newblock {Leaked: Cambridge Analytica's blueprint for Trump victory}.
\newblock \emph{The Guardian}, 2018.
\newblock URL \url{https://www.theguardian.com/uk-news/2018/mar/23/leaked-cambridge-analyticas-blueprint-for-trump-victory}.

\bibitem[McElfresh et~al.(2019)McElfresh, Bidkhori, and Dickerson]{mcelfresh2019kidney}
D.~C. McElfresh, H.~Bidkhori, and J.~P. Dickerson.
\newblock Scalable robust kidney exchange.
\newblock \emph{Proceedings of the AAAI Conference on Artificial Intelligence}, 33\penalty0 (01):\penalty0 1077--1084, Jul. 2019.
\newblock \doi{10.1609/aaai.v33i01.33011077}.
\newblock URL \url{https://ojs.aaai.org/index.php/AAAI/article/view/3899}.

\bibitem[Mehta(2013)]{mehta2013OnlineMatching}
A.~Mehta.
\newblock Online matching and ad allocation.
\newblock \emph{Foundations and Trends in Theoretical Computer Science}, 8 (4):\penalty0 265--368, 2013.
\newblock URL \url{http://dx.doi.org/10.1561/0400000057}.

\bibitem[Mehta et~al.(2007)Mehta, Saberi, Vazirani, and Vazirani]{mehta2007Adwords}
A.~Mehta, A.~Saberi, U.~Vazirani, and V.~Vazirani.
\newblock Adwords and generalized online matching.
\newblock \emph{Journal of the ACM}, 54, no. 5, 2007.
\newblock URL \url{http://doi.acm.org/10.1145/1284320.1284321}.

\bibitem[Nanda et~al.(2020)Nanda, Xu, Sankararaman, Dickerson, and Srinivasan]{nanda2020rideshare}
V.~Nanda, P.~Xu, K.~A. Sankararaman, J.~Dickerson, and A.~Srinivasan.
\newblock Balancing the tradeoff between profit and fairness in rideshare platforms during high-demand hours.
\newblock \emph{Proceedings of the AAAI Conference on Artificial Intelligence}, 34\penalty0 (02):\penalty0 2210--2217, Apr. 2020.
\newblock \doi{10.1609/aaai.v34i02.5597}.
\newblock URL \url{https://ojs.aaai.org/index.php/AAAI/article/view/5597}.

\bibitem[Panconesi and Srinivasan(1997)]{panconesi1997randomized}
A.~Panconesi and A.~Srinivasan.
\newblock Randomized distributed edge coloring via an extension of the chernoff--hoeffding bounds.
\newblock \emph{SIAM Journal on Computing}, 26\penalty0 (2):\penalty0 350--368, 1997.

\bibitem[Pang et~al.(2016)Pang, Lan, Guo, Xu, Wan, and Cheng]{pang2016text}
L.~Pang, Y.~Lan, J.~Guo, J.~Xu, S.~Wan, and X.~Cheng.
\newblock Text matching as image recognition.
\newblock In \emph{Proceedings of the AAAI Conference on Artificial Intelligence}, volume~30, 2016.

\bibitem[Purohit et~al.(2019)Purohit, Gollapudi, and Raghavan]{purohit2019Hiring}
M.~Purohit, S.~Gollapudi, and M.~Raghavan.
\newblock Hiring under uncertainty.
\newblock In K.~Chaudhuri and R.~Salakhutdinov, editors, \emph{Proceedings of the 36th International Conference on Machine Learning}, volume~97 of \emph{Proceedings of Machine Learning Research}, pages 5181--5189. PMLR, 09--15 Jun 2019.
\newblock URL \url{https://proceedings.mlr.press/v97/purohit19a.html}.

\bibitem[Sankar et~al.(2021)Sankar, Louis, Nasre, and Nimbhorkar]{sankar2021matchings}
G.~S. Sankar, A.~Louis, M.~Nasre, and P.~Nimbhorkar.
\newblock Matchings with group fairness constraints: Online and offline algorithms.
\newblock \emph{arXiv preprint arXiv:2105.09522}, 2021.

\bibitem[Tong et~al.(2016)Tong, She, Ding, Wang, and Chen]{tong2016Online}
Y.~Tong, J.~She, B.~Ding, L.~Wang, and L.~Chen.
\newblock Online mobile micro-task allocation in spatial crowdsourcing.
\newblock In \emph{2016 IEEE 32nd International Conference on Data Engineering (ICDE)}, pages 49--60, 2016.
\newblock \doi{10.1109/ICDE.2016.7498228}.

\bibitem[Zaslavskiy et~al.(2009)Zaslavskiy, Bach, and Vert]{zaslavskiy2009global}
M.~Zaslavskiy, F.~Bach, and J.-P. Vert.
\newblock Global alignment of protein--protein interaction networks by graph matching methods.
\newblock \emph{Bioinformatics}, 25\penalty0 (12):\penalty0 i259--1267, 2009.

\end{thebibliography}
\appendix

\section{Example demonstrating that $|\Delta M_t| \geq 2 - \epsilon$ in the Worst Case}
\begin{observation}
For any $t \geq 1$ and $\epsilon > 0$, the one-step change in the martingale can be as large as $2 - \epsilon$, i.e., $\Delta M_t \geq 2 - \epsilon$.
\end{observation}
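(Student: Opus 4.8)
The plan is to exhibit, for the given index $t$, an instance of \pfm{}, an optimal LP solution $\bm x$, a color $c$, and a partial history $\cH_{t-1}$ for which revealing $\tF_{v_t}$ pushes the Doob martingale $M_t=\bE[\,|\cM_c|\mid\cH_t\,]$ up by almost $2$. The mechanism is to arrange that the revealed proposal $\tF_{v_t}=u$ simultaneously (i) \emph{is} a color-$c$ edge that gets added to $\cM$, contributing $+1$ to $|\cM_c|$, and (ii) \emph{certifies} that a (collection of) vertices of $U$ that are ``contested'' at time $t$ are left untouched by $v_t$ and therefore go on to collect color-$c$ edges in later rounds, contributing roughly another $+1$; whereas under the unconditioned history $\cH_{t-1}$ those same vertices are, with non-trivial probability, consumed by a \emph{non}-color-$c$ proposal of $v_t$ and thus wasted. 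Concretely I would take $v_t$ adjacent to a vertex $u$ through a color-$c$ edge of tiny value $x_{u,v_t}=\eta$, and to auxiliary vertices $w_1,\dots,w_m$ through \emph{non}-color-$c$ edges whose values sum to $1-\eta$, so that constraint \eqref{eq:left} at $v_t$ is tight and hence $\Pr[\tF_{v_t}=\bot]$ is driven down; each $w_j$ is then given later neighbours joined by color-$c$ edges that exhaust $w_j$'s remaining budget, and a small amount of ``past'' fractional mass on non-$E_c$ edges from vertices processed before $v_t$ so that the attenuation coefficient $a_{w_j,v_t}$ of \Cref{alg:ocrs} is close to $1$.

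The calculation would then proceed in three steps. First, condition on the positive-probability history $\cH_{t-1}$ in which $u$ and every $w_j$ is still safe at time $t$, and compute $\bE[M_n\mid\cH_{t-1},\tF_{v_t}=u]$: here $(u,v_t)$ is added so $\psi_c(u,v_t)=1$ contributes $+1$, and since each $w_j$ is left free, the expected number of color-$c$ edges it later receives is obtained by telescoping the products $\prod_{k>t}(1-x_{w_j,v_k}a_{w_j,v_k})$ with $a_{w_j,v_k}=\tfrac{1/2}{1-(1/2)\sum_{i<k}x_{w_j,v_i}}$, which I can drive toward a total of about $1$ in aggregate over $j$. Second, compute $M_{t-1}=\bE[M_n\mid\cH_{t-1}]$ by averaging over $\tF_{v_t}\in\{u,w_1,\dots,w_m,\bot\}$, using that the $\bot$ branch has vanishing probability by construction, the $u$ branch has probability $O(\eta)$, and each $w_j$ branch kills $w_j$'s future color-$c$ yield, so that $M_{t-1}$ reduces to essentially just the aggregate future term ($\approx 1$). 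Third, subtract: with $\eta=\eta(\epsilon)$ and $m=m(\epsilon)$ tuned appropriately the two contributions of roughly $1$ each survive and $M_t-M_{t-1}\ge 2-\epsilon$, which is exactly the claim (and shows the constant $2$ in \Cref{lem:one_step_worst_case} cannot be improved). The bookkeeping of conditional expectations of $Z(v_k)$ follows the same lines as \Cref{lemma:one_step_one_vertex_diff}, \Cref{lemma:safety_k_u}, and \Cref{lemma:safety_k_u_2}.

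The main obstacle is a genuine tension between the two gains. Forcing $\Pr[\tF_{v_t}=\bot]$ to $0$ requires the coefficients $a_{w_j,v_t}$ to be close to $1$, which by their definition demands a large amount of fractional mass on $w_j$'s edges \emph{before} time $t$; but that very budget is what the telescoping needs in order for $w_j$'s later color-$c$ contribution to be large, since the total fractional degree of each $w_j$ is capped at $1$ by \eqref{eq:left}. Resolving this — most naturally by spreading the ``proposal mass'' of $v_t$ over a large number $m$ of auxiliary vertices, each carrying only a thin slice of every quantity, so that no single $w_j$ needs both properties strongly — is the crux; the rest of the argument is routine telescoping of products of attenuation factors together with the conditional-expectation bookkeeping. (If the tension cannot be fully neutralised one would at least obtain $\Delta M_t$ bounded away from $0$ by an absolute constant, which already suffices to justify the use of Freedman's inequality over Azuma--Hoeffding in \Cref{sec:azuma_fails}.)
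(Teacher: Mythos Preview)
The tension you flag at the end is not a technicality; it is fatal to your construction, and the paper sidesteps it entirely by reversing the sign.  The paper's argument is a single four-vertex, three-edge example in which revealing $\tF_{v_t}=u$ produces a large \emph{decrease}: take $(u,v_t)\notin E_c$ with $x_{u,v_t}=\epsilon$, together with two color-$c$ edges $(w,v_t)$ and $(u,v_k)$ (for some $k>t$), each of value $1-\epsilon$.  On the event $\tF_{v_t}=u$, the color-$c$ edge at the current vertex is forfeited (since $v_t$ proposed to $u$ rather than $w$) \emph{and} $u$ is consumed, killing the future color-$c$ edge $(u,v_k)$.  The two losses add; there is no competing resource and no need for auxiliary vertices or for driving $\Pr[\tF_{v_t}=\bot]$ to $0$.

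Your construction instead tries to manufacture two simultaneous \emph{gains}, and they cannot both be made close to~$1$.  First, making $(u,v_t)$ itself color~$c$ means the ``$+1$'' from matching it is partly offset by whatever future color-$c$ mass on $u$ is now killed.  More seriously, the ``second $+1$'' from freeing the $w_j$'s is capped well below~$1$: in the baseline $M_{t-1}$, at most one $w_j$ is ever claimed by $v_t$ on any realization of $\tF_{v_t}$, so the averaged deficit you are comparing against is at most a single $w_j$'s future yield, weighted by $\Pr[\tF_{v_t}\in\{w_1,\dots,w_m\}]$.  Spreading over many $w_j$'s does not rescue this --- with $m$ auxiliaries each carrying $x_{w_j,v_t}=(1-\eta)/m$, no past mass, and future color-$c$ mass $1-(1-\eta)/m$, a direct computation gives $M_t-M_{t-1}\to 5/4$ as $m\to\infty$, $\eta\to 0$, not $2-\epsilon$.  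Your proposed fix of loading past mass onto the $w_j$'s to push $a_{w_j,v_t}\approx 1$ (which you describe as ``a small amount,'' but in fact requires $\sum_{i<t}x_{w_j,v_i}\approx 1$, i.e.\ nearly \emph{all} of $w_j$'s budget) only exacerbates the shortfall, exactly as you suspected.  Your fallback claim --- that $|\Delta M_t|$ is at least bounded away from~$0$ --- is true and is indeed enough to motivate Freedman over Azuma, but it does not prove the observation as stated.  The remedy is simply to let the reveal \emph{destroy} two near-certain color-$c$ edges rather than create them.
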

\begin{proof}
    We establish this by constructing an example where the one-step change in the martingale is precisely $2 - \epsilon$. Consider the \cref{eq:worst_case_delta_mt} where, at time $t = 1$, vertex $v_t$ proposes to vertex $u$. In this case, the one-step change in the matching from the blue color class, $\cM_{\text{blue}}$, given $\cH_{t-1}$, is $2 - \epsilon$.
 \end{proof}
\begin{example} \label{eq:worst_case_delta_mt}
    Consider an instance of proportional matching in an edge-colored graph $G = (U, V, E)$, where $E = E_{\text{red}} \cup E_{\text{blue}}$. Suppose the optimal fractional solution has $x_{u, v_k} = x_{w, v_t} = 1 - \epsilon$ and $x_{u, v_t} = \epsilon$. 
\end{example}
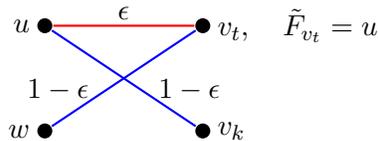
\begin{figure}[H]
  \centering 
  \begin{tikzpicture}[scale=0.7,thick,inner sep=2pt]
    \node[fill,circle,label=right:{ $v_t, \quad \tF_{v_t} = u$}] (vt) at (0, 0) {};
    \node[fill,circle,label=right:{ $v_k$}] (vk) at (0, -2) {};
    \node[fill,circle,label=left:{ $u$}] (u) at (-3, 0) {};
    \node[fill,circle,label=left:{ $w$}] (uu2) at (-3, -2) {};
    \draw[solid,color=red] (u) -- (vt) node[midway,above] {\color{black}$\epsilon$};
    \draw[solid,color=blue] (u) -- (vk) node[pos=0.8,above,xshift=3mm] {\color{black}$1-\epsilon$};
    \draw[solid,color=blue] (uu2) -- (vt) node[pos=0.2,above,xshift=-3mm] {\color{black} $1-\epsilon$};
  \end{tikzpicture}
  \caption{At $t=1$ when $v_t$ proposes to $u$ instead of $w$ and $\tF_{v_t}=u$, then the expected change in the matching restricted to blue edges, $\cM_{\text{blue}}$, is bounded by $ 2(1 - \epsilon)$. }
  \label{fig:example_mt_2}
\end{figure}

\section{Omitted proofs from \cref{sec:alg-analysis}} \label{appendix:martingale}

\begin{lemma}\label{lemma:wose_case_mt}
For any $t\geq 1$ when vertex $v_t$ is processed, the worse case one-step change given $\tF_{v_t} = u$ is as follows:
\[ \max_{u \in N(v_t)} |\bE[M_n \mid \cH_{t-1}, \tF_{v_t}= u] - \bE[M_n \mid \cH_{t-1} ]| \leq  2 \]
\end{lemma}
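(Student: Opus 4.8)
The plan is to reduce the whole statement to a single fact about how much one round can move the conditional expectation of the color-$c$ count of a greedy matching process. First I would condition on $\cH_{t-1}$. This fixes the set $S\subseteq U$ of $U$-vertices already matched after $v_1,\dots,v_{t-1}$ are processed, together with the number $m_0$ of color-$c$ edges among those matches; both are $\cH_{t-1}$-measurable, so $M_n=m_0+(\text{number of color-}c\text{ edges added while }v_t,\dots,v_n\text{ are processed})$. Processing $v_t,\dots,v_n$ is just a greedy matching process in which $v_k$ arrives carrying the realized value $\tF_{v_k}\in N(v_k)\cup\{\bot\}$ (this already bundles the proposal $F_{v_k}$ and the survival bit $A_{\cdot,v_k}$), and $(\tF_{v_k},v_k)$ is added precisely when $\tF_{v_k}\in U$ is currently unmatched. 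For $B\subseteq U$ let $g(B)$ denote the expected number of color-$c$ edges added while $v_{t+1},\dots,v_n$ are processed starting from initial forbidden set $B$, with $\tF_{v_{t+1}},\dots,\tF_{v_n}$ drawn fresh (they are independent of $\cH_{t-1}$). Unwinding round $t$: if $\tF_{v_t}=q$ with $q\in S\cup\{\bot\}$ nothing is added at round $t$ and $\bE[M_n\mid\cH_{t-1},\tF_{v_t}=q]=m_0+g(S)$, whereas if $q\notin S$ then $(q,v_t)$ is added and $\bE[M_n\mid\cH_{t-1},\tF_{v_t}=q]=m_0+\psi_c(q,v_t)+g(S\cup\{q\})$.

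The technical heart is a pointwise monotonicity/Lipschitz fact: for any $B$ and $q\notin B$, on every realization of $\tF_{v_{t+1}},\dots,\tF_{v_n}$ the color-$c$ count obtained from $B\cup\{q\}$ is at most that from $B$ and at least that quantity minus one, and hence $0\le g(B)-g(B\cup\{q\})\le 1$. I would prove this by coupling the two greedy runs with the same arrivals but forbidden sets $B$ and $B\cup\{q\}$. They make identical decisions, and their matched $U$-sets differ only in the membership of $q$, until the first round $k^*$ in which $\tF_{v_{k^*}}=q$; at round $k^*$ the $B$-run matches $(q,v_{k^*})$ (since $q$ is still free there) while the $B\cup\{q\}$-run does nothing; and immediately after $k^*$ both runs have exactly the same matched $U$-set (namely $B\cup\{q\}$ together with the common earlier matches), so they remain identical thereafter. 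Thus the two runs differ only in the single edge $(q,v_{k^*})$ when such a $k^*$ exists, contributing $\psi_c(q,v_{k^*})\in\{0,1\}$ to the $B$-run's color-$c$ count alone. This coupling is the step I expect to be the main obstacle: one has to verify carefully that the runs re-synchronize after $k^*$, and use that $\tF$ already incorporates the survival bit so that ``$q$ free and $\tF_{v_{k^*}}=q$'' really does force the edge in.

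Granting the monotonicity fact, for every $q\in N(v_t)\cup\{\bot\}$ we get $\bigl|\bE[M_n\mid\cH_{t-1},\tF_{v_t}=q]-(m_0+g(S))\bigr|\le 1$: the left side is $0$ when $q\in S\cup\{\bot\}$, and when $q\notin S$ it equals $|\psi_c(q,v_t)-(g(S)-g(S\cup\{q\}))|\le 1$ because both $\psi_c(q,v_t)$ and $g(S)-g(S\cup\{q\})$ lie in $[0,1]$. Finally, since $\tF_{v_t}$ is independent of $\cH_{t-1}$, the quantity $\bE[M_n\mid\cH_{t-1}]=\sum_{q}\Pr[\tF_{v_t}=q]\,\bE[M_n\mid\cH_{t-1},\tF_{v_t}=q]$ is a convex combination of the values $\bE[M_n\mid\cH_{t-1},\tF_{v_t}=q]$, so it too lies within $1$ of $m_0+g(S)$; the triangle inequality then gives $\bigl|\bE[M_n\mid\cH_{t-1},\tF_{v_t}=u]-\bE[M_n\mid\cH_{t-1}]\bigr|\le 2$ for every $u\in N(v_t)$, as claimed.

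As an alternative I could instead write the difference as $\sum_{k\ge t}\bigl(\bE[Z(v_k)\mid\cH_{t-1},\tF_{v_t}=u]-\bE[Z(v_k)\mid\cH_{t-1}]\bigr)$, bound the $k=t$ term by $1$ directly (only one edge is incident to $v_t$) and each $k>t$ term via the ``safety'' estimates of \cref{lemma:safety_k_u}; the coupling argument above is essentially the clean reason why the $k>t$ terms cannot total more than $1$, i.e.\ why blocking $u$ can cost at most one future color-$c$ edge.
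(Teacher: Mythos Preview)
Your coupling argument is correct and is genuinely different from the paper's route. The paper expands $\bE[M_n\mid\cH_{t-1},\tF_{v_t}=u]-\bE[M_n\mid\cH_{t-1}]$ as $\sum_{k\ge t}\sum_{w\in N(v_k)}\awvk(\bE[Z_{w,v_k}\mid\cdots]-\bE[Z_{w,v_k}\mid\cH_{t-1}])$, then does a four-way case split ($k=t$ vs.\ $k>t$; $w=u$ vs.\ $w\neq u$) and bounds each case explicitly in terms of the LP values $x_{w,v_t},x_{w,v_k}$, finally using the fractional-matching constraints $\sum_w x_{w,v_t}\le 1$ and $\sum_k x_{u,v_k}\le 1$ to cap the total at $2$.

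Your argument is cleaner: the synchronizing-coupling shows pathwise that blocking one extra $U$-vertex can cost at most one future color-$c$ edge, so every value $\bE[M_n\mid\cH_{t-1},\tF_{v_t}=q]$ lies within $1$ of the common anchor $m_0+g(S)$, and the triangle inequality finishes. Notably your proof never touches the LP values $x_e$ at all; it works for any greedy process driven by independent proposals. The price is that your proof does not yield the fine per-edge inequalities \eqref{eq:uvtu}--\eqref{eq:safety_k_u} that the paper extracts along the way and then reuses verbatim in \cref{lemma:one_step_one_vertex_diff}, \cref{lemma:guts}, and the variance bound for Freedman's inequality. So the paper's more laborious decomposition is doing double duty, whereas your argument gives only the worst-case bound---which is exactly what the lemma asks for. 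Your final ``alternative'' paragraph is in fact a sketch of the paper's approach.
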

\begin{proof}
Recall that $Z_{w,v_t}$ denotes the indicator random variable that the edge $(w,v_t)$ is selected in the matching $\cM.$ 
Given the definition of \(M_t\) in \cref{eq:def_mt} and conditioned on \(\tF_{v_t} = u\) for any \(u \in N(v_t)\), the one-step change is given by:
\begin{align}
    &\bE\Big[  \sum_{k \geq 1} Z (v_k)  \mid \cH_{t-1}, \tF_{v_t}=u \Big]- \bE\Big[  \sum_{k \geq 1}Z (v_k)  \mid \cH_{t-1} \Big]  \notag \\
    &= \sum_{k\geq t} \bE\big[   Z (v_k)  \mid \cH_{t-1}, \tF_{v_t}=u \big]- \bE\big[  Z (v_k)  \mid \cH_{t-1} \big] \notag \\
    &= \sum_{k\geq t} \sum_{w \in N(v_k)} \awvk  \Big( \bE\big[   Z_{w,v_k}  \mid \cH_{t-1}, \tF_{v_t}=u \big]- \bE\big[  Z_{w,v_k}  \mid \cH_{t-1} \big]  \Big) \label{eq:desire}
\end{align}

Notice that for any $(w,v_k)$ with $w \notin N(v_t)$ are not affected by the conditional on $\tF_{v_t}=u$ i.e., 
\[ \bE\big[   Z_{w,v_k}  \mid \cH_{t-1}, \tF_{v_t}=u \big]= \bE\big[  Z_{w,v_k}  \mid \cH_{t-1} \big].\]
Therefore, we need to bound the term \(\bE[ Z_{w,v_k} \mid \cH_{t-1}, \tF_{v_t}=u] - \bE[ Z_{w,v_k} \mid \cH_{t-1}]\) for any \(k \geq t\) and \(w \in N(v_t)\) in order to obtain the desired bound stated in the lemma. Additionally, these edges can be classified into one of the following cases. 
\paragraph{Case 1:} \( k = t \), \( w = u \), and \(\tF_{v_t} = u\) 
\begin{equation}
    \bE\big[ Z_{u,v_t} \mid \cH_{t-1}, \tF_{v_t}=u \big] - \bE\big[ Z_{u,v_t} \mid \cH_{t-1} \big] 
    \leq \Pr\big[ u \text{ is safe at } t \mid \cH_{t-1} \big](1 - \Pr[\tF_{v_t}=u \mid \cH_{t-1}] ) \leq 1 -\frac{ x_{u,v_t}}{2}. \label{eq:uvtu}
\end{equation}
The last inequality is from \cref{eqn:ocrs_guarantee}. 
\paragraph{Case 2:} \( k = t \), \( w \neq u \), and \(\tF_{v_t} = u\)
\begin{align}
    \bE\big[ Z_{w,v_t} \mid \cH_{t-1}, \tF_{v_t}=u \big] - \bE\big[ Z_{w,v_t} \mid \cH_{t-1} \big] &\geq 0 - x_{w,v_t} \Pr\big[ w \text{ is safe at } t \mid \cH_{t-1} \big] \geq -x_{w,v_t} \label{eq:wvtu}
\end{align}

\paragraph{Case 3:} \( k > t \), \( w = u \), and \(\tF_{v_t} = u\)
\begin{align}
    \bE\big[ Z_{u,v_k} \mid \cH_{t-1}, \tF_{v_t}=u \big] - \bE\big[ Z_{u,v_k} \mid \cH_{t-1} \big] &= 0 - x_{u,v_k} \Pr\big[ u \text{ is safe at } t \mid \cH_{t-1} \big] \geq - x_{u,v_k} \label{eq:wvku}
\end{align}

\paragraph{Case 4:} \( k > t \), \( w \neq u \), and \(\tF_{v_t} = u\)
\begin{align}
    &\bE\big[ Z_{w,v_k} \mid \cH_{t-1}, \tF_{v_t}=u \big] - \bE\big[ Z_{w,v_k} \mid \cH_{t-1} \big] \\
    &\leq x_{w,v_k} \left( \Pr\big[ w \text{ is safe at } k \mid \cH_{t-1}, \tF_{v_t}=u \big] - \Pr\big[ w \text{ is safe at } k \mid \cH_{t-1} \big] \right) \notag \\
    &\leq x_{w,v_k} x_{w,v_t} \label{eq:safety_k_u} 
\end{align}
Inequality~\eqref{eq:safety_k_u} follows from \cref{lemma:safety_k_u}. Hence, substituting the bounds from the four cases into \cref{eq:desire} we have:
\begin{align*}
    &\bE\left[ \sum_{k \geq 1} Z(v_k) \mid \cH_{t-1}, \tF_{v_t}=u \right] - \bE\left[ \sum_{k \geq 1} Z(v_k) \mid \cH_{t-1} \right] \\
    &\leq \auvt \big(1 - \frac{x_{u,v_t}}{2}\big) - \sum_{w \in N(v_t) \setminus \{u\}} \awvt x_{w,v_t} -   \\ & \quad \sum_{k > t} \auvk x_{u,v_k} + \sum_{k > t} \sum_{w \in N(v_t) \setminus \{u\}} \awvk x_{w,v_k} x_{w,v_t} \\
    &\leq \auvt \big(1 - \frac{x_{u,v_t}}{2}\big) -  \sum_{w \in N_{\tilde{u}}(v_t)} \awvt x_{w,v_t} - \sum_{k > t} \auvk x_{u,v_k} + \\
    &\quad \sum_{k > t} \sum_{w \in N_{\tilde{u}}(v_t) \setminus \{u\}} \awvk x_{w,v_k} x_{w,v_t} \\
    &\leq 1 -\frac{x_{u,v_t}}{2} - \sum_{w \in N_{\tilde{u}}(v_t)} \awvt x_{w,v_t} - \sum_{k > t} \auvk x_{u,v_k} + (1 - x_{u,v_t}) \\
    &\leq 2.
\end{align*}
The second-to-last inequality follows from the fact:
\begin{align*}
    \sum_{k > t} \sum_{w \in N(v_t) \setminus \{u\}} \awvk x_{w,v_k} x_{w,v_t} = \sum_{w \in N(v_t) \setminus \{u\}} x_{w,v_t} \sum_{k > t} \awvk x_{w,v_k} \leq \sum_{w \in N(v_t) \setminus \{u\}} x_{w,v_t} 
    \leq 1 - x_{u,v_t}.
\end{align*}
The final inequality is due to:
\begin{align*}
    \sum_{w \in N(v_t)} \awvt x_{w,v_t} + \frac{1}{2}x_{u,v_t} + \sum_{k > t} \auvk x_{u,v_k} \leq 2
\end{align*}
and this concludes the proof.
\end{proof}
\begin{lemma}\label{lemma:wose_case_mt_2}
For any $t\geq 1$ when vertex $v_t$ is processed, the worse case one-step change given $\tF_{v_t} = \bot$ is as follows:
\[ \bE[M_n \mid \cH_{t-1}, \tF_{v_t}= \bot] - \bE[M_n \mid \cH_{t-1} ] \leq  2 \]
\end{lemma}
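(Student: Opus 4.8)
The plan is to mirror the case analysis behind Lemma~\ref{lemma:wose_case_mt}, taking advantage of the fact that $\tF_{v_t}=\bot$ is even easier to condition on than $\tF_{v_t}=u$: on this event, no edge is matched to $v_t$ at all. Using \eqref{eq:def_mt} and $M_n=\sum_{k\ge 1}Z(v_k)$, revealing $\tF_{v_t}$ affects only the terms with $k\ge t$, so it suffices to bound $\sum_{k\ge t}\bigl(\bE[Z(v_k)\mid\cH_{t-1},\tF_{v_t}=\bot]-\bE[Z(v_k)\mid\cH_{t-1}]\bigr)$, which I would split into the $k=t$ term and the $k>t$ terms.

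For the $k=t$ term: on $\tF_{v_t}=\bot$ we have $Z(v_t)=0$ deterministically, so the term equals $-\bE[Z(v_t)\mid\cH_{t-1}]=-\sum_{w\in N(v_t)}\psi_c(w,v_t)\Pr[Z_{w,v_t}=1\mid\cH_{t-1}]$. Since $Z_{w,v_t}=1$ forces $\tF_{v_t}=w$, and $\Pr[\tF_{v_t}=w\mid\cH_{t-1}]=a_{w,v_t}x_{w,v_t}\le x_{w,v_t}$ because $a_{w,v_t}\le 1$ (as $\sum_{i<t}x_{w,v_i}\le\sum_{e\in\delta(w)}x_e\le 1$), this term lies in $[-\sum_{w\in N(v_t)}\psi_c(w,v_t)x_{w,v_t},\,0]$; and $\sum_{w\in N(v_t)}x_{w,v_t}\le 1$ by the matching constraint at $v_t$.

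For the $k>t$ terms: only $w\in N(v_t)\cap N(v_k)$ are affected, and for each such $w$ the same computation as in Case~4 of Lemma~\ref{lemma:wose_case_mt}, with Lemma~\ref{lemma:safety_k_u} replaced by Lemma~\ref{lemma:safety_k_u_2}, gives $\bE[Z_{w,v_k}\mid\cH_{t-1},\tF_{v_t}=\bot]-\bE[Z_{w,v_k}\mid\cH_{t-1}]\le x_{w,v_k}x_{w,v_t}$, and this quantity is nonnegative since dropping $v_t$'s claim on $w$ only makes $w$ safer. Summing over $k>t$ and reordering by $w$ yields $\sum_{w\in N(v_t)}x_{w,v_t}\sum_{k>t}\psi_c(w,v_k)x_{w,v_k}\le\sum_{w\in N(v_t)}x_{w,v_t}\le 1$, using $\sum_{e\in\delta(w)}x_e\le 1$ and then the constraint at $v_t$.

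Combining the two pieces, the one-step change lies in $[-1,1]$; I would actually record the slightly finer inequality
\[
\bigl|\bE[M_n\mid\cH_{t-1},\tF_{v_t}=\bot]-\bE[M_n\mid\cH_{t-1}]\bigr|\ \le\ \sum_{w\in N(v_t)}\psi_c(w,v_t)\,x_{w,v_t}\ +\ \sum_{k>t}\sum_{w\in N(v_t)}\psi_c(w,v_k)\,x_{w,v_k}\,x_{w,v_t}\ \le\ 2,
\]
since this pre-simplification form is exactly what Lemma~\ref{lemma:bt} reuses. I do not anticipate a real obstacle here: the only points needing care are tracking signs across the two cases so the absolute-value bound comes out clean, and the elementary fact $a_{w,v_t}\le 1$; everything else is a direct appeal to Lemma~\ref{lemma:safety_k_u_2} and the degree constraints of \ref{lp:fairmat}.
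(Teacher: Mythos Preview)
Your proposal is correct and follows essentially the same two-case decomposition as the paper's proof: the $k=t$ term is handled by noting $Z(v_t)=0$ on $\tF_{v_t}=\bot$ (giving a term in $[-\sum_w \awvt x_{w,v_t},0]$), and the $k>t$ terms are controlled via Lemma~\ref{lemma:safety_k_u_2} exactly as you describe, with the same final summation over $w\in N(v_t)$ using the degree constraints. Your pre-simplification inequality is precisely what the paper records as \eqref{eq:botmn} for reuse in Lemma~\ref{lemma:bt}, and your observation that the one-step change actually lies in $[-1,1]$ is a mild sharpening of the stated bound of $2$.
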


\begin{proof}
    Following the definitions of $M_t$ in \cref{eq:def_mt} we have:
\begin{align} 
 \bE[M_n \mid \cH_{t-1},\tF_{v_t} = \bot] - \bE[ M_n \mid \cH_{t-1}]  
    &=  \sum_{k \geq 1} \bE[ Z(v_k) \mid \cH_{t-1},\tF_{v_t} = \bot] - \bE[ Z(v_k) \mid \cH_{t-1}] \notag \\   
    &=  \sum_{k \geq t} \sum_{ u \in N(v_t)} \auvk \big( \bE[Z_{u,v_k} \mid \cH_{t-1},\tF_{v_t} = \bot ] - \bE[  Z_{u,v_k} \mid \cH_{t-1} ] \big) \label{eq:important}  
\end{align} 
Notice that for any $(u,v_k)$ with $u \notin N(v_t)$ are not affected by the conditional on $\tF_{v_t}=\bot$ i.e., 
$$ \bE\big[   Z_{u,v_k}  \mid \cH_{t-1}, \tF_{v_t}=\bot \big]= \bE\big[  Z_{u,v_k}  \mid \cH_{t-1} \big].$$
Therefore, we need to bound the term \(\bE[ Z_{u,v_k} \mid \cH_{t-1}, \tF_{v_t}=\bot] - \bE[ Z_{u,v_k} \mid \cH_{t-1}]\) for any \(k \geq t\) and \(u \in N(v_t)\) in order to obtain the desired bound stated in the lemma. Additionally, these edges can be classified into one of two  cases. 
\paragraph{Case 1:}  \( k = t \), \( u \in N(v_t)\) and \(\tF_{v_t} = \bot \)
\begin{align*}
    \bE\big[ Z_{u,v_t} \mid \cH_{t-1}, \tF_{v_t}=\bot \big] - \bE\big[ Z_{u,v_t} \mid \cH_{t-1} \big] &= 0 -  
    \Pr\big[ u \text{ is safe at } t \mid \cH_{t-1} \big] \Pr[\tF_{v_t}=u ]\\
    &\geq  - x_{u,v_t}
\end{align*}

\paragraph{Case 2:} \( k> t \), \( u \in N(v_t) \), and \(\tF_{v_t} = \bot \)
\begin{align}
    &\bE\big[ Z_{u,v_k} \mid \cH_{t-1}, \tF_{v_t}=\bot \big] - \bE\big[ Z_{u,v_k} \mid \cH_{t-1} \big] \notag \\
    &=  \Pr[ \tF_{v_k} = u ] \big( \Pr[ u \text{ is safe at }k \mid \cH_{t-1}, \tF_{v_t}=\bot ] - \Pr[ u \text{ is safe at }k \mid \cH_{t-1}]  \big) \notag \\
    &\leq x_{u,v_k} x_{u,v_t}. \label{eq:safety_k_u_2}
\end{align}
Inequality~\eqref{eq:safety_k_u_2} follows from \cref{lemma:safety_k_u_2}. Hence, substituting the bounds from the four cases into \cref{eq:important} we have:
\begin{align}
    \bE[M_n \mid \cH_{t-1},\tF_{v_t} = \bot] - \bE[ M_n \mid \cH_{t-1}]   
    &\leq \sum_{k \geq t} \sum_{ u \in N(v_t) 
\setminus \{u\}} \auvk  x_{u,v_k} x_{u,v_t}  + \sum_{u \in N(v_t)}\auvt x_{u,v_t} \notag \\
    &\leq \sum_{ u \in N(v_t) \setminus \{u\}} \sum_{k \geq t}  \auvk  x_{u,v_k} x_{u,v_t} +  \sum_{u \in N(v_t)}\auvt x_{u,v_t}  \label{eq:botmn}\\
    &\leq 1 +1 =2. \notag
\end{align}
The last inequality is due to \[\sum_{ u \in N(v_t) \setminus \{u\}} \sum_{k \geq t}  \auvk  x_{u,v_k} x_{u,v_t} =\sum_{ u \in N(v_t) \setminus \{u\}} x_{u,v_t} \sum_{k \geq t}  \auvk  x_{u,v_k}  \leq \sum_{ u \in N(v_t) \setminus \{u\}} x_{u,v_t} \leq 1.  \]
This completes the proof. \end{proof}

\begin{lemma} \label{lemma:gut}
Suppose that $\tF_{v_t} =u$ represent the event that $v_t$ proposed to $u$ and $A_{u,v_t}=1$ in step $t$, then
\begin{align*}
    & \sum_{u \in N(v_t)} x_{u,v_t} \left| \bE[M_n \mid \cH_{t-1},\tF_{v_t}=u] - \bE[M_n \mid \cH_{t-1}]  \right|  \\
    & \leq \sum_{\mathclap{u \in N(v_t)}} x_{u,v_t} \Big( \auvt +    \sum \limits_{\mathclap{w \in N(v_t)}} \big( a_{w,v_t} x_{w,v_t} + \sum_{k >t} (\auvk x_{u,v_k} + a_{w,v_k} x_{w,v_k}x_{w,v_t} ) \big) \Big)
\end{align*}
\end{lemma}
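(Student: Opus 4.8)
The plan is to reduce the claim to the per-step bounds of \Cref{lemma:one_step_one_vertex_diff}, added over the affected coordinates. First I would write $M_n = \sum_{k \ge 1} Z(v_k)$ and observe that for every $k < t$ the variable $Z(v_k)$ is $\cH_{t-1}$-measurable: whether $(u,v_k)$ is matched depends only on $\tF_{v_k}$ and on the safety of $u$ at time $k$, and the latter is a deterministic function of $\tF_{v_1},\dots,\tF_{v_{k-1}}$. Hence additionally conditioning on $\tF_{v_t}=u$ leaves $\bE[Z(v_k)\mid\cH_{t-1}]$ unchanged for $k<t$, and therefore
\[
\bE[M_n \mid \cH_{t-1},\tF_{v_t}=u] - \bE[M_n \mid \cH_{t-1}] \;=\; \sum_{k \ge t}\Big(\bE[Z(v_k)\mid\cH_{t-1},\tF_{v_t}=u] - \bE[Z(v_k)\mid\cH_{t-1}]\Big).
\]

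Next I would take absolute values, apply the triangle inequality over $k \ge t$, and invoke \Cref{lemma:one_step_one_vertex_diff}: its first bullet bounds the $k=t$ term by $\auvt + \sum_{w\in N(v_t)}\awvt x_{w,v_t}$, and its second bullet bounds each $k>t$ term by $\auvk x_{u,v_k} + \sum_{w\in\nut}\awvk x_{w,v_k}x_{w,v_t}$. Summing these gives
\[
\big|\bE[M_n\mid\cH_{t-1},\tF_{v_t}=u] - \bE[M_n\mid\cH_{t-1}]\big| \le \auvt + \sum_{w\in N(v_t)}\awvt x_{w,v_t} + \sum_{k>t}\Big(\auvk x_{u,v_k} + \sum_{w\in\nut}\awvk x_{w,v_k}x_{w,v_t}\Big).
\]
Multiplying through by $x_{u,v_t}\ge 0$ and summing over $u\in N(v_t)$ then yields an inequality of the desired shape. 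To match the right-hand side in the statement verbatim I would use nonnegativity of all summands to enlarge $\nut$ to $N(v_t)$ in the last double sum and to bound the stand-alone term $\sum_{k>t}\auvk x_{u,v_k}$ by $\sum_{w\in N(v_t)}\sum_{k>t}\auvk x_{u,v_k}$ (valid whenever $N(v_t)\neq\emptyset$, and trivial otherwise), after which the four contributions regroup under a single sum over $w\in N(v_t)$.

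There is no genuinely hard step here; the only points requiring care are the measurability remark that kills the $k<t$ contributions and the final bookkeeping, which over-counts the $\auvk x_{u,v_k}$ term but is harmless since the lemma asserts only an upper bound and every quantity involved is nonnegative.
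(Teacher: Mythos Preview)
Your proposal is correct and follows essentially the same route as the paper: both arguments write $M_n=\sum_{k\ge 1}Z(v_k)$, drop the $k<t$ terms, apply the triangle inequality over $k\ge t$, and invoke \Cref{lemma:one_step_one_vertex_diff} termwise before summing against $x_{u,v_t}$. Your explicit measurability remark for $k<t$ and the observation that enlarging $\nut$ to $N(v_t)$ (and over-counting the $\auvk x_{u,v_k}$ term inside the $w$-sum) is harmless by nonnegativity are exactly the bookkeeping steps needed, and in fact make the match with the stated right-hand side slightly cleaner than the paper's own write-up.
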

\begin{proof}
By applying \Cref{lemma:one_step_one_vertex_diff} we have the following, 
\begin{align}
    &\sum_{u \in N(v_t)}x_{u,v_t} \Big| \bE[M_n \mid \cH_{t-1},\tF_{v_t}=u] - \bE[M_n \mid \cH_{t-1}] \Big| \notag \\
     &=\sum_{u \in N(v_t)}x_{u,v_t} \Big| \bE \Big[ \sum_{k\geq t} Z(v_k)  \mid \cH_{t-1},\tF_{v_t}=u \Big] - \bE \Big[\sum_{k\geq t} Z(v_k) \mid \cH_{t-1}\Big] \Big| \notag  \\
    &\leq\sum_{u \in N(v_t)}x_{u,v_t} \sum_{k\geq t} \Big| \bE \big[  Z(v_k)  \mid \cH_{t-1},\tF_{v_t}=u \big] - \bE \big[Z(v_k) \mid \cH_{t-1}\big] \Big| \notag \\
   & \leq 
     \sum_{u \in N(v_t)} x_{u,v_t} \big( \auvt +  \sum \limits_{w \in N(v_t)} \awvt x_{w,v_t} \big) +  \notag \\ 
     & \quad \sum_{u \in N(v_t)} x_{u,v_t} \sum_{k>t}\big( \auvk x_{u,v_k}+  \sum \limits_{w \in \nut} \awvk x_{w,v_k} x_{w,v_t}   \big) \label{eq:lemma_use}
    \\
&= \sum_{u \in N(v_t)} x_{u,v_t} \big( \auvt +    \sum \limits_{w \in N(v_t)} \awvt x_{w,v_t} + \sum_{k>t}( \auvk x_{u,v_k} +  \sum \limits_{w \in N(v_t)}  \awvk x_{w,v_k}x_{w,v_t}) \big) \label{eq:drop_u}
\end{align}
\Cref{eq:lemma_use} follows directly from \Cref{lemma:one_step_one_vertex_diff}.  Thus, we obtain the desired bound.
\end{proof}

\section{Omitted proofs from \cref{sec:beta-limited}}
In order to attain the desired one-sided concentration inequality as stated in \Cref{lemma:beta_guarantee}, we will a variant of Freedman's inequality for supermartingales. This inequality is the same as \Cref{thm:freedman}, except that since it only controls the upper tail of the random variable, the probability term is reduced by a factor of $2$.
\begin{theorem}[\citep{bohman2019dynamicconcentrationtrianglefreeprocess}]
Suppose $(M_t)_{t\in [n]}$ is a supermartingale such that $|\Delta M_t| \leq \Lambda$ for any $1 \leq t \leq n$, and $ \sum_{t\in [n]} \bE[ \Delta M_t^2
\mid \cH_{t-1}] \leq \nu.$ Then, for all $\lambda > 0$, 
\begin{equation*}
    \Pr \big[ M_n  \ge M_0  + \lambda \big] \leq \exp \Bigg( \frac{-\lambda^2}{2(\nu+ \lambda \Lambda) } \Bigg)
\end{equation*}
\end{theorem}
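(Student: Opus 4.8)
The plan is to run the classical exponential-supermartingale (Bernstein/Bennett) argument. Fix a parameter $\theta > 0$ to be chosen at the end, let $\phi(\Lambda) := (e^{\theta\Lambda} - 1 - \theta\Lambda)/\Lambda^2$, and introduce the predictable quadratic variation $V_t := \sum_{s=1}^{t} \bE[\Delta M_s^2 \mid \cH_{s-1}]$, so that $V_n \le \nu$ almost surely by hypothesis. The object to study is $W_t := \exp\bigl(\theta (M_t - M_0) - \phi(\Lambda)\, V_t\bigr)$, and the first goal is to show that $(W_t)$ is a supermartingale with $W_0 = 1$.

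The key one-step estimate is the quadratic surrogate $e^{\theta x} \le 1 + \theta x + \phi(\Lambda)\, x^2$ for all $|x| \le \Lambda$; this holds because $x \mapsto (e^{\theta x} - 1 - \theta x)/x^2$ is nondecreasing on $\mathbb{R}$, hence on $[-\Lambda,\Lambda]$ it is bounded above by its value at $x = \Lambda$. Plugging in $x = \Delta M_t$ and taking $\bE[\,\cdot \mid \cH_{t-1}]$, the supermartingale hypothesis $\bE[\Delta M_t \mid \cH_{t-1}] \le 0$ together with $\theta > 0$ kills the linear term, giving $\bE[e^{\theta \Delta M_t} \mid \cH_{t-1}] \le 1 + \phi(\Lambda)\, \bE[\Delta M_t^2 \mid \cH_{t-1}] \le \exp\bigl(\phi(\Lambda)\, \bE[\Delta M_t^2 \mid \cH_{t-1}]\bigr)$. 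Since $W_t = W_{t-1}\exp\bigl(\theta \Delta M_t - \phi(\Lambda)\,\bE[\Delta M_t^2 \mid \cH_{t-1}]\bigr)$ and the subtracted term is $\cH_{t-1}$-measurable, this yields $\bE[W_t \mid \cH_{t-1}] \le W_{t-1}$, and therefore $\bE[W_n] \le W_0 = 1$.

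To finish, I would apply Markov's inequality: on $\{M_n \ge M_0 + \lambda\}$ we have $W_n \ge \exp(\theta\lambda - \phi(\Lambda)V_n) \ge \exp(\theta\lambda - \phi(\Lambda)\nu)$ using $V_n \le \nu$ and $\phi(\Lambda)\ge 0$, so $\Pr[M_n \ge M_0 + \lambda] \le e^{-\theta\lambda + \phi(\Lambda)\nu}\,\bE[W_n] \le e^{-\theta\lambda + \phi(\Lambda)\nu}$. It then remains to optimize the exponent. Using the elementary inequality $e^u - 1 - u \le \frac{u^2}{2(1 - u/3)}$ for $0 \le u < 3$ (which gives $\phi(\Lambda) \le \frac{\theta^2}{2(1 - \theta\Lambda/3)}$ whenever $\theta\Lambda < 3$) and choosing $\theta = \lambda/(\nu + \lambda\Lambda)$ — for which $\theta\Lambda < 1$ — a direct computation shows $-\theta\lambda + \phi(\Lambda)\nu \le -\lambda^2/\bigl(2(\nu + \lambda\Lambda)\bigr)$, which is exactly the claimed tail bound. (The two-sided Freedman statement \Cref{thm:freedman} follows by applying this to $(M_t)$ and to $(-M_t)$ and summing the two failure probabilities, hence its prefactor $2$.)

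The only mildly delicate points are verifying the monotonicity of $x \mapsto (e^{\theta x}-1-\theta x)/x^2$, which is what licenses the quadratic surrogate on the full interval $[-\Lambda,\Lambda]$ rather than only on the positive part, and checking that the particular choice $\theta = \lambda/(\nu+\lambda\Lambda)$ makes the arithmetic land on the constant $2$ in the denominator rather than a slightly worse constant; neither is a genuine obstacle, just careful bookkeeping.
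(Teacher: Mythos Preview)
Your argument is the standard exponential-supermartingale proof of Freedman's inequality and is correct: the monotonicity of $u\mapsto (e^{u}-1-u)/u^{2}$ justifies the quadratic surrogate on $[-\Lambda,\Lambda]$, the supermartingale hypothesis kills the linear term, and your choice $\theta=\lambda/(\nu+\lambda\Lambda)$ together with $e^{u}-1-u\le u^{2}/(2(1-u/3))$ does land on the constant $2$ after the arithmetic you outlined. There is nothing to compare against in the paper itself: the theorem is quoted from \cite{bohman2019dynamicconcentrationtrianglefreeprocess} without proof, so the paper treats it as a black box, and your write-up is precisely the classical derivation one would find in Freedman's original paper or in the cited reference.
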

\begin{lemma} \label{lemma:muh}
Suppose that $(M_t)_{t\geq 0}$ is the Doob martingale defined in \cref{sec:alg-analysis} and $M_0 \leq \mu_{H} $ i.e., we know an upper bound on the expected size of the matching from color class $c$, then
    \[\Pr[ M_n \geq (1+\delta)\mu_{H}] \leq \exp \Bigg( \frac{-\delta^2\mu_H}{28} \Bigg) \]
\end{lemma}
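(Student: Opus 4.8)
The plan is to apply the supermartingale version of Freedman's inequality (stated just above the lemma) to the Doob martingale $(M_t)_{t=0}^n$ defined in \cref{sec:alg-analysis}, exactly as was done for the two-sided bound in \cref{thm:concentration_mc}, but now starting from the \emph{hypothesized upper bound} $\mu_H$ on $M_0 = \bE[|\cM_c|]$ rather than from $M_0$ itself. First I would recall the two ingredients already established: by \cref{lem:one_step_worst_case} we have the worst-case one-step bound $|\Delta M_t| \le 2 =: \Lambda$, and by \cref{eq:var_t_sum} we have $\sum_{t\in[n]} \var(\Delta M_t \mid \cH_{t-1}) \le 12 M_0$. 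Since $M_0 \le \mu_H$ by hypothesis, the variance sum is also bounded by $\nu := 12\mu_H$. A martingale is in particular a supermartingale, so the cited inequality applies with these values of $\Lambda$ and $\nu$.

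Next I would invoke the inequality to get, for any $\lambda > 0$,
\[
\Pr[M_n \ge M_0 + \lambda] \le \exp\!\Big(\frac{-\lambda^2}{2(12\mu_H + 2\lambda)}\Big).
\]
Because $M_0 \le \mu_H$, the event $\{M_n \ge \mu_H + \lambda\}$ is contained in $\{M_n \ge M_0 + \lambda\}$, so the same bound controls $\Pr[M_n \ge \mu_H + \lambda]$. Now set $\lambda = \delta \mu_H$ so that $\mu_H + \lambda = (1+\delta)\mu_H$; substituting gives
\[
\Pr[M_n \ge (1+\delta)\mu_H] \le \exp\!\Big(\frac{-\delta^2 \mu_H^2}{2(12\mu_H + 2\delta\mu_H)}\Big) = \exp\!\Big(\frac{-\delta^2 \mu_H}{24 + 4\delta}\Big) \le \exp\!\Big(\frac{-\delta^2 \mu_H}{28}\Big),
\]
where the last step uses $\delta \le 1$ (the regime of interest, since the fairness violation parameter satisfies $\delta < 1$, and for $\delta \ge 1$ one can fall back on the trivial bound or note $24+4\delta \le 28\delta^{?}$ — I would just restrict to $\delta \le 1$ to match the statement). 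This is exactly the claimed bound.

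The only subtlety — and the step I would be most careful about — is the handling of the direction of the inequality and the fact that we only need the \emph{upper} tail: this is precisely why the supermartingale form of Freedman's inequality is the right tool, as it drops the factor of $2$ in front of the exponential compared to \cref{thm:freedman}, which is what lets the $f(\epsilon,G)$ bound in \cref{thm:one_sided_fairness} improve on $f_c(\delta,G)$. I would also double-check that replacing $M_0$ by its upper bound $\mu_H$ is legitimate both in the variance bound (monotonicity: $12 M_0 \le 12\mu_H$) and in the event containment (since $M_0 \le \mu_H$ shifts the threshold in the favorable direction). Everything else is the routine arithmetic simplification of the Freedman exponent already carried out in \cref{thm:concentration_mc}.
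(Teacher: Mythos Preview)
Your proposal is correct and follows essentially the same route as the paper. The paper introduces the shifted process $S_t := M_t - \mu_H$ and applies the supermartingale Freedman inequality to it, whereas you apply the inequality directly to $(M_t)$ and then use the event containment $\{M_n \ge \mu_H + \lambda\} \subseteq \{M_n \ge M_0 + \lambda\}$; these are equivalent reformulations, and both feed in the same ingredients $\Lambda = 2$ and $\nu = 12M_0 \le 12\mu_H$ before the identical arithmetic simplification (your explicit note that the last step uses $\delta \le 1$ is a point the paper leaves implicit).
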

\begin{proof}
        Consider the following sequence of random variables $(S_t)_{t\geq 0}$ where $S_t:= M_t - \frac{\tbeta}{2} \sum_{e\in E}{x_e}$. Clearly, $(S_t)_{t\geq 0}$ is a supermartingale since $(M_t)_{t\ge 0}$ is a martingale and $M_0  \leq \frac{ \tbeta}{2}\sum_{e \in E} x_e$. The latter is true because, 
\[ M_0 = \bE[|\cM_c|]=  \sum_{e\in E_c}x_e/2  \leq \tbeta \sum_{e\in E}x_e/2 = \mu_H \]
This implies the following upper tail bound on $S_n$:
\begin{align*}
  \Pr[S_n  \geq  \lambda ]  \leq 
 \Pr[S_n  \geq S_0+ \lambda ] &\leq \exp \Bigg( -\frac{\lambda^2}{ 2 \big(\sum_{t \in [n]}\bE(|\Delta S_t|^2\mid \cH_{t-1}) + \lambda  \big)} \Bigg) \\
\Pr[ M_n \geq \lambda + \tbeta \sum_{e\in E}x_e/2  ] &\leq \exp \Bigg( -\frac{\lambda^2}{2 \big( \sum_{t \in [n]}\bE(2|\Delta M_t|\mid \cH_{t-1}) + \lambda\big) } \Bigg).
\end{align*}
The last inequality is due to the fact that $|\Delta S_t| = |\Delta M_t|.$
Therefore, we have that
\begin{align*}
     \Pr[M_n \geq \tbeta \sum_{e\in E}x_e/2  + \lambda] \leq \Bigg( -\frac{\lambda^2}{2 \big(12M_0 + 2\lambda\big) } \Bigg).
\end{align*}
By substituting $\lambda  = 
\delta \frac{\tbeta}{2}\sum_{e\in E}x_e  \geq \delta M_0 $,
 \begin{align*}
     \Pr[M_n \geq  (1+\delta) \mu_H ] &\leq \exp \Bigg( -\frac{\lambda^2 \delta }{2 \big(10\lambda + 2\lambda \delta \big) } \Bigg) \\
     & = \exp \Big( -\frac{\lambda \delta  }{28   } \Big)  \\
      &=\exp \Big( -\frac{ \delta^2 \mu_H }{  28} \Big)
 \end{align*}
 This concludes the proof.
\end{proof}

\section{Construction of an instance of proportional fair matching where Azuma-Hoeffding inequality fails to achieve concentration on $|\cM_c|$} 

\begin{theorem}[Azuma–Hoeffding inequality]
   Suppose that $(M_t)_{t\geq 0}$ is a martingale and $|M_t -M_{t-1}| \leq c_t$ for any $1\leq t \leq n$. Then for any $\delta>0$ 
   \begin{equation*}
        \Pr[ |M_{n}-M_{0}|\geq \delta ]\leq 2 \exp \left(\frac{-\delta ^{2}}{2\sum _{t=1}^{n}c_{t}^{2}}\right).
   \end{equation*}
\end{theorem}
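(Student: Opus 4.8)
The plan is to prove this classical inequality by the exponential moment (Chernoff) method applied increment-by-increment to the martingale, with Hoeffding's lemma as the key estimate. First I would reduce to the one-sided tail and assume without loss of generality that $M_0 = 0$: replacing $M_t$ by $M_t - M_0$ leaves both the martingale property and the bounds $|M_t - M_{t-1}| \le c_t$ intact. Fix a parameter $s > 0$ to be optimized at the end. Since $e^{sM_n} \ge 0$, Markov's inequality gives
\[
\Pr[M_n \ge \delta] = \Pr\big[e^{sM_n} \ge e^{s\delta}\big] \le e^{-s\delta}\,\mathbb{E}\big[e^{sM_n}\big].
\]

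The heart of the argument is to bound $\mathbb{E}[e^{sM_n}]$ by peeling off one increment at a time. Writing $\Delta M_t := M_t - M_{t-1}$ and conditioning on $\mathcal{H}_{n-1}$,
\[
\mathbb{E}\big[e^{sM_n}\big] = \mathbb{E}\Big[ e^{sM_{n-1}}\, \mathbb{E}\big[ e^{s\Delta M_n} \mid \mathcal{H}_{n-1} \big] \Big].
\]
By the martingale property $\mathbb{E}[\Delta M_n \mid \mathcal{H}_{n-1}] = 0$, and by hypothesis $|\Delta M_n| \le c_n$. I would then invoke Hoeffding's lemma: if $X$ is a random variable with $\mathbb{E}[X] = 0$ and $|X| \le c$, then $\mathbb{E}[e^{sX}] \le e^{s^2 c^2/2}$. (This is proved by bounding $e^{sx}$ above by the chord through $(-c, e^{-sc})$ and $(c, e^{sc})$ using convexity, taking expectations, and then estimating the resulting function of $s$ via a second-order Taylor expansion of its logarithm.) Applying this conditionally yields $\mathbb{E}[e^{s\Delta M_n} \mid \mathcal{H}_{n-1}] \le e^{s^2 c_n^2/2}$, hence $\mathbb{E}[e^{sM_n}] \le e^{s^2 c_n^2/2}\,\mathbb{E}[e^{sM_{n-1}}]$; iterating down to $t = 0$ gives $\mathbb{E}[e^{sM_n}] \le \exp\big( \tfrac{s^2}{2} \sum_{t=1}^n c_t^2 \big)$.

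Combining the two displays, $\Pr[M_n \ge \delta] \le \exp\big( -s\delta + \tfrac{s^2}{2}\sum_t c_t^2 \big)$, and choosing $s = \delta / \sum_t c_t^2$ (the minimizer of the exponent) gives $\Pr[M_n \ge \delta] \le \exp\big(-\delta^2 / (2\sum_t c_t^2)\big)$. Since $(-M_t)_{t \ge 0}$ is also a martingale with the same increment bounds, the identical argument bounds $\Pr[-M_n \ge \delta]$, and a union bound over the two tails produces the stated two-sided inequality with the factor $2$. The only substantive ingredient is Hoeffding's lemma; the rest is routine manipulation of conditional expectations and an elementary one-variable optimization, so that lemma is where I would expect to spend the most care.
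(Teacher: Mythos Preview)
Your proof is correct and is exactly the standard Chernoff--Hoeffding argument for Azuma's inequality. Note, however, that the paper does not actually prove this theorem: it is stated in the appendix as a classical result (used only to explain why the authors need Freedman's inequality instead), with no proof given. So there is nothing to compare against; your write-up would serve perfectly well as a self-contained proof if one were required.
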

\label{ex:azuma_fails}
\begin{example} 
Consider an edge-colored star graph \( G \) with 2 colors and \( n+2 \) vertices. Let \( u \) be the central vertex, and let the remaining vertices be \( v_1, v_2, \dots, v_{n+1} \). The edges partitions are as follows: \( E_{\text{blue}} = \{(u, v_t) \mid t = 1, \dots, n\} \) and \( E_{\text{red}} = \{(u, v_{n+1})\} \). An optimal solution to the linear program (LP) \ref{lp:fairmat} is given by \( x_{u, v_t} = \frac{\epsilon}{n} \) for \( 1 \leq t \leq n \) and \( x_{u, v_{n+1}} = 1 - \epsilon \).\end{example}

Let us fix the color class \emph{red} and $\cM_{\text{red}}$ denote $\cM \cap E_{\text{red}}$ where $\cM$ is the matching returned by our algorithm. Let the martingale $(M_t)_{t\geq 0}$ correspond to the matching restricted to color class $c.$ Then we have the following observation.
\begin{observation}
There exists an instance of proportional fair matching where the worst-case sum \(\sum_{t \geq 1} c_t\) can grow as large as \(\Theta(n)\), where \(n\) represents the number of vertices and \(c_t\) are positive constants such that \(|M_t - M_{t-1}| \leq c_t\).
\end{observation}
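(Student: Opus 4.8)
The plan is to make the star graph of the preceding Example explicit and show that even the \emph{tightest} valid Lipschitz constants $c_t$ for the relevant martingale still sum to $\Theta(n)$. Fix the \emph{red} color class and let $(M_t)_{t=0}^{n+1}$ be the associated Doob martingale, so $M_t=\bE[\,|\cM_{\text{red}}|\mid\cH_t\,]$ and $M_{n+1}=|\cM_{\text{red}}|$ is the indicator of the event $(u,v_{n+1})\in\cM$. Since $G$ is a star, $\cM$ contains at most one edge, hence $|\cM_{\text{red}}|\in\{0,1\}$, $M_t\in[0,1]$, and $|\Delta M_t|\le 1$ for every $t$; this already gives $\sum_t c_t\le n+1$ for any valid choice of Lipschitz constants. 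The content of the Observation is therefore the matching lower bound $\sum_t c_t=\Omega(n)$, which I would get by proving that for each $1\le t\le n$ the largest value $|\Delta M_t|$ takes with positive probability is at least an absolute constant $c_0=c_0(\epsilon)>0$, forcing $c_t\ge c_0$.

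First I would describe the martingale explicitly. In a star, $u$ is unmatched at step $j$ iff $\tF_{v_r}\neq u$ for all $r<j$, so on the $\cH_{t-1}$-atom where $u$ is still unmatched after $v_1,\dots,v_{t-1}$ (an event of positive probability) one has $M_{t-1}=p_{t-1}:=\Pr[\tF_{v_j}=\bot\ \forall\, t\le j\le n]\cdot\Pr[(u,v_{n+1})\in\cM\mid u\text{ safe at }n+1]$, while on every other atom $M_{t-1}=0$. The second factor equals exactly $x_{u,v_{n+1}}\,a_{u,v_{n+1}}=(1-\epsilon)\cdot\frac{1/2}{1-\epsilon/2}\ge\frac{1-\epsilon}{2}$. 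For the first factor, $\Pr[\tF_{v_j}=u]\le\Pr[F_{v_j}=u]=x_{u,v_j}=\epsilon/n$, so $\Pr[\tF_{v_j}=\bot]\ge 1-\epsilon/n$, and by Bernoulli's inequality $\prod_{j=t}^{n}\Pr[\tF_{v_j}=\bot]\ge(1-\epsilon/n)^{n}\ge 1-\epsilon$, uniformly in $t$. Hence $p_{t-1}\ge c_0:=\frac{(1-\epsilon)^2}{2}$. Now, still on this atom and with positive probability, $\tF_{v_t}=u$; then $u$ becomes matched to $v_t$, the edge $(u,v_{n+1})$ can never be selected afterward, and so $M_t=0$. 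Thus $|\Delta M_t|=p_{t-1}\ge c_0$ occurs with positive probability, and any almost-sure bound must obey $c_t\ge c_0$.

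Combining the two directions gives $n\,c_0\le\sum_{t=1}^{n+1}c_t\le n+1$, i.e.\ $\sum_t c_t=\Theta(n)$ (and hence $\sum_t c_t^2\ge n c_0^2=\Theta(n)$ as well), which is exactly the Observation; fed into the Azuma--Hoeffding inequality this yields only $\Pr[|M_{n+1}-M_0|\ge\delta]\le 2\exp(-\delta^2/\Theta(n))$, which is non-trivial only for $\delta=\Omega(\sqrt n)$ and hence vacuous since $|M_{n+1}-M_0|\le 1$. The one step that needs care is checking that $c_0$ is genuinely \emph{independent of $t$ and $n$}: this rests entirely on $\sum_{i\le n}x_{u,v_i}=\epsilon$ being bounded away from $1$, which simultaneously keeps $a_{u,v_{n+1}}\ge\frac12$ and keeps the survival product $\prod_{j=t}^n(1-\epsilon/n)$ from decaying with $n$; the remaining steps are routine substitutions into the algorithm's definitions.
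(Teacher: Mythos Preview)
Your argument is correct and follows the same star-graph construction as the paper, but it is considerably more careful: whereas the paper simply asserts that $|\Delta M_t|=1-\epsilon$ and takes $c_t=1-\epsilon$, you actually compute a uniform lower bound $c_0=(1-\epsilon)^2/2$ on the worst-case one-step change and argue that \emph{any} valid Lipschitz constants must satisfy $c_t\ge c_0$, which is the stronger (and more meaningful) statement needed to conclude that Azuma--Hoeffding is genuinely useless here rather than merely loosely applied.
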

\begin{proof}
For each $1\leq t \leq n$, when $v_t$ is processed we know that the one-step change in the $\cM_c$ is given by $|\Delta M_t| = 1-\epsilon.$ Therefore we have $\sum_{t\in [n]}c_t^2 \leq \sum_{t\in [n]}c_t = n(1-\epsilon)$.   
\end{proof}
Therefore, we can conclude that Azuma's inequality may fail to provide concentration for \( |\cM_c| \) since \( \mathbb{E}[|\cM_c|] \leq n_c \), where \( n_c \) represents the total number of vertices with at least one incident edge from color class \( c \).

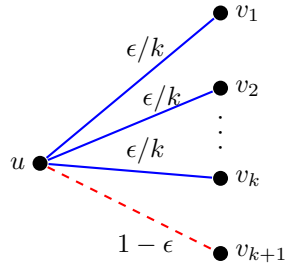
\begin{figure}[H]
  \centering \small
  \begin{tikzpicture}[scale=0.8,thick,inner sep=2pt]
    \node[fill,circle,label=right:{$v_1$}] (v1) at (0, 2.5) {};
    \node[fill,circle,label=right:{$v_{2}$}] (v2) at (0, 1.25) {};
    \node () at (0, 0.75) {$\cdot$};
    \node () at (0, 0.5) {$\cdot$};
    \node () at (0, 0.25) {$\cdot$};
    \node[fill,circle,label=right:{ $v_{k}$}] (vt) at (0, -.25) {};
    \node[fill,circle,label=right:{ $v_{k+1}$}] (vk) at (0, -1.5) {};
    \node[fill,circle,label=left:{ $ u$}] (u) at (-3,0) {};
    \draw[solid,color=blue] (u) -- (v1) node[pos=0.5,above,yshift=0.25cm,xshift=0.2cm,black] {$\epsilon/k$};
    \draw[solid,color=blue] (u) -- (v2) node[pos=0.6,above,yshift=0.01cm,xshift=0.2cm,black] {$\epsilon/k$};
    \draw[solid,color=blue] (u) -- (vt) node[pos=0.5,above,yshift=0.01cm,xshift=0.2cm,black] {$\epsilon/k$};
    \draw[dashed,color=red] (u) -- (vk) node[pos=0.5,below,yshift=-0.27cm,xshift=0.2cm,black] {$1-\epsilon$};

  \end{tikzpicture}
  \caption{In the edge-colored graph \( G \), we have \(|\Delta M_t| = 1-\epsilon \) at each time \( t \). Therefore, the sum \(\sum_{t \in [n]} |\Delta M_t|\) is \( \Theta(n) \).}\label{fig:worst_case_c_t}
\end{figure}

\end{document}